\newif\ifoup \ouptrue
\newcommand{\mathdom}{\mathrm{dom}}
\newcommand{\mathnat}{\mathbb{N}}
\newcommand{\mathnatpos}{\mathbb{N}_{>0}}
\newcommand{\mathnatKC}{\mathbb{N}^{*}}
\newcommand{\BNFeq}{\mathrel{::=}}
\newcommand{\LKIDOm}{\ensuremath{\mathtt{LKID}^{\omega}}}
\newcommand{\CLKID}{\ensuremath{\mathtt{CLKID}^{\omega}}}
\newcommand{\CLKIDa}{\ensuremath{\mathtt{CLKID}^{\omega}_a}}
\newcommand{\mathrelbar}{\mathrel{|}}
\newcommand{\mathsetintension}[2]{%
\mleft\{ #1 \ \mleft| \  #2 \mright. \mright\}%
}
\newcommand{\mathsetextension}[1]{%
\mleft\{ #1 \mright\}%
}
\newcommand{\mathvect}[1]{\mathbf{#1}}
\newcommand{\mathof}[2]{\mathop{#1}\mleft(#2\mright)}
\newcommand{\mathdefparfunc}[3]{#1\colon #2\rightharpoonup #3}
\newcommand{\mathFV}[1]{\mathop{\mathrm{FV}}\mleft( #1 \mright)}
\newcommand{\mathSeq}[2]{\mleft( \mathof{#1}{#2} \mright)_{0}}
\newcommand{\mathRule}[2]{\mleft( \mathof{#1}{#2} \mright)_{1}}
\newcommand{\mathRuleSet}{\mathrm{Rule}}
\newcommand{\mathSeqSet}{\mathrm{Seq}}
\newcommand{\mathBudset}[1]{\mathop{\mathrm{Bud}}\mleft({#1}\mright)}
\newcommand{\mathcompanion}[1]{\mathop{\mathcal{#1}}}
\newcommand{\mathtreeunfolding}[1]{\mathop{\mathrm{T}}\mleft(#1\mright)}
\newcommand{\mathdertree}[1]{\mathcal{#1}}
\newcommand{\mathprooffig}[1]{\mathcal{#1}}
\newcommand{\mathindAddonesy}{\mathrm{Add_1}}
\newcommand{\mathindAddtwosy}{\mathrm{Add_2}}
\newcommand{\mathindAddone}[3]{\mathop{\mathindAddonesy}\mleft(#1, #2, #3\mright)}
\newcommand{\mathindAddtwo}[3]{\mathop{\mathindAddtwosy}\mleft(#1, #2, #3\mright)}
\newcommand{\mathvars}[1]{\mathop{\mathrm{VC}}\mleft(#1\mright)}
\newcommand{\mathsetA}[1]{\mathop{\mathrm{A}}\mleft(#1\mright)}
\newcommand{\mathsetBC}[1]{\mathop{\mathrm{BC}}\mleft(#1\mright)}
\newcommand{\mathsetBone}[1]{\mathop{\mathrm{B_{1}}}\mleft(#1\mright)}
\newcommand{\mathsetC}[1]{\mathop{\mathrm{C}}\mleft(#1\mright)}
\newcommand{\mathequivrel}[1]{\mathrel{\cong_{#1}}}
\newcommand{\mathdeprel}[1]{\mathrel{\sim_{#1}}}
\newcommand{\rulenameLa}{\rulename{\ensuremath{=} \ensuremath{{\text{L}}_{\text{a}}}}}
\newcommand{\mathdertreecf}{\mathdertree{D}_{\mathrm{cf}}}
\newcommand{\mathcompanioncf}{\mathcompanion{C}_{\mathrm{cf}}}
\newcommand{\mathproofcf}{\mleft(\mathdertreecf, \mathcompanioncf\mright)}
\newcounter{outlinelemmacount}
\newcommand{\outlinelemma}{\refstepcounter{outlinelemmacount}(\theoutlinelemmacount)}
\renewcommand\p@enumii{}
\begin{document}

\ifoup \renewenvironment*{proof}[1][\proofname]{%
\par \medskip
\noindent \textbf{#1. } 
}%
{\qed \par \medskip}%

 \else
\sloppy 
\hbadness=10000
\fi

\ifoup 
\title[Counterexample to cut-elimination in cyclic proof system]{Counterexample to cut-elimination in cyclic proof system for first-order logic with inductive definitions}

\author[1,$\ast$]{Yukihiro Masuoka}
\author[1]{Makoto Tatsuta}

\address[1]{\orgdiv{Department of Informatics}, \orgname{The Graduate University for Advanced Studies}, \orgaddress{\street{Chiyoda-ku}, \postcode{101-8430}, \state{Tokyo}, \country{Japan}}}

\corresp[$\ast$]{Corresponding author. \href{email:yukihiro\_m@nii.ac.jp}{yukihiro\_m@nii.ac.jp}}

\journaltitle{Journal of Logic and Computation}
\DOI{ }
\copyrightyear{2020}
\pubyear{ }
\access{ }
\appnotes{ }

\firstpage{1}

\abstract{
 A cyclic proof system is a proof system
 whose proof figure is a tree with cycles.
 The cut-elimination in a proof system is fundamental.
 It is conjectured that
 the cut-elimination in the cyclic proof system for first-order logic with inductive definitions
 does not hold.
 This paper shows that the conjecture is correct by giving a sequent not provable without the cut rule
 but provable in the cyclic proof system.
 }

 \keywords{Cyclic proof, cut-elimination, inductive definition}

\maketitle
 \else 
\title{The failure of cut-elimination in cyclic proof for first-order logic with inductive definitions}

\author{Yukihiro Oda\thanks{Department of Informatics, The Graduate University for Advanced Studies, yukihiro\_m [at] nii.ac.jp} \and James Brotherston\thanks{Department of Computer Science, University College London} \and Makoto Tatsuta\thanks{National Institute of Informatics / Sokendai}}

\date{\today ver.}

\maketitle

\begin{abstract}

\end{abstract}
 \fi
 \section{Introduction}
 A cyclic proof system, or a circular proof system, 
 is a proof system whose proof figure is a tree with cycles \cite{BrotherstonPhD}. 
 Such proof systems have been used to formalize several logics and theories,
 such as modal $\mu$-calculus 
 \cite{Stirling1991, Stirling2014, Afshari2017}, 
 linear time $\mu$-calculus 
 \cite{Brunnler2008, Doumane2016,Kokkinis2016},
 linear logic with fixed points 
 \cite{Doumane2017, Nollet2018},
 G\"{o}del-L\"{o}b provability logic 
 \cite{Shamkanov2014},
 first-order $\mu$-calculus \cite{Sprenger2003},
 first-order logic with inductive definitions
 \cite{Brotherston2005, Brotherston2011, Berardi2017c},
 arithmetic \cite{Simpson2017, Berardi2017e},
 bunched logic \cite{Brotherston2007},
 separation logic 
 \cite{Brotherston2008, Kimura2020, Saotome2020, Tatsuta2019},
 and Kleene algebra \cite{Das2017}.
 Cyclic proofs are also useful
 for software verification, including 
 verifying properties of concurrent processes 
 \cite{Schopp2002},
 termination of pointer programs 
 \cite{Brotherston2008}, and
 decision procedures for symbolic heaps 
 \cite{Brotherston2012, Chu2015, Ta2017, Tatsuta2019}.
 
 The cut-elimination property is
 a fundamental property of a proof system.
 For example, the cut-elimination theorem 
 for first-order logic immediately implies 
 consistency, 
 the subformula property and 
 Craig's interpolation theorem 
 (see \cite{Buss1998i}).
 
 Despite its importance, 
 it was an open problem
 whether the cut-elimination property
 in the cyclic proof system for
 first-order logic with inductive definitions
 holds. 
 In Conjecture 5.2.4.\ of \cite{BrotherstonPhD}, 
 Brotherston has conjectured that 
 the cut-elimination property
 in the system does not hold. 
 
 This paper presents a counterexample to
 cut-elimination in the cyclic proof system for
 first-order logic with inductive definitions.
 In other words, we show that
 the conjecture is correct.
 Our counterexample is a sequent that says
 an addition predicate implies another addition predicate
 with a different definition.
 In order to show it is not cut-free provable,
 under the assumption that it is cut-free provable,
 we construct an infinite sequence of sequents in a finite cyclic proof figure,
 which leads to a contradiction.
 For this purpose, we use \emph{cycle-normalization} \cite{BrotherstonPhD}, and
 we also give a simpler proof of it.

 There exist cut-free and complete cyclic proof systems for some logics and theories, 
 including modal $\mu$-calculus \cite{Stirling2014, Afshari2017}, 
 linear time $\mu$-calculus  \cite{Brunnler2008},
 G\"{o}del-L\"{o}b provability logic \cite{Shamkanov2014}
 and Kleene algebra \cite{Das2017}.
 
 Kimura et al.\ \cite{Kimura2020} give
 a counterexample to cut-elimination
 in cyclic proofs for separation logic.
 They also suggest their proof technique 
 cannot be applied to show a counterexample to cut-elimination
 in a cyclic proof system with contraction and 
 weakening on antecedents \cite{Kimura2020}.
 Their proof technique is to give a path in a cut-free proof 
 that contradicts a soundness condition.
 In the counterexample we give in this article, 
 constructing such a path seems complicated
 because the system we consider is a system with contraction and weakening of antecedents.
 
 Section \ref{sec:syntax} describes the language for first-order logic with inductive definitions.
 Section \ref{sec:CLKID} introduces Brotherston's cyclic proof system \CLKID. 
 Section \ref{sec:main} gives a sequent that is not cut-free provable in \CLKID\ but provable in \CLKID.
 
 \section{Language}
 \label{sec:syntax}
 In this section, we give  the syntax of a language for first-order logic with inductive definitions. 
 The language is the same as that given in \cite{Brotherston2011}.

\emph{Terms} are defined by 
 \[
 t \BNFeq x \mathrelbar f\overbrace{t\cdots t}^{\text{$n$}},
 \]
 where $x$ is a variable symbol and $f$ is an $n$-ary function symbol.
 We write $\mathvect{x}$ for a sequence of variables and
 $\mathvect{u}\mleft(\mathvect{x}\mright)$ for a sequence of terms 
 in which the variables $\mathvect{x}$ occur.

 \emph{Predicate symbols} consist of \emph{ordinary predicate symbols}, 
 denoted by $Q_{1}, Q_{2}, \ldots$,
 and \emph{inductive predicate symbols}, denoted by $P_{1}, \dots, P_{n}$. 
 Inductive predicate symbols are given with an \emph{inductive definition set}, which we define later.
 We assume that inductive predicate symbols are finite.

 An \emph{atomic formula} is defined as ${t_{1} = t_{2}}$ or  $R(t_{1}, \ldots, t_{n})$
 where $t_{1}, t_{2}, \dots, t_{n}$ are terms and $R$ is an $n$-ary predicate symbol. 
 $Q_{1}\mathvect{u}$ denotes $Q_{1}(\mathvect{u})$, where $\mathvect{u}$ is a sequence of terms.
 \emph{Formulas} are defined by
 \[
 \varphi \BNFeq A \mathrelbar \lnot\varphi \mathrelbar  \varphi \mathrel{\land} \varphi \mathrelbar  \varphi \mathrel{\lor} \varphi \mathrelbar \mathop{\exists x} \varphi \mathrelbar \mathop{\forall x} \varphi,
 \]
 where $A$ is an atomic formula and $x$ is a variable.
 We define \emph{free variables} as usual, and
 $\mathFV{\varphi}$ is defined as the set of free variables in a formula $\varphi$.
 We write
 $\varphi\mleft[x_{0}:=t_{0}, \dots, x_{r}:=t_{r}\mright]$ for a formula obtained from a formula $\varphi$
 by simultaneously substituting terms $t_{0}$, $\ldots$,  $t_{r}$ for variables $x_{0}$, $\ldots$, $x_{r}$,
 respectively.
 We sometimes write $\theta$ for $x_{0}:=t_{0}, \dots, x_{r}:=t_{r}$.

 \begin{definition}[Inductive definition set]
  A \emph{production} is defined as
  \begin{center}
   \begin{inlineprooftree}
    \AxiomC{ $Q_{1} \mathvect{u}_{1} \quad \cdots \quad Q_{h} \mathvect{u}_{h} \quad P_{j_{1}}\mathvect{t}_{1} \quad \cdots \quad P_{j_{m}}\mathvect{t}_{m}$ }
    \UnaryInfC{$P_{i} \mathvect{t}$}
   \end{inlineprooftree},
  \end{center}
  where  $Q_{1}\mathvect{u}_{1}, \dots,  Q_{h}\mathvect{u}_{h}$
  are atomic formulas with ordinal predicate symbols and 
  $P_{j_{1}}\mathvect{t}_{1},  \dots, P_{j_{m}}\mathvect{t}_{m}$ and $P_{i} \mathvect{t}$
  are atomic formulas with inductive predicate symbols.
  
  The formulas above the line of a production are called the \emph{assumptions} of the production. 
  The formula under the line of a production is called the \emph{conclusion} of the production. 
  An \emph{inductive definition set} is a finite set of productions.
 \end{definition}

 \begin{definition}[Sequent]
  A \emph{sequent} is a pair of finite sets of formulas,
  denoted by $\Gamma \fCenter \Delta$, where $\Gamma$, $\Delta$ are finite sets of formulas. 
  $\Gamma$ is called the antecedent of $\Gamma \fCenter \Delta$ and $\Delta$ 
  is called the consequent of $\Gamma \fCenter \Delta$.
 \end{definition}

 For a set of formulas $\Gamma$, we define $\mathFV{\Gamma}$ as the set of 
 free variables of formulas in $\Gamma$. \medskip

 The semantics of inductive predicates is given by
 the least fixed point of a monotone operator constructed from the inductive definition set 
 \cite{Brotherston2011}.
 Since we do not use semantics in this paper, we do not discuss it in detail.

  We write $s^{n}x$ for $\overbrace{s \cdots s}^{n}x$. 
\section{Cyclic proof system \CLKID\ for first-order logic with inductive definitions}
\label{sec:CLKID}
In this section, we define a cyclic proof system \CLKID\ for first-order logic with inductive definitions.
To define it,  we also define an infinitary proof system \LKIDOm\ with the same language.
Then, \CLKID\ is understood as the subsystem \LKIDOm.
These systems are the same as \CLKID\ and \LKIDOm\ 
defined in \cite{BrotherstonPhD, Brotherston2011}.

  \subsection{Inference rules of \CLKID}
  \label{subsec:inference_rules}

  This section gives the inference rules of \CLKID.
  The inference rules except for rules of inductive predicates 
  are given in Figure \ref{fig:inference-rules}.
  The \emph{principal formula} of a rule is the distinguished formula 
  introduced by the rule in its conclusion.
  We use the commas in sequents for a set union. The contraction rule is implicit. 

  \begin{figure}[tbhp]%
 \centering
 \begin{tabular}[tb]{l}
  {\textbf{Structural rules:}} \smallskip \hfill \\
  
  \begin{tabular}{cc}
   \begin{minipage}{0.5\hsize}
    \begin{prooftree}
     \AxiomC{}
     \RightLabel{(\rulename{Axiom})($\Gamma \cap \Delta \neq \emptyset$)}
     \UnaryInfC{$\Gamma \fCenter \Delta$}
    \end{prooftree}
    \smallskip
   \end{minipage}
   &
       \begin{minipage}{0.5\hsize}
	\begin{prooftree}
	 \AxiomC{$ \Gamma' \fCenter \Delta'$}
	 \RightLabel{(\rulename{Weak})($\Gamma' \subseteq \Gamma$, $\Delta' \subseteq \Delta$)}
	 \UnaryInfC{$ \Gamma \fCenter \Delta $}
	\end{prooftree}
	\smallskip
       \end{minipage}
       \\
   \begin{minipage}{0.5\hsize}
    \begin{prooftree}
     \AxiomC{$\Gamma \fCenter \varphi, \Delta$}
     \AxiomC{$\Gamma, \varphi \fCenter \Delta$}
     \rulenamelabel{Cut}
     \BinaryInfC{$\Gamma \fCenter \Delta$}
    \end{prooftree}
    \smallskip
   \end{minipage}
   &
       \begin{minipage}{0.5\hsize}
	\begin{prooftree}
	 \AxiomC{$\Gamma \fCenter \Delta$}
	 \rulenamelabel{Subst}
	 \UnaryInfC{$\Gamma \left[ \theta \right] \fCenter \Delta \left[ \theta \right] $}
	\end{prooftree}
	\smallskip
       \end{minipage} 
  \end{tabular} \\
  
  \textbf{Logical rules:} \smallskip \hfill \\
  
  \begin{tabular}{cc}             
   \begin{minipage}{0.5\hsize}
    \begin{prooftree}
     \AxiomC{$\Gamma \fCenter \varphi, \Delta $}
     \rulenamelabel{$\lnot$ L}
     \UnaryInfC{$ \Gamma, \lnot\varphi \fCenter \Delta $}
    \end{prooftree}
    \smallskip
   \end{minipage}  
   &
       \begin{minipage}{0.5\hsize}
	\begin{prooftree}
	 \AxiomC{$\Gamma, \varphi \fCenter \Delta $}
	 \rulenamelabel{$\lnot$ R}
	 \UnaryInfC{$ \Gamma  \fCenter \lnot\varphi, \Delta $}
	\end{prooftree}
	\smallskip
       \end{minipage}
       \\
   \begin{minipage}{0.5\hsize}
    \begin{prooftree}
     \AxiomC{$\Gamma, \varphi \fCenter \Delta $}
     \AxiomC{$\Gamma, \psi \fCenter \Delta $}
     \rulenamelabel{$\lor$ L}
     \BinaryInfC{$ \Gamma, \varphi\lor\psi \fCenter \Delta$}
    \end{prooftree}
    \smallskip
   \end{minipage}
   &
       \begin{minipage}{0.5\hsize}
	\begin{prooftree}
	 \AxiomC{$\Gamma \fCenter \varphi, \psi, \Delta $}
	 \rulenamelabel{$\lor$ R}
	 \UnaryInfC{$ \Gamma  \fCenter \varphi\lor \psi, \Delta $}
	\end{prooftree}
       \end{minipage}
       \\
   \begin{minipage}{0.5\hsize}
    \begin{prooftree}
     \AxiomC{$\Gamma, \varphi, \psi \fCenter \Delta $}
     \rulenamelabel{$\land$ L}
     \UnaryInfC{$\Gamma, \varphi \land \psi \fCenter \Delta $}
    \end{prooftree}
   \end{minipage}
   &
       \begin{minipage}{0.5\hsize}
	\begin{prooftree}
	 \AxiomC{$\Gamma \fCenter \varphi, \Delta $}
	 \AxiomC{$\Gamma \fCenter \psi, \Delta $}
	 \rulenamelabel{$\land$ R}
	 \BinaryInfC{$ \Gamma \fCenter \varphi \land \psi, \Delta$}
	\end{prooftree}
	\smallskip
       \end{minipage}
       \\
   \begin{minipage}{0.5\hsize}
    \begin{prooftree}
     \AxiomC{$\Gamma \fCenter \varphi, \Delta $}
     \AxiomC{$\Gamma, \psi \fCenter \Delta $}
     \rulenamelabel{$\to$ L}
     \BinaryInfC{$ \Gamma, \varphi \to \psi \fCenter \Delta$}
    \end{prooftree}
    \smallskip
   \end{minipage}
   &
       \begin{minipage}{0.5\hsize}
	\begin{prooftree}
	 \AxiomC{$\Gamma, \varphi \fCenter \psi, \Delta $}
	 \rulenamelabel{$\to$ R}
	 \UnaryInfC{$ \Gamma  \fCenter \varphi \to \psi, \Delta $}
	\end{prooftree}
       \end{minipage}
       \\
   \begin{minipage}{0.5\hsize}
    \begin{prooftree}
     \AxiomC{$\Gamma, \varphi\left[x := t \right]\fCenter \Delta $}
     \rulenamelabel{$\forall$ L}
     \UnaryInfC{$ \Gamma, \forall x \varphi \fCenter \Delta $}
    \end{prooftree}
    \smallskip
   \end{minipage}  
   &
       \begin{minipage}{0.5\hsize}
	\begin{prooftree}
	 \AxiomC{$\Gamma \fCenter \varphi, \Delta $}
	 \RightLabel{(\rulename{$\forall$ R})($x \not \in \mathFV{ \Gamma \cup \Delta}$)}
	 \UnaryInfC{$ \Gamma  \fCenter \forall x \varphi, \Delta $}
	\end{prooftree}
	\smallskip
       \end{minipage}
       \\
   \begin{minipage}{0.5\hsize}
    \begin{prooftree}
     \AxiomC{$\Gamma, \varphi \fCenter \Delta $}
     \RightLabel{(\rulename{$\exists$ L})($x \not \in \mathFV{ \Gamma \cup \Delta }$)}
     \UnaryInfC{$ \Gamma, \exists x \varphi \fCenter \Delta $}
    \end{prooftree}
    \smallskip
   \end{minipage}  
   &
       \begin{minipage}{0.5\hsize}
	\begin{prooftree}
	 \AxiomC{$\Gamma \fCenter \varphi \left[ x := t \right], \Delta $}
	 \rulenamelabel{$\exists$ R}
	 \UnaryInfC{$ \Gamma  \fCenter \exists x \varphi, \Delta$}
	\end{prooftree}
	\smallskip
       \end{minipage}
  \end{tabular} \\
  \begin{tabular}{cc}
   \begin{minipage}{0.5\hsize}
    \begin{prooftree}
     \AxiomC{$\Gamma \left[x := u, y := t\right] \fCenter \Delta \left[x := u, y := t\right] $}
     \rulenamelabel{$=$ L}
     \UnaryInfC{$ \Gamma \left[ x := t, y := u \right], t = u \fCenter \Delta \left[x := t, y := u\right] $}
    \end{prooftree}
    \smallskip
   \end{minipage}  
   &
   \begin{minipage}{0.5\hsize}
    \begin{prooftree}
     \AxiomC{\phantom{$ \Gamma  \fCenter t = t, \Delta $}}
     \rulenamelabel{$=$ R}
     \UnaryInfC{$ \Gamma  \fCenter t = t, \Delta $}
    \end{prooftree}
    \smallskip
   \end{minipage} 
  \end{tabular}
 \end{tabular}
 \caption{Inference rules except rules for inductive predicates} 
 \label{fig:inference-rules}
\end{figure}
  
  We present the two inference rules for inductive predicates.  
  First, for each production 
  \begin{center}
   \begin{inlineprooftree}
    \AxiomC{ $ Q_1 \mathvect{u}_1\mleft( \mathvect{x} \mright) \quad \cdots \quad Q_h \mathvect{u}_h\mleft( \mathvect{x} \mright) \quad P_{j_1} \mathvect{t}_1  \mleft( \mathvect{x} \mright)\quad \cdots \quad P_{j_m} \mathvect{t}_m  \mleft( \mathvect{x} \mright)$ }
    \UnaryInfC{$ P_i \mathvect{t} \mleft( \mathvect{x} \mright)$}
   \end{inlineprooftree},
  \end{center}
  there is the inference rule 
  \begin{center} 
   \begin{inlineprooftree}
    \AxiomC{ $\Gamma \fCenter Q_1 \mathvect{u}_1\mleft(\mathvect{u}\mright), \Delta \quad \cdots \quad \Gamma \fCenter Q_h \mathvect{u}_h\mleft(\mathvect{u}\mright), \Delta$ \quad  $\Gamma \fCenter P_{j_1}\mathvect{t}_1\mleft(\mathvect{u}\mright), \Delta \quad \cdots \quad \Gamma \fCenter P_{j_{m}} \mathvect{t}_m\mleft(\mathvect{u}\mright), \Delta $}
    \rulenamelabel{$P_{i}$ R}
    \UnaryInfC{$\Gamma \fCenter P_i\mathvect{t}\mleft(\mathvect{u}\mright), \Delta$}
   \end{inlineprooftree}.
  \end{center}
  
  Next, we define the left introduction rule for the inductive predicate.
  A \emph{case distinctions} of $\Gamma , P_{i} \mathvect{u} \fCenter \Delta$ is defined as a sequent
 \[ \Gamma, \mathvect{u} = \mathvect{t} \mleft( \mathvect{y} \mright), Q_1 \mathvect{u}_1\mleft( \mathvect{y} \mright), \ldots , Q_h \mathvect{u}_h\mleft( \mathvect{y} \mright) , P_{j_1} \mathvect{t}_1  \mleft( \mathvect{y} \mright) , \ldots , P_{j_m} \mathvect{t}_m  \mleft( \mathvect{y} \mright) \vdash \Delta, \]
 where $\mathvect{y}$ is a sequence of distinct variables of the same length as $\mathvect{x}$ and 
 $y \not \in \mathFV{\Gamma \cup \Delta \cup \mathsetextension{ P_i \mathvect{u}}} $ 
 for all $y \in \mathvect{y}$, and there is a production
  \begin{center}
  \begin{inlineprooftree}
    \AxiomC{$Q_1 \mathvect{u}_1\mleft( \mathvect{x} \mright) \quad \ldots \quad Q_h \mathvect{u}_h\mleft( \mathvect{x} \mright) \quad P_{j_1} \mathvect{t}_1  \mleft( \mathvect{x} \mright)\quad \ldots \quad P_{j_m} \mathvect{t}_m  \mleft( \mathvect{x} \mright)$ }
    \UnaryInfC{$ P_{i} \mathvect{t} \mleft( \mathvect{x} \mright)$}
  \end{inlineprooftree}.
 \end{center}
 The inference rule (Case $P_{i}$) is
 \begin{center}
  \begin{inlineprooftree}
   \AxiomC{All case distinctions of $\Gamma , P_{i} \mathvect{u} \fCenter \Delta$}
   \rulenamelabel{Case $P_{i}$}
   \UnaryInfC{$\Gamma , P_{i} \mathvect{u} \fCenter \Delta $}
  \end{inlineprooftree}.
 \end{center}
 The formulas 
 $P_{j_{1}} \mathvect{t}_{1}  \mleft( \mathvect{y} \mright), \ldots, P_{j_{m}} \mathvect{t}_{m}  \mleft( \mathvect{y} \mright)$ 
 in case distinctions are said to be \emph{case-descendants} of 
 the principal formula $P_{i}\mathvect{u}$. 
 
 \begin{ex}
  \label{ex:rules_for_add}
  Let $\mathindAddonesy$, $\mathindAddtwosy$ be inductive predicates of arity three,
  $0$ be a constant symbol, and $s$ be a function symbol of arity one. 
  
  We define the productions of 
  $\mathindAddonesy$, $\mathindAddtwosy$ by  \medskip

  \hfil 
  \begin{inlineprooftree}
      \AxiomC{\phantom{$\mathindAddone 0yy$}}
      \UnaryInfC{$\mathindAddone 0yy$}
  \end{inlineprooftree},
  \qquad
  \begin{inlineprooftree}
   \AxiomC{$\mathindAddone xyz$}
   \UnaryInfC{$\mathindAddone{sx}{y}{sz}$}
  \end{inlineprooftree},
  \qquad
  \begin{inlineprooftree}
   \AxiomC{\phantom{$\mathindAddtwo 0yy$}}
   \UnaryInfC{$\mathindAddtwo 0yy$}
  \end{inlineprooftree},
  \qquad
  \begin{inlineprooftree}
   \AxiomC{$\mathindAddtwo{x}{sy}{z}$}
   \UnaryInfC{$\mathindAddtwo{sx}{y}{z}$}
  \end{inlineprooftree}. \medskip
  
  The inference rules for $\mathindAddonesy$, $\mathindAddtwosy$ are 
  
  \hfil
  \begin{tabular}{cc}
   \begin{inlineprooftree}
    \AxiomC{\phantom{$\fCenter \mathindAddone 0bb$}}
    \rulenamelabel{$\mathindAddonesy$ R${}_1$}
    \UnaryInfC{$\Gamma \fCenter \mathindAddone{0}{b}{b}, \Delta$}
   \end{inlineprooftree},
   &
       \begin{inlineprooftree}
	\AxiomC{ $\Gamma \fCenter \Delta, \mathindAddone{a}{b}{c}$}
	\rulenamelabel{$\mathindAddonesy$ R${}_2$}
	\UnaryInfC{$\Gamma \fCenter \Delta, \mathindAddone{sa}{b}{sc}$}
       \end{inlineprooftree}, \medskip
       \\ 
   \begin{inlineprooftree}
    \AxiomC{\phantom{$\fCenter \mathindAddtwo 0bb$}}
    \rulenamelabel{$\mathindAddtwosy$ R${}_1$}
    \UnaryInfC{$\Gamma \fCenter \mathindAddtwo{0}{b}{b},\Delta $}
   \end{inlineprooftree},
   & 
       \begin{inlineprooftree}
	\AxiomC{ $\Gamma \fCenter \Delta, \mathindAddtwo{a}{sb}{c}$}
	\rulenamelabel{$\mathindAddtwosy$ R${}_2$}
	\UnaryInfC{$\Gamma \fCenter \Delta, \mathindAddtwo{sa}{b}{c}$}
       \end{inlineprooftree},
  \end{tabular}\medskip
  
  \hfil 
  \begin{inlineprooftree}
  \AxiomC{$\Gamma, a=0, b=y, c=y \fCenter \Delta$}
  \AxiomC{$\Gamma, a=sx , b=y, c=sz, \mathindAddone{x}{y}{z} \fCenter \Delta$}
  \rulenamelabel{Case $\mathindAddonesy$}
  \BinaryInfC{ $\Gamma, \mathindAddone{a}{b}{c} \fCenter \Delta$}
  \end{inlineprooftree} \medskip \\ 
  ($x$, $y$, $z \not \in \mathFV{\Gamma \cup \Delta \cup \mathsetextension{\mathindAddone{a}{b}{c}}}$ 
  and $x$, $y$, $z$ are all distinct) and \medskip
  
  \hfil
  \begin{inlineprooftree}
   \AxiomC{$\Gamma, a=0, b=y, c=y \fCenter \Delta$}
   \AxiomC{$\Gamma, a=sx, b=y, c=z, \mathindAddtwo{x}{sy}{z} \fCenter \Delta$}
   \rulenamelabel{Case $\mathindAddtwosy$}
   \BinaryInfC{ $\Gamma, \mathindAddtwo{a}{b}{c} \fCenter \Delta$}
  \end{inlineprooftree} \medskip  \\ 
  ($x$, $y$, $z \not \in \mathFV{\Gamma \cup \Delta \cup \mathsetextension{\mathindAddtwo{a}{b}{c}}}$ 
  and $x$, $y$, $z$ are all distinct).
 \end{ex}
  

  \subsection{Infinitary proof system \LKIDOm}
  \label{subsec:LKIDOm}
  In this section,
  we define an infinitary proof system for first-order logic with inductive definitions \LKIDOm.
  The inference rules of \LKIDOm\ are the same as that of \CLKID.

  \begin{definition}[Derivation tree]
   Let $\mathRuleSet$ be the set of names for the inference rules of \CLKID.
   Let $\mathSeqSet$ be the set of sequents.
   $\mathnatKC$ denotes the set of finite sequences of natural numbers.
   We write $\mleft\langle n_{1}, \dots, n_{k} \mright\rangle$ 
   for the sequence of the numbers $n_{1}, \dots, n_{k}$.
   We write $\sigma_{1}\sigma_{2}$ for the concatenation of $\sigma_{1}$ and $\sigma_{2}$
   with $\sigma_{1}$, $\sigma_{2}\in\mathnatKC$.
   We write $\sigma n$ for $\sigma \mleft\langle n \mright\rangle$ 
   for $\sigma\in\mathnatKC$ and $n\in\mathnat$.
   We define a \emph{derivation tree} to be a partial function
   $\mathdefparfunc{\mathdertree{D}}{\mathnatKC}{\mathSeqSet\times\mleft(\mathRuleSet\cup\mathsetextension{\text{\rulename{Bud}}}\mright)}$
   satisfying the following conditions:
   \begin{enumerate}
    \item $\mathof{\mathdom}{\mathdertree{D}}$ is prefixed-closed, that is to say,
	  if $\sigma_{1}\sigma_{2}\in\mathof{\mathdom}{\mathdertree{D}}$ for $\sigma_{1}$, $\sigma_{2}\in\mathnatKC$, 
	  then $\sigma_{1}\in\mathof{\mathdom}{\mathdertree{D}}$.
    \item If $\sigma n\in\mathof{\mathdom}{\mathdertree{D}}$ for $\sigma\in\mathnatKC$ and $n\in\mathnat$,  
	  then $\sigma m\in\mathof{\mathdom}{\mathdertree{D}}$ for any $m\leq n$.
    \item Let $\mathof{\mathdertree{D}}{\sigma}=\mleft(\Gamma_{\sigma}\fCenter\Delta_{\sigma}, R_{\sigma}\mright)$.
    \begin{enumerate}
     \item If $R_{\sigma}=\text{\rulename{Bud}}$, then $\sigma 0\notin\mathof{\mathdom}{\mathdertree{D}}$.
     \item If $R_{\sigma}\neq\text{\rulename{Bud}}$, then 
	   
	   \begin{prooftree}
	    \AxiomC{$\Gamma_{\sigma 0}\fCenter\Delta_{\sigma 0}$}
	    \AxiomC{$\cdots$}
	    \AxiomC{$\Gamma_{\sigma n}\fCenter\Delta_{\sigma n}$}
	    \TrinaryInfC{$\Gamma_{\sigma}\fCenter\Delta_{\sigma}$}
	   \end{prooftree}	       
	   is a rule $R_{\sigma}$ and $\sigma\mleft(n+1\mright)\notin\mathof{\mathdom}{\mathdertree{D}}$.
    \end{enumerate}
   \end{enumerate}
  \end{definition}

  We write $\mathSeq{\mathdertree{D}}{\sigma}$ and $\mathRule{\mathdertree{D}}{\sigma}$ 
  for $\Gamma \fCenter\Delta$ and \rulename{R}, respectively, where 
  $\mathof{\mathdertree{D}}{\sigma}=\mleft(\Gamma \fCenter\Delta, {\text{\rulename{R}}}\mright)$.

  An element in the domain of a derivation tree is called a \emph{node}.
  The empty sequence as a node is called the \emph{root}.
  The node $\sigma$ is called a \emph{bud} if $\mathRule{\mathdertree{D}}{\sigma}$ is \rulename{Bud}.
  The node which is not a bud is called an \emph{inner node}.
  A derivation tree is called \emph{infinite} if the domain of the derivation tree is infinite. 

  We sometimes identify a node $\sigma$ with the sequent $\mathSeq{\mathdertree{D}}{\sigma}$.
  
  \begin{definition}[Path] 
   We define a \emph{path} in a derivation tree $\mathdertree{D}$ to be a (possibly infinite) sequence 
   $\mleft\{ \sigma_{i} \mright\}_{0\leq i < \alpha}$ of nodes in $\mathof{\mathdom}{\mathdertree{D}}$
   such that $\sigma_{i+1}=\sigma_{i}n$ for some $n\in\mathnat$
   and $\alpha\in \mathnatpos \cup \mathsetextension{\omega}$, where 
   $\mathnatpos$ is the set of positive natural numbers and $\omega$ is the least infinite ordinal.
   A finite path $\sigma_{0}, \sigma_{1}, \dots, \sigma_{n}$ is called
   \emph{a path from $\sigma_{0}$ to $\sigma_{n}$}.
   The \emph{length of a finite path} $\mleft\{ \sigma_{i} \mright\}_{0\leq i < \alpha}$ 
   is defined as $\alpha$.
   We define \emph{the height of a node} as the length of the path from the root to the node.
  \end{definition}

  We sometimes write $\mleft(\Gamma_{i} \fCenter \Delta_{i} \mright)_{0\leq i <\alpha}$
  for the path $\mleft(\sigma_{i}\mright)_{0\leq i <\alpha}$ in a derivation tree $\mathdertree{D}$
  if $\mathof{\mathdertree{D}}{\sigma_{i}}=\mleft(\Gamma_{i} \fCenter \Delta_{i}, R_{i}\mright)$.

  \begin{definition}[Trace] 
   For a path $\mleft(\Gamma_{i}\fCenter \Delta_{i}\mright)_{0\leq i <\alpha}$
   in a derivation tree $\mathdertree{D}$,
   we define a \emph{trace following}  $\mleft(\Gamma_{i}\fCenter\Delta_{i}\mright)_{0\leq i <\alpha}$ 
   to be a sequence of formulas $\mleft(\tau_{i} \mright)_{0\leq i <\alpha}$ 
   such that the following hold:
   \begin{enumerate}
    \item $\tau_{i}$ is an atomic formula with an inductive predicate in $\Gamma_{i}$. 
    \item If $\Gamma_{i} \fCenter \Delta_{i}$ is the conclusion of (\rulename{Subst}) with $\theta$, 
	  then $\tau_{i}\equiv\tau_{i+1}\mleft[\theta\mright]$.
    \item If $\Gamma_{i}\fCenter\Delta_{i}$ is the conclusion of (\rulename{$=$ L}) 
	  with the principal formula $t=u$ and
	  $\tau_{i}\equiv F\mleft[x:=t, y:=u\mright]$, then $\tau_{i+1} \equiv F\mleft[x:=u, y:=t\mright]$.
    \item If $\Gamma_{i}\fCenter \Delta_{i}$ is  the conclusion of (\rulename{Case $P_{i}$}), 
	  then either
	  \begin{itemize}
	   \item  $\tau_{i}$ is the principal formula of the rule and $\tau_{i+1}$ 
		  is a case-descendant of $\tau_{i}$, or
	   \item  $\tau_{i+1}$ is the same as $\tau_{i}$. 
	  \end{itemize}
	  
	  In the former case, $\tau_{i}$ is said to be a \emph{progress point} of the trace.
    \item If $\Gamma_{i} \fCenter \Delta_{i}$ is the conclusion of any other rules, 
	  then $\tau_{i+1}\equiv\tau_{i}$.
   \end{enumerate}
  \end{definition}
  
  \begin{definition}[Global trace condition]
   If a trace has infinitely many progress points,
   we call the trace an \emph{infinitely progressing trace}.
   If there exists an infinitely progressing trace following a tail of the path
   $\mleft( \Gamma_{i} \fCenter \Delta_{i} \mright)_{i \geq k}$ with some $k \geq 0$
   for every infinite path $\mleft( \Gamma_{i}\fCenter\Delta_{i} \mright)_{i \geq 0}$ in a derivation tree,
   we say the derivation tree satisfies the \emph{global trace condition}.
  \end{definition}

 \begin{definition}[\LKIDOm\ pre-proof]
  We define an \LKIDOm\ \emph{pre-proof}
  to be a (possibly infinite) derivation tree $\mathdertree{D}$ without buds.
  When the root is $\Gamma\fCenter\Delta$, we call $\Gamma\fCenter\Delta$ the conclusion of the proof.
 \end{definition}
 
 \begin{definition}[\LKIDOm\ proof] %
  \label{definition:LKIDOm-proof} 
  We define an \LKIDOm\ \emph{proof} to be
  an \LKIDOm\ pre-proof that satisfies the global trace condition.
 \end{definition}

 Because of the global trace condition, the soundness of \LKIDOm\ for the standard models hold
 \cite{BrotherstonPhD, Brotherston2011}.
 In other words, if there exists an \LKIDOm\ \emph{proof} of a sequent 
 $\Gamma\fCenter\Delta$, then $\Gamma \fCenter \Delta$ is valid in any standard models.
 Moreover, cut-free completeness of \LKIDOm\ for the standard models hold.
 In other words, if $\Gamma \fCenter \Delta$ is valid in any standard models,
 there exists an \LKIDOm\ cut-free \emph{proof} of $\Gamma\fCenter\Delta$ 
 \cite{BrotherstonPhD, Brotherston2011}.

  \subsection{Cyclic proof system \CLKID} \label{subsec:CLKID}
  In this section, we introduce a cyclic proof system \CLKID. 

  \begin{definition}[Companion]
   For a finite derivation tree $\mathdertree{D}$,
   we define the \emph{companion} for a bud $b$ as an inner node $\sigma$ in $\mathdertree{D}$ with 
   $\mathSeq{\mathdertree{D}}{\sigma}=\mathSeq{\mathdertree{D}}{b}$.
  \end{definition}

  \begin{definition}[\CLKID\ pre-proof]
   We define a \CLKID\ \emph{pre-proof}
   to be a pair $\mleft(\mathdertree{D},  \mathcompanion{C}\mright)$
   such that $\mathdertree{D}$ is a finite derivation tree and 
  $\mathcompanion{C}$ is a function mapping each bud to its companion.
   When the root is $\Gamma \fCenter \Delta$, 
   we call $\Gamma\fCenter\Delta$ the conclusion of the proof.
  \end{definition}
  
  \begin{definition}[Tree-unfolding]
   \label{definition:tree-unfolding}
   For a \CLKID\ pre-proof $\mleft(\mathdertree{D},  \mathcompanion{C}\mright)$,
   a \emph{tree-unfolding} $\mathtreeunfolding{\mathdertree{D},  \mathcompanion{C}}$
   of $\mleft(\mathdertree{D},  \mathcompanion{C}\mright)$ is recursively defined by
   \[ 
   \mathof{\mathtreeunfolding{\mathdertree{D},  \mathcompanion{C}}}{\sigma}=
   \begin{cases}
    \mathof{\mathdertree{D}}{\sigma}, 
    & \text{if $\sigma\in\mathof{\mathdom}{\mathdertree{D}}\setminus\mathBudset{\mathdertree{D}}$,}
    \\ %
    \mathof{\mathtreeunfolding{\mathdertree{D},  \mathcompanion{C}}}{\sigma_{3}\sigma_{2}}, 
    & \text{if $\sigma\notin\mathof{\mathdom}{D}\setminus\mathBudset{\mathdertree{D}}$ with $\sigma=\sigma_{1}\sigma_{2}$, $\sigma_{1}\in \mathBudset{\mathdertree{D}}$ and $\sigma_{3}=\mathof{\mathcompanion{C}}{\sigma_{1}}$,}
   \end{cases}
   \]
   where $\mathBudset{\mathdertree{D}}$ is the set of buds in $\mathdertree{D}$.
  \end{definition}

  Note that a tree-unfolding is an \LKIDOm\ pre-proof. 

  \begin{definition}[\CLKID\ proof]
   We define a \emph{\CLKID\ proof} of a sequent $\Gamma \fCenter \Delta$ 
   to be a \CLKID\ pre-proof of $\Gamma \fCenter \Delta$ 
   whose tree-unfolding satisfies the global trace condition.
   If a \CLKID\ proof of $\Gamma \fCenter \Delta$ exists, 
   we say $\Gamma \fCenter \Delta$ is \emph{provable} in \CLKID.
   A \CLKID\ proof in which (\rulename{Cut}) does not occur
   is called \emph{cut-free}.
   If a cut-free \CLKID\ proof of $\Gamma \fCenter \Delta$ exists,  
   we say $\Gamma \fCenter \Delta$ is \emph{cut-free provable} in \CLKID.
  \end{definition}

  \begin{ex}%
   \label{ex:proofAddone}
   The derivation tree given in Figure \ref{fig:counter-ex-D1}
   is the proof of $\mathindAddone{x_1}{sy_1}{z_1} \fCenter \mathindAddone{sx_1}{y_1}{z_1}$ in \CLKID,  
   where ($\star$) indicates the pairing of a companion with a bud and
   the underlined formulas are the infinitely progressing trace 
   for the infinite path (some applying rules and some labels are omitted for limited space).
   \begin{figure}[thbp]
 \centering
 \begin{prooftree}
  \small
  \AxiomC{}
  \rulenamelabel{$\mathindAddonesy$ R${}_1$}
  \UnaryInfC{$\fCenter \mathindAddone{0}{y_1}{y_1}$}
  \rulenamelabel{$\mathindAddonesy$ R${}_2$}
  \UnaryInfC{$\fCenter \mathindAddone{s0}{y_1}{sy_1}$}
  \UnaryInfC{$\fCenter \mathindAddone{s0}{y_1}{sy_1}$}
  \UnaryInfC{$sy_1=y_2 \fCenter \mathindAddone{s0}{y_1}{y_2}$}\doubleLine
  \UnaryInfC{$\begin{aligned}x_1&=0, \\ sy_1&=y_2, \\ z_1&=y_2\end{aligned}\fCenter \mathindAddone{sx_1}{y_1}{z_1}$} 

  \AxiomC{($\star$) $\underline{\mathindAddone{x_1}{sy_1}{z_1}} \fCenter \mathindAddone{sx_1}{y_1}{z_1}$}
  \UnaryInfC{$\underline{\mathindAddone{x_2}{sy_1}{z_2}} \fCenter \mathindAddone{sx_2}{y_1}{z_2}$}
  \rulenamelabel{$\mathindAddonesy$ R${}_2$}
  \UnaryInfC{$\underline{\mathindAddone{x_2}{sy_1}{z_2}} \fCenter \mathindAddone{ssx_2}{y_1}{sz_2}$}
  \UnaryInfC{$\underline{\mathindAddone{x_2}{sy_1}{z_2}} \fCenter \mathindAddone{ssx_2}{y_1}{sz_2}$}\doubleLine
  \UnaryInfC{$\begin{aligned}x_1&=sx_2,\\ sy_1&=y_2, \\ z_1&=sz_2,\end{aligned} \underline{\mathindAddone{x_2}{y_2}{z_2}} \fCenter \mathindAddone{sx_1}{y_1}{z_1}$} 
  \rulenamelabel{Case $\mathindAddonesy$} 
  \BinaryInfC{($\star$) $\underline{\mathindAddone{x_1}{sy_1}{z_1}} \fCenter \mathindAddone{sx_1}{y_1}{z_1}$} 
 \end{prooftree}
  \caption{A \CLKID\ proof} 
 \label{fig:counter-ex-D1}
\end{figure}
  \end{ex}

\subsection{Cycle-normalization}\label{subsec:cycle-normalization}

\newcommand{\Dom}{\mathrm{dom}}
\def\Bar{\overline}
\def\prove{\mathrel \vdash}

\let\TmpA=\[
\let\TmpB=\]
\def\[{\TmpA\begin{array}{l}}
\def\]{\end{array}\TmpB}

This section proves \emph{cycle-normalization} for \CLKID. 
It is proved in \cite{BrotherstonPhD}, but we will give a much shorter proof.

A \CLKID\ pre-proof in which each companion is an ancestor of the corresponding bud is called
\emph{cycle-normal}.
The following proposition states the cycle-normalization holds for \CLKID. 

\begin{prop}\label{prop:cyclenormalization}
 For a \CLKID\ pre-proof $(D,C)$, we have
 a \CLKID\ cycle-normal pre-proof $(D',C')$ such that
 the tree-unfolding of $(D,C)$ is that of $(D',C')$.
\end{prop}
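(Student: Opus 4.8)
The plan is to read off, from the infinite tree-unfolding $T := \mathtreeunfolding{D,C}$, a canonical \emph{label} for each of its nodes---the node of $D$ whose sequent and rule are copied there---and then to re-fold $T$ by cutting each branch at the first point where a label repeats. First I would record the elementary fact that there is a function $\ell\colon \Dom(T)\to \Dom(D)\setminus\mathBudset{D}$ with $T(\sigma)=D(\ell(\sigma))$ for every node $\sigma$; this is immediate from Definition \ref{definition:tree-unfolding}, since the value spliced in at any node is always some non-bud value of $D$. In particular, two nodes of $T$ sharing a label carry the same sequent. (I assume, as usual, that the root $\langle\rangle$ is an inner node, so that $\ell(\langle\rangle)=\langle\rangle$.)

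The key structural observation, which is what keeps the argument short, is that $\ell$ is \emph{locally determined}. For a non-bud node $m$ of $D$ and a premise index $j$, let $\mathrm{nxt}(m,j)$ be $mj$ itself when $mj\notin\mathBudset{D}$, and be $\mathof{C}{mj}$ otherwise; in either case this is a non-bud node. Then, directly from the unfolding clause, $\ell(\sigma j)=\mathrm{nxt}(\ell(\sigma),j)$ for every node $\sigma$ of $T$ and every child index $j$. Consequently $T$ is exactly the unfolding, from the root label $\langle\rangle$, of the \emph{finite} directed graph on the vertex set $\Dom(D)\setminus\mathBudset{D}$ whose edges are $m\mapsto\mathrm{nxt}(m,j)$; and an easy induction on $|\rho|$ shows that whenever $\ell(\sigma)=\ell(\sigma')$, the nodes $\sigma\rho$ and $\sigma'\rho$ lie simultaneously in $\Dom(T)$ and satisfy $\ell(\sigma\rho)=\ell(\sigma'\rho)$. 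Thus the entire subtree of $T$ below a node depends only on that node's label.

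Next I would construct $(D',C')$ as the \emph{first-repeat unfolding} of this graph. Traverse $T$ from the root, keeping a node $\sigma$ as an inner node (retaining the rule it carries in $T$) as long as no strict ancestor of $\sigma$ has the same label; as soon as a node $\sigma$ that is not a zero-premise node has a strict ancestor $\mu$ with $\ell(\mu)=\ell(\sigma)$, declare $\sigma$ a bud and set $C'(\sigma):=\mu$. Since $\Dom(D)\setminus\mathBudset{D}$ is finite, every branch of $D'$ has length at most the number of labels, and as $D'$ is finitely branching, K\"onig's lemma makes $D'$ finite. By construction every companion $C'(\sigma)$ is a strict ancestor that is an inner node with the same sequent as $\sigma$ (same label), so $(D',C')$ is a well-formed \CLKID\ pre-proof in which each companion is an ancestor of its bud---that is, a \emph{cycle-normal} pre-proof.

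It remains to prove $\mathtreeunfolding{D',C'}=T$, which I expect to be the main obstacle. The cleanest route is to push the construction through local determinacy: the successor function $\mathrm{nxt}'$ induced on the non-bud nodes of $D'$ matches $\mathrm{nxt}$ under the label map $\ell$, because an edge of $D'$ either stays inside $D'$ (labels agree verbatim) or hits a bud, in which case $C'$ sends it to the ancestor carrying exactly the repeated label. Hence, writing $\ell'$ for the label function of $(D',C')$, an induction on $|\sigma|$ yields $\ell(\ell'(\sigma))=\ell(\sigma)$ for every node $\sigma$ of $\mathtreeunfolding{D',C'}$, whence $\mathtreeunfolding{D',C'}(\sigma)=D'(\ell'(\sigma))=T(\ell'(\sigma))=D(\ell(\ell'(\sigma)))=D(\ell(\sigma))=T(\sigma)$, and the two unfoldings are seen to have the same domain. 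The care needed is confined to the bookkeeping of the two nested unfolding recursions and to checking that the first-repeat cut never severs a branch before its rule structure has been faithfully copied; once local determinacy is established, this is routine.
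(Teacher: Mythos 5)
Your proof is correct, and while its skeleton matches the paper's (unfold, cut each branch at a first repetition, re-fold, verify that the unfoldings agree), the repetition criterion you use is genuinely different. The paper cuts a branch of the unfolding $D_1$ at the first node $\sigma$ whose generated subtree $D_1^{(\sigma)}$ coincides with the subtree at some strict ancestor; finiteness of the resulting tree then rests on the regularity of $D_1$ (only finitely many distinct subtrees occur) together with K\"onig's lemma, and the companion function is read off from the subtree equality itself. You instead cut at the first repetition of the \emph{label} $\ell(\sigma)$, the inner node of $D$ whose value is copied to $\sigma$, and your key lemma is the local determinacy $\ell(\sigma j)=\mathrm{nxt}(\ell(\sigma),j)$, which identifies the unfolding with the path-unravelling of a finite graph on the inner nodes of $D$. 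Since equal labels give equal subtrees but not conversely, your buds may sit deeper than the paper's, so the two constructions generally produce different (both correct) cycle-normal pre-proofs. What your route buys is an explicit bound on branch length (at most the number of inner nodes of $D$ plus one), which makes finiteness immediate without appealing to regularity --- indeed K\"onig's lemma is not strictly needed once branch length is uniformly bounded --- and the identity $\ell\circ\ell'=\ell$ gives a clean way to compare the two unfoldings. What the paper's route buys is that its cutting criterion depends only on the infinite tree $D_1$ itself and not on the particular presentation $(D,C)$ that generated it. In both cases the final verification that $\mathtreeunfolding{D',C'}$ equals the original unfolding is the fiddly part; your sketch of it (a simultaneous induction establishing membership in both domains together with the label identity) has the right shape and is no less detailed than the paper's own two-case induction.
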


\begin{proof}
 We write $\sigma \subseteq \sigma'$ when $\sigma$ is an initial segment
 of $\sigma'$. We write $|\sigma|$ for the length of a sequence $\sigma$.
 We define $D^{(\sigma)}$ by $D^{(\sigma)}(\sigma_1) = D(\sigma \sigma_1)$,
 $\Bar S$ as $\{ \sigma' \ |\ \sigma' \subseteq \sigma \in S \}$,
 and
 $S^\circ$ as $\{ \sigma' \ |\ \sigma' \subsetneqq \sigma \in S \}$.

 Let $D_1$ be the tree unfolding of $(D,C)$.

 Define
 \[
 S_1 = \{ \sigma \in \mathof{\Dom}{D_1} \ |\ \exists \sigma' \subsetneqq \sigma
 (D_1^{(\sigma)} = D_1^{(\sigma')}),
 \forall \sigma_1\subsetneqq \sigma \forall \sigma_2 \subsetneqq \sigma(D_1^{(\sigma_1)} \ne D_1^{(\sigma_2)}),
 \forall n\exists \sigma_1 \supseteq \sigma(\sigma_1 \in \Dom(D_1), |\sigma_1|\ge n) \}, \\

 S_2 = \{ \sigma \in \Dom(D_1) \ |\ \sigma 0 \notin \Dom(D_1), 
 \forall \sigma' \subseteq \sigma(\sigma' \notin S_1) \}.
 \]

 $S_1$ is the set of nodes such that
 the node is on some infinite path and
 the node is of the smallest height on the path
 among nodes, each of which has some inner node 
 of the same subtree.
 $S_2$ is the set of leaf nodes of finite paths
 which are not cut by $S_1$.

 Define $D'$ by
 \[
 D'(\sigma) = D_1(\sigma) 
 \hbox{ if $\sigma \in (S_1)^\circ \cup \Bar{S_2}$}, \\

 D'(\sigma) = (\Gamma \prove \Delta,\mathrm{Bud}) \hbox{ 
 if $\sigma \in S_1, D_1(\sigma) = (\Gamma \prove \Delta, R)$}.
 \]

 Define $C'$ by $C'(\sigma) = \sigma'$ for a bud $\sigma$ of $D'$
 where $\sigma' \subsetneqq \sigma, D_1^{(\sigma)} = D_1^{(\sigma')}$.

 We can show that $\Dom(D')$ is finite as follows.
 Since $\Dom(D') = \Bar{S_1} \cup \Bar{S_2}$,
 we have $\Dom(D') \subseteq \Dom(D_1)$.
 Since $D_1$ is finite-branching, $D'$ is so.
 Assume $\Dom(D')$ is infinite to show contradiction.
 By K\"onig's lemma, there is some infinite path $(\sigma_i)_i$
 such that $\sigma_i \in \Dom(D')$.
 Since $D_1$ is regular, the set $\{ D_1^{(\sigma_i)} \}_i$ is finite.
 Hence there are $j<k$ such that $D_1^{(\sigma_j)} = D_1^{(\sigma_k)}$.
 Take the smallest $k$ among such $k$'s.
 Then $\sigma_k \in S_1$.
 Hence $\sigma_{k+1} \notin \Bar{S_1}$.
 Hence $\sigma_{k+1} \notin \Dom(D')$, which contradicts.

 Then $(D',C')$ is a \CLKID\ cycle-normal pre-proof.

 Define $D_1'$ as the tree-unfolding of $(D',C')$.

 We can show $D_1=D_1'$ on $\Dom(D_1')$ as follows.

 Case 1 where for any $\sigma' \subseteq \sigma$, $\sigma' \notin S_1$.
 $D_1'(\sigma) = D'(\sigma) = D_1(\sigma)$.

 Case 2 where 
 there is some $\sigma_1 \subseteq \sigma$ such that $\sigma_1 \in S_1$.
 Let $\sigma_1 \sigma_2$ be $\sigma$ and $\sigma_3$ be $C'(\sigma_1)$.
 Then $D_1(\sigma) = D_1^{(\sigma_1)}(\sigma_2) = D_1^{(\sigma_3)}(\sigma_2)
 = D_1(\sigma_3 \sigma_2) = D_1'(\sigma_3 \sigma_2)$ by the induction hypothesis,
 it is $D_1'(\sigma_1 \sigma_2)$ by definition of $D_1'$, 
 and it is $D_1'(\sigma)$.

 We can show $\Dom(D_1) \subseteq \Dom(D_1')$ as follows.
 By induction on $|\sigma|$, we will show
 $\sigma \in \Dom(D_1) \hbox{ implies } \sigma \in \Dom(D_1')$.
 If $\sigma \in S_1^\circ \cup \Bar{S_2}$, then
 $\sigma \in \Dom(D') - S_1$.
 Hence $\sigma \in \Dom(D_1')$.
 If there is some $\sigma_1 \subsetneqq \sigma$ such that $\sigma_1 \in S_1$,
 then by letting $\sigma = \sigma_1 \sigma_2$ and $\sigma_3 = C'(\sigma_1)$,
 $D_1(\sigma) = D_1(\sigma_3 \sigma_2)$ by definition of $C'$,
 by the induction hypothesis for $\sigma_3 \sigma_2$ it is $D_1'(\sigma_3 \sigma_2)$,
 and it is $D_1'(\sigma)$ by definition of $D_1'$.
 Hence we have shown $\Dom(D_1) \subseteq \Dom(D_1')$.

Hence $D_1=D_1'$.
\end{proof}

 \section{A counterexample to cut-elimination in \CLKID}
 \label{sec:main}
 In this section,  we prove the following theorem, which is the main theorem.

 \begin{theorem} 
  \label{thm:main}
  Let $0$ be a constant symbol, 
  $s$ be a function symbol of arity one, and 
  $\mathindAddonesy$ and $\mathindAddtwosy$ be inductive predicates of arity three
  with the following productions: \medskip
  
  \begin{inlineprooftree}
      \AxiomC{\phantom{$\mathindAddone 0yy$}}
      \UnaryInfC{$\mathindAddone 0yy$}
  \end{inlineprooftree},
  \qquad
  \begin{inlineprooftree}
   \AxiomC{$\mathindAddone xyz$}
   \UnaryInfC{$\mathindAddone{sx}{y}{sz}$}
  \end{inlineprooftree},
  \qquad
  \begin{inlineprooftree}
   \AxiomC{\phantom{$\mathindAddtwo 0yy$}}
   \UnaryInfC{$\mathindAddtwo 0yy$}
  \end{inlineprooftree},
  \qquad
  \begin{inlineprooftree}
   \AxiomC{$\mathindAddtwo{x}{sy}{z}$}
   \UnaryInfC{$\mathindAddtwo{sx}{y}{z}$}
  \end{inlineprooftree}. \medskip

  \begin{enumerate}
   \item $\mathindAddtwo{x}{y}{z} \fCenter \mathindAddone{x}{y}{z}$ is provable in \CLKID. 
	 \label{item:thm-main-provable}
   \item $\mathindAddtwo{x}{y}{z} \fCenter \mathindAddone{x}{y}{z}$ is not cut-free provable in \CLKID.
	 \label{item:thm-main-not-cut-free-provable}
  \end{enumerate}
 \end{theorem}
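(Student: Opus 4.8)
The plan is to handle the two items separately, leaning on the auxiliary lemma $\mathindAddone{x}{sy}{z} \fCenter \mathindAddone{sx}{y}{z}$, whose cyclic proof is already given in Example~\ref{ex:proofAddone}, for item~\ref{item:thm-main-provable}, and on a path-extraction argument for item~\ref{item:thm-main-not-cut-free-provable}. For item~\ref{item:thm-main-provable}, I would build a \CLKID\ proof of $\mathindAddtwo{x}{y}{z}\fCenter\mathindAddone{x}{y}{z}$ by applying (\rulename{Case $\mathindAddtwosy$}) to the antecedent. The base branch reduces, after the case equalities are eliminated by ($=$ L), to $\fCenter\mathindAddone{0}{y}{y}$ and closes by ($\mathindAddonesy$ R$_1$). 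The step branch reduces to $\mathindAddtwo{x'}{sy}{z}\fCenter\mathindAddone{sx'}{y}{z}$, on which I would apply (\rulename{Cut}) with the formula $\mathindAddone{x'}{sy}{z}$. After weakening, the left premise becomes $\mathindAddtwo{x'}{sy}{z}\fCenter\mathindAddone{x'}{sy}{z}$, which I turn by (\rulename{Subst}) with $x:=x',\,y:=sy,\,z:=z$ into a bud whose companion is the root; the right premise becomes $\mathindAddone{x'}{sy}{z}\fCenter\mathindAddone{sx'}{y}{z}$, an instance of the lemma of Example~\ref{ex:proofAddone}. To see the global trace condition, note that every infinite path either traverses the outer cycle infinitely often, along which the trace on the $\mathindAddtwosy$-atom progresses at the (\rulename{Case $\mathindAddtwosy$}) step, or is eventually confined to the embedded copy of the Example's subproof, whose $\mathindAddonesy$-trace progresses infinitely.

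For item~\ref{item:thm-main-not-cut-free-provable}, I would argue by contradiction. Assume a cut-free \CLKID\ proof exists; by Proposition~\ref{prop:cyclenormalization} I may take it cycle-normal, so that every companion is an ancestor of its bud. Since all formulas in play are atomic, an easy induction shows that every sequent occurring in the proof has the form $\vec{E},\,\mathindAddtwo{\cdots}{\cdots}{\cdots}\fCenter \vec{F},\,\mathindAddone{\cdots}{\cdots}{\cdots}$ with $\vec E,\vec F$ sets of equalities: no available rule moves an $\mathindAddtwosy$-atom into the consequent or an $\mathindAddonesy$-atom into the antecedent. Starting from the root I construct a path $\pi$ that at each (\rulename{Case $\mathindAddtwosy$}) follows the step branch and otherwise follows the forced premise. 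The crucial invariant is that along $\pi$ the third argument of the consequent $\mathindAddonesy$-atom remains a variable: the step production preserves the third argument, and only a base branch, which is off $\pi$, could equate it to a successor term or to $0$. Consequently neither ($\mathindAddonesy$ R$_1$) nor ($\mathindAddonesy$ R$_2$) is ever applicable on $\pi$; no (\rulename{Axiom}) applies, as the antecedent contains only $\mathindAddtwosy$-atoms while the consequent contains $\mathindAddonesy$-atoms; and by soundness of cut-free \CLKID\ no (\rulename{Weak}) on $\pi$ can drop either principal atom, since that would yield an invalid, hence unprovable, premise. Thus $\pi$ never reaches a closed leaf and must terminate at a bud $b$.

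The contradiction then comes from a substitution-invariant measure. For a sequent on $\pi$ let $n$ be the unique number for which the second argument of its $\mathindAddtwosy$-atom is $s^{n}$ applied to the second argument of its $\mathindAddonesy$-atom; this is well defined because both second arguments are built over one and the same variable, it is preserved by (\rulename{Subst}), ($=$ L) and (\rulename{Weak}) because they rewrite the two occurrences uniformly, and each (\rulename{Case $\mathindAddtwosy$}) step increases it by exactly one. By cycle-normality the companion $c=\mathcompanion{C}(b)$ lies on $\pi$ strictly below $b$, and the global trace condition forbids a progress-free cycle, so the segment of $\pi$ from $c$ to $b$ contains at least one (\rulename{Case $\mathindAddtwosy$}) step; hence the value of $n$ is strictly greater at $b$ than at $c$. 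But $b$ and $c$ carry the very same sequent, and therefore the same value of $n$, which is absurd.

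I expect the main obstacle to be the measure argument of the last paragraph, specifically the verification that no occurrence of (\rulename{Subst}), ($=$ L) or (\rulename{Weak}) along $\pi$ can disturb the two invariants, that is, neither turning the consequent's third argument into a non-variable nor decreasing $n$. This requires a careful, rule-by-rule inspection of how these structural rules act on sequents of the restricted shape isolated above, used together with the soundness of cut-free \CLKID\ to discard any premise that fails to be valid.
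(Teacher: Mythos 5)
Your proof of item \ref{item:thm-main-provable} is essentially the paper's own (Figure \ref{fig:counter-ex}): \rulename{Case $\mathindAddtwosy$} at the root, the base branch closed by (\rulename{$\mathindAddonesy$ R${}_1$}), and the step branch closed by a cut on $\mathindAddone{x_1}{sy_1}{z_1}$ whose left premise folds back to the root and whose right premise is the cyclic lemma of Example \ref{ex:proofAddone}; this part is correct.

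Your argument for item \ref{item:thm-main-not-cut-free-provable}, however, has a genuine gap. It rests on the assumption that every sequent on your path $\pi$ carries exactly one $\mathindAddtwosy$-atom and one $\mathindAddonesy$-atom whose second arguments are powers of $s$ applied to one and the same variable, so that the measure $n$ is always defined and increases by one at every \rulename{Case $\mathindAddtwosy$} step. Neither part of this holds. Sequents are sets with implicit contraction, so the right assumption of \rulename{Case $\mathindAddtwosy$} may retain the principal formula alongside the freshly introduced $\mathindAddtwo{x}{sy}{z}$; antecedents can therefore accumulate several $\mathindAddtwosy$-atoms, the rule may later be applied to any one of them, and a \rulename{Case} step applied to an atom other than the one you are measuring leaves your $n$ unchanged. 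Moreover the second argument of the new atom is $sy$ for a \emph{fresh} variable $y$, tied to the old one only through the equality $b=y$, which a later \rulename{Weak} can discard while keeping the sequent valid (another $\mathindAddtwosy$-atom may still support the consequent), after which the two second arguments live over unrelated variables and $n$ is simply undefined. Figure \ref{fig:compare} of the paper exhibits exactly this failure mode: the rightmost path has a bud and companion carrying the same sequent $\mathindAddtwo{x_1}{sy_1}{z_1}, \mathindAddtwo{x}{y}{z} \vdash \mathindAddone{x}{y}{z}$, the cycle contains a progressing \rulename{Case $\mathindAddtwosy$} step, and yet no contradiction arises, because the progressing atom's measure relative to $\mathindAddone{x}{y}{z}$ is undefined while the untouched atom's measure stays $0$. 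This is precisely why the paper introduces the congruence $\mathequivrel{\Gamma}$, an index that may take the value $\bot$, index sequents, and switching points, and why its contradiction is not a single bud-versus-companion comparison but the construction of infinitely many distinct switching points $c_0, c_1, \dots$: along an infinite index path the index either increases unboundedly over a finite set of sequents (impossible) or eventually becomes $\bot$, and the latter case forces a switching point whose left assumption is again an index sequent, restarting the argument strictly higher in the finite tree. Your proposal covers only the first horn of that dichotomy.
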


 This theorem means that
 $\mathindAddtwo{x}{y}{z} \fCenter \mathindAddone{x}{y}{z}$ is a counterexample to cut-elimination in \CLKID.

 Note that $\mathindAddonesy$ and $\mathindAddtwosy$ in the theorem are the same predicates
 in Example \ref{ex:rules_for_add}.

 \subsection{The outline of the proof}
 Before proving the theorem, 
 we outline our proof for \ref{item:thm-main-not-cut-free-provable} of the theorem.

 Assume there exists a cut-free \CLKID\ proof of
 $\mathindAddtwo{x}{y}{z} \fCenter \mathindAddone{x}{y}{z}$.
 Because of a technical issue, we use \CLKIDa, 
 a cyclic proof system with the same provability as \CLKID\ 
 whose inference rules are the same as \CLKID\
 except for the rule (\rulename{$=$ L}) (Definition \ref{def:CLKIDa}).
 We show that there exists a cut-free cycle-normal \CLKIDa\ proof of the sequent 
 (Proposition \ref{prop:CLKIDa}).
 Let $\mathproofcf$ be the \CLKIDa\ proof. 

 Next, we define the relation $\mathequivrel{\Gamma}$ for a sequent $\Gamma\fCenter\Delta$ 
 to be the smallest congruence relation on terms containing $t_{1} = t_{2} \in \Gamma$ 
 (Definition \ref{def:equivrel}).
 Then, we define the index of $\mathindAddtwo{a}{b}{c}$ in a sequent $\Gamma\fCenter\Delta$
 (Definition \ref{def:index}). 
 If there uniquely exists $n-m$ 
 such that $n$,  $m \in \mathnat$  and $s^{n} b \mathequivrel{\Gamma} s^{m} b'$ 
 for some $\mathindAddone{a'}{b'}{c'} \in \Delta$,
 then the index of $\mathindAddtwo{a}{b}{c}$ is defined as $m-n$.
 If $s^{n} b \not\mathequivrel{\Gamma} s^{m} b'$
 for any $\mathindAddone{a'}{b'}{c'} \in \Delta$ and any $n$,  $m \in \mathnat$,
 the index of $\mathindAddtwo{a}{b}{c}$ is defined as $\bot$.
 The index of $\mathindAddtwo{a}{b}{c}$ may be undefined,
 but the index is always defined in a special sequent, called an \emph{index sequent} 
 (Definition \ref{def:annoying}).
 A \emph{switching point} is defined as a node 
 that is the conclusion of (\rulename{Case $\mathindAddtwosy$}) with
 the principal formula whose index is $\bot$ (Definition \ref{def:bad_app}).
 An \emph{index path} is defined as
 a path $\mleft( \Gamma_{i}\fCenter\Delta_{i} \mright)_{0\leq i <\alpha}$ 
 of $\mathtreeunfolding{\mathproofcf}$ such that
 $\Gamma_{0}\fCenter\Delta_{0}$ is an index sequent and 
 $\Gamma_{i}\fCenter\Delta_{i}$ is a switching point
 if $\Gamma_{i+1}\fCenter\Delta_{i+1}$ is the left assumption of $\Gamma_{i}\fCenter\Delta_{i}$ 
 (Definition \ref{def:bad-path}).
 Then, we have
 \outlinelemma The root is an index sequent. \label{outlinelemma:root}
 \outlinelemma Every sequent in an index path is an index sequent (Lemma \ref{lemma:annoying}). \label{outlinelemma:index-path}
 \outlinelemma There exists a switching point on an infinite index path (Lemma \ref{lemma:key_lemma}). \label{outlinelemma:key}
 \outlinelemma The rightmost path from an index sequent is infinite (Lemma \ref{lemma:rightmost}). \label{outlinelemma:rightmost-path}
  
  At last, 
  we show there exist infinite nodes in the derivation tree $\mathdertreecf$.
  Because of (\ref{outlinelemma:root}) and (\ref{outlinelemma:rightmost-path}), the rightmost path from the root is an infinite index path. 
  By (\ref{outlinelemma:key}), there exists a switching point on the path.
  Let $c_{0}$ be the node of the smallest height among such switching points.
  Let $a_{0}$ be the left assumption of $c_{0}$.
  By (\ref{outlinelemma:index-path}), the sequent of $a_{0}$ is an index sequent.
  By (\ref{outlinelemma:rightmost-path}), the rightmost path from $a_{0}$ is infinite.
  Therefore, there exists a bud $b_{0}$ in the rightmost path from $a_{0}$.
  By (\ref{outlinelemma:key}) and the definition of $c_{0}$, 
  there exists a switching point between $a_{0}$ and $b_{0}$ .
  Let $c_{1}$ be the node of the smallest height among such switching points.
  The nodes $c_{0}$ and $c_{1}$ are distinct by their definitions.
  We repeat this process as in Figure \ref{fig:idea}. 
  Finally, we get a set of infinite nodes $\mathsetintension{c_{i}}{i\in\mathnat}$.
  This is a contradiction since the set of nodes of $\mathdertreecf$ is finite.
  
  \begin{figure}[tbhp]
 \begin{prooftree}
	   \AxiomC{$\begin{aligned}&\vdots\\&a_{2}\tikzmark{node-a2}\end{aligned}$}
	   \AxiomC{$\begin{aligned}&b_{1}\tikzmark{node-b1}\\&\vdots\end{aligned}$}
  \rulenamelabel{Case $\mathindAddtwosy$\tikzmark{node-M2}}
  \BinaryInfC{$c_{2}$\tikzmark{node-c2}}
  \noLine
	      \UnaryInfC{$\begin{aligned}&\vdots\\&a_{1}\tikzmark{node-a1}\end{aligned}$}
  
	   \AxiomC{$\begin{aligned}&b_{0}\tikzmark{node-b0}\\&\vdots\end{aligned}$} 
  
  \rulenamelabel{Case $\mathindAddtwosy$\tikzmark{node-M1}}
  \BinaryInfC{$c_{1}$\tikzmark{node-c1}}
  \noLine
	      \UnaryInfC{$\begin{aligned}&\vdots\\&a_{0}\tikzmark{node-a0}\end{aligned}$}
	   \AxiomC{$\begin{aligned}&b\tikzmark{node-b}\\&\vdots\end{aligned}$}  

  \rulenamelabel{Case $\mathindAddtwosy$\tikzmark{node-Mr}}
	       \BinaryInfC{$\begin{aligned}&c_{0}\tikzmark{node-c0}\\&\vdots\end{aligned}$}
  \noLine
  \UnaryInfC{$\mathindAddtwo{x}{y}{z} \fCenter$ \tikzmark{node-root} $\mathindAddone{x}{y}{z}$}
 \end{prooftree}
 \begin{tikzpicture}[remember picture, overlay, thick, relative, auto, line width=0.3pt]
  \coordinate (b) at ({pic cs:node-b});
  \coordinate (root) at ({pic cs:node-root});
  \coordinate (M) at ({pic cs:node-Mr});
  \coordinate (c0) at ({pic cs:node-c0});
  \coordinate (b0) at ({pic cs:node-b0});
  \coordinate (a0) at ({pic cs:node-a0});
  \coordinate (M1) at ({pic cs:node-M1});
  \coordinate (c1) at ({pic cs:node-c1});
  \coordinate (b1) at ({pic cs:node-b1});
  \coordinate (a1) at ({pic cs:node-a1});
  \coordinate (M2) at ({pic cs:node-M2});
  \coordinate (c2) at ({pic cs:node-c2});
  
  \draw[->]($(b)$)..controls ($(M)+(5.0em, 0)$) .. ($(c0)!0.5!(root)$);

  \draw[->] ($(b0)$)..controls ($(M1)+(5.0em, 0)$) .. ($(c1)!0.5!(a0)$);

  \draw[->] ($(b1)$) ..controls ($(M2)+(5.0em, 0)$) .. ($(c2)!0.5!(a1)$);
 \end{tikzpicture}
 \caption{Construction of $\mleft(c_{i}\mright)_{i\in\mathnat}$} 
 \label{fig:idea}
\end{figure}

  \subsection{Another cyclic proof system \CLKIDa}
  We give some definitions and lemmas for proving \ref{item:thm-main-not-cut-free-provable}
  of Theorem \ref{thm:main}.
  We consider a cyclic proof system \CLKIDa, which is obtained by changing 
  the left introduction rule for ``='' slightly.
  \begin{definition}[\CLKIDa]
   \label{def:CLKIDa}
   \CLKIDa\ is the cyclic proof system obtained by replacing (\rulename{$=$ L}) with
   \begin{center}
    \begin{inlineprooftree}
     \AxiomC{$\Gamma \mleft[x := u, y := t\mright], t = u \fCenter \Delta \mleft[x := u, y := t\mright]$}
     \RightLabel{(\rulenameLa)}
     \UnaryInfC{$\Gamma \mleft[ x := t, y := u \mright], t = u \fCenter \Delta \mleft[x := t, y := u\mright]$}
    \end{inlineprooftree}.
   \end{center}
   The principal formula of the rule  of (\rulenameLa) is defined as $t=u$.

   Definitions of derivation trees, companions, pre-proofs, proofs for \CLKIDa\ are similar to \CLKID. 
  \end{definition} 
 
  The provability of \CLKID\ is the same as that of \CLKIDa, 
  since (\rulename{$=$ L}) is derivable in \CLKIDa\ by 
  \begin{center}
   \begin{inlineprooftree}
    \AxiomC{$\Gamma \mleft[x := u, y := t\mright] \fCenter \Delta \mleft[x := u, y := t\mright]$}
    \rulenamelabel{Weak}
    \UnaryInfC{$\Gamma \mleft[x := u, y := t\mright], t = u \fCenter \Delta \mleft[x := u, y := t\mright]$}
    \RightLabel{(\rulenameLa)}
    \UnaryInfC{$\Gamma \mleft[ x := t, y := u \mright], t = u \fCenter \Delta \mleft[x := t, y := u\mright]$}
   \end{inlineprooftree}.
  \end{center}

  \CLKIDa\ is necessary because of Lemma \ref{lemma:index} \ref{item:lemma-index_not_change}.
  For \CLKID, Lemma \ref{lemma:index} \ref{item:lemma-index_not_change} does not hold.
  \begin{prop}%
   \label{prop:CLKIDa}
   If there exists a cut-free \CLKID\ proof of $\Gamma \fCenter \Delta$, 
   then there exists a cut-free cycle-normal \CLKIDa\ proof of $\Gamma \fCenter \Delta$.
  \end{prop}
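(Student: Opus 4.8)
The plan is to prove the proposition in two moves: first convert the given cut-free \CLKID\ proof into a cut-free \CLKIDa\ proof of the same sequent, and then apply cycle-normalization to make it cycle-normal.

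For the first move I would exploit the derivation already displayed after Definition \ref{def:CLKIDa}: every application of (\rulename{$=$ L}) with principal formula $t=u$ is simulated by a (\rulename{Weak}) step followed by a (\rulenameLa) step, and — crucially — the top sequent $\Gamma[x:=u,y:=t] \fCenter \Delta[x:=u,y:=t]$ and bottom sequent $\Gamma[x:=t,y:=u], t=u \fCenter \Delta[x:=t,y:=u]$ of this two-node gadget coincide with the premise and conclusion of the original (\rulename{$=$ L}) inference. Hence, replacing each (\rulename{$=$ L}) node of the finite derivation tree $D$ by this gadget yields a finite \CLKIDa\ derivation tree $D'$; the companion function $C$ transfers, since each companion sequent (the conclusion of its inference) is preserved by the replacement, while buds, being leaves labelled \rulename{Bud}, are never (\rulename{$=$ L}) conclusions and so are untouched apart from an address shift. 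The resulting pre-proof $(D',C')$ is cut-free, as the gadget introduces no (\rulename{Cut}).

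The essential point is that $(D',C')$ is a genuine \CLKIDa\ proof, i.e.\ that its tree-unfolding satisfies the global trace condition. Its tree-unfolding is obtained from that of $(D,C)$ by inserting the same gadget at each (\rulename{$=$ L}) node, so infinite paths of the two tree-unfoldings are in bijection, each path of the new one merely carrying extra \rulename{Weak} nodes. I would then check that traces transfer: reading the gadget from conclusion to premise, a trace formula $\tau \equiv F[x:=t,y:=u]$ is sent by (\rulenameLa) to $F[x:=u,y:=t]$ and then left fixed by (\rulename{Weak}), which, being neither a case rule nor an equality rule, satisfies $\tau_{i+1}\equiv\tau_i$ (note $\tau$ is an inductive-predicate formula, hence lies in $\Gamma[x:=u,y:=t]$ and is never the weakened equality $t=u$, so it survives the \rulename{Weak} step). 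The net effect $F[x:=t,y:=u]\mapsto F[x:=u,y:=t]$ is exactly the original (\rulename{$=$ L}) trace step, so every trace of the old tree-unfolding lifts to a trace of the new one and conversely; since no new (\rulename{Case}) inference is created, progress points are preserved verbatim, and the global trace condition carries over.

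For the second move I would invoke Proposition \ref{prop:cyclenormalization}. Although it is stated for \CLKID, its proof is purely structural — it manipulates derivation trees, tree-unfoldings, and companions using only finite-branching, regularity and K\"onig's lemma, and never refers to a specific inference rule — so the identical argument applies to \CLKIDa, yielding a cycle-normal \CLKIDa\ pre-proof $(D'',C'')$ whose tree-unfolding equals that of $(D',C')$. The global trace condition, being a property of the tree-unfolding, therefore still holds, and cut-freeness is retained because cycle-normalization only copies nodes of the (cut-free) tree-unfolding and relabels some inner nodes as buds, introducing no new inference. Thus $(D'',C'')$ is the required cut-free cycle-normal \CLKIDa\ proof. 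I expect the main obstacle to be the trace verification of the first move — in particular, confirming that the correspondence between infinite paths respects traces and progress points through the inserted gadgets; the remaining work is bookkeeping on node addresses, which the matching of the gadget's top and bottom sequents renders routine.
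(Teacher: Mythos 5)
Your proof is correct and takes essentially the same approach as the paper, which performs the identical two steps in the opposite order: it first invokes Proposition \ref{prop:cyclenormalization} to cycle-normalize within \CLKID\ (so the proposition can be used exactly as stated) and only then replaces each (\rulename{$=$ L}) by the (\rulename{Weak})/(\rulenameLa) gadget, observing that this replacement disturbs neither the sequents of buds and companions nor cycle-normality. Your additional verification that traces and progress points pass through the inserted gadget is sound and in fact more detailed than the paper's one-line justification.
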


  \begin{proof} 
   Let $\mathprooffig{P}_{0}$ be a cut-free \CLKID\ proof of $\Gamma \fCenter \Delta$.
   By Proposition \ref{prop:cyclenormalization},
   there exists a cycle-normal \CLKID\ pre-proof  $\mathprooffig{P}_{1}$
   whose tree-unfolding is the same as that of $\mathprooffig{P}_{0}$. 
   Since the tree-unfolding of $\mathprooffig{P}_{1}$ satisfies the global trace condition,
   $\mathprooffig{P}_{1}$ is a cycle-normal cut-free \CLKID\ proof of $\Gamma \fCenter \Delta$.
  
   A cut-free cycle-normal \CLKID\ proof of $\Gamma \fCenter \Delta$ is
   transformed into the \CLKIDa\ proof of $\Gamma \fCenter \Delta$ by replacing all applications
   from (\rulename{$=$ L}) to (\rulenameLa) and weakening.
   Since this replacement does not change the rules except (\rulename{$=$ L}) in the \CLKID\ proof
   and the sequents of buds and companions, the obtained \CLKIDa\ proof is cut-free and cycle-normal.
  \end{proof}

  \subsection{Assuming cut-free proof} \label{subsec:assuming}
  In Sections \ref{subsec:assuming}, \ref{subsec:Equality}, \ref{subsec:Index} and \ref{subsec:proof},
  we assume there exists a cut-free \CLKID\ proof of 
  $\mathindAddtwo{x}{y}{z} \fCenter \mathindAddone{x}{y}{z}$
  for contradiction.
  By Proposition \ref{prop:CLKIDa}, there exists a cut-free cycle-normal \CLKIDa\ proof of 
  $\mathindAddtwo{x}{y}{z} \fCenter \mathindAddone{x}{y}{z}$.
  \emph{We write $\mleft(\mathdertreecf, \mathcompanioncf\mright)$ 
  for a cut-free cycle-normal \CLKIDa\ proof of $\mathindAddtwo{x}{y}{z} \fCenter \mathindAddone{x}{y}{z}$.}

 \begin{rem}%
  \label{rem:cut-freeness}
  Let $\Gamma \fCenter \Delta$ be a sequent in $\mathproofcf$.
  By induction on the height of sequents in $\mathdertreecf$, we can easily show the following statements:
  
   \begin{enumerate}%
    \item $\Gamma$ consists of only atomic formulas with $=$, $\mathindAddtwosy$.
    \item $\Delta$ consists of only atomic formulas with $\mathindAddonesy$.
    \item A term in $\Gamma$ and $\Delta$ is of the form $s^{n}0$ or $s^{n}x$ with some variable $x$.
	  \label{s-term}
    \item The possible rules in $\mathproofcf$ are
	  (\rulename{Weak}), (\rulename{Subst}), (\rulenameLa), 
	  (\rulename{Case $\mathindAddtwosy$}),	(\rulename{$\mathindAddonesy$ R${}_1$}) and
	  (\rulename{$\mathindAddonesy$ R${}_2$}).
   \end{enumerate}
 \end{rem}

 By \ref{s-term}, without loss of generality, we can assume terms
 in Sections \ref{subsec:Equality}, \ref{subsec:Index} and \ref{subsec:proof} 
 are of the form $s^{n}0$ or $s^{n}x$ with some variable $x$.
  
  \subsection{Equality in a sequent} \label{subsec:Equality}
  In this section, we define the equality $\mathequivrel{\Gamma}$ in a sequent $\Gamma\fCenter\Delta$ 
  and show some properties.

 \begin{definition}[$\mathequivrel{\Gamma}$]%
  \label{def:equivrel}
  For a set of formulas $\Gamma$,
  we define the relation $\mathequivrel{\Gamma}$ to be the smallest congruence relation on terms
  which satisfies the condition that $t_1 = t_2 \in \Gamma$ implies $t_1 \mathequivrel{\Gamma} t_2$.
 \end{definition}
 
 \begin{definition}[$\mathdeprel\Gamma$]%
  \label{def:deprel}
  For a set of formulas $\Gamma$ and terms $t_{1}$, $t_{2}$,
  we define $t_{1} \mathdeprel{\Gamma} t_{2}$
  by $s^{n} t_{1} \mathequivrel{\Gamma} s^{m} t_{2}$ for some  $n, m \in \mathnat$.
 \end{definition}

 For a term $t$, we define $\mathvars{t}$ as a variable or a constant in $t$.
 Note that $\mathdeprel{\Gamma}$ is a congruence relation and
 also note that $t \mathdeprel{\Gamma} u$ if $\mathvars{t}=\mathvars{u}$.

 \begin{lem}%
  \label{lemma:subst_rel}
  Let $\Gamma$ be a set of formulas and $\theta$ be a substitution.
  \begin{enumerate}
   \item For any terms $t_{1}$ and $t_{2}$,
	 $t_{1}\mleft[\theta\mright] \mathequivrel{\Gamma\mleft[\theta\mright]} t_{2}\mleft[\theta\mright]$ 
	 if $t_{1} \mathequivrel{\Gamma} t_{2}$. 
	 \label{item:lemma-subst_rel-equiv}
   \item For any terms $t_{1}$ and $t_{2}$,
	 $t_{1} \not\mathdeprel{\Gamma} t_{2}$
	 if $t_{1}\mleft[\theta\mright] \not\mathdeprel{\Gamma\mleft[\theta\mright]} t_{2}\mleft[\theta\mright]$. 
	 \label{item:lemma-subst_rel-deprel}
  \end{enumerate}
 \end{lem}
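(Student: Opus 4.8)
The lemma asserts two facts about how the congruence $\mathequivrel{\Gamma}$ and the derived relation $\mathdeprel{\Gamma}$ interact with a substitution $\theta$: (1) $t_1 \mathequivrel{\Gamma} t_2$ implies $t_1[\theta] \mathequivrel{\Gamma[\theta]} t_2[\theta]$; and (2) $t_1[\theta] \not\mathdeprel{\Gamma[\theta]} t_2[\theta]$ implies $t_1 \not\mathdeprel{\Gamma} t_2$, which is the contrapositive of the monotonicity statement "$t_1 \mathdeprel{\Gamma} t_2$ implies $t_1[\theta] \mathdeprel{\Gamma[\theta]} t_2[\theta]$."

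Let me think about how I'd prove these.

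**Part (1).** The relation $\mathequivrel{\Gamma}$ is defined as the *smallest* congruence relation on terms containing all pairs $(t_1, t_2)$ with $t_1 = t_2 \in \Gamma$. The standard way to prove something holds for the smallest relation with a closure property is induction on the generation of that relation. So I'd want to show: the set of pairs $(t_1, t_2)$ such that $t_1[\theta] \mathequivrel{\Gamma[\theta]} t_2[\theta]$ is itself a congruence relation containing the generators of $\mathequivrel{\Gamma}$; then since $\mathequivrel{\Gamma}$ is the smallest such, it's contained in this set, giving exactly (1).

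To carry that out: (i) if $t_1 = t_2 \in \Gamma$, then $t_1[\theta] = t_2[\theta] \in \Gamma[\theta]$, so $t_1[\theta] \mathequivrel{\Gamma[\theta]} t_2[\theta]$ — the generators are handled. (ii) Reflexivity, symmetry, transitivity of the target relation are inherited from those of $\mathequivrel{\Gamma[\theta]}$ directly. (iii) For the congruence (compatibility with function symbols) clause: if $t_i[\theta] \mathequivrel{\Gamma[\theta]} t_i'[\theta]$ for all $i$, then $(f\,t_1\cdots t_n)[\theta] = f\,t_1[\theta]\cdots t_n[\theta] \mathequivrel{\Gamma[\theta]} f\,t_1'[\theta]\cdots t_n'[\theta] = (f\,t_1'\cdots t_n')[\theta]$, using that substitution commutes with term formation and that $\mathequivrel{\Gamma[\theta]}$ is itself a congruence. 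This is the essentially routine direction.

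**Part (2).** This is the harder half and the one flagged as motivating the whole switch to \CLKIDa. I'd prove the contrapositive, $t_1 \mathdeprel{\Gamma} t_2 \implies t_1[\theta]\mathdeprel{\Gamma[\theta]} t_2[\theta]$. Unfolding $\mathdeprel{\Gamma}$: $t_1 \mathdeprel{\Gamma} t_2$ means $s^n t_1 \mathequivrel{\Gamma} s^m t_2$ for some $n,m$. Apply part (1) with this $\theta$ to get $(s^n t_1)[\theta] \mathequivrel{\Gamma[\theta]} (s^m t_2)[\theta]$. Since $s$ is a unary function symbol and substitution does not touch it, $(s^n t_1)[\theta] = s^n(t_1[\theta])$ and likewise on the right, so $s^n(t_1[\theta]) \mathequivrel{\Gamma[\theta]} s^m(t_2[\theta])$, which is exactly $t_1[\theta] \mathdeprel{\Gamma[\theta]} t_2[\theta]$. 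So in fact (2) reduces cleanly to (1) once I observe that prefixing with $s$ commutes with $\theta$.

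**Main obstacle.** The only genuine content is the induction on the generation of $\mathequivrel{\Gamma}$ in part (1); I should take care to state the congruence-closure clause precisely, since "smallest congruence" silently bundles reflexivity, symmetry, transitivity, and compatibility with every function symbol, and the induction must check all of these for the pulled-back relation. I do not expect any difficulty with part (2) beyond the syntactic commutation $(s^n t)[\theta] = s^n(t[\theta])$. I would state part (1) as the core claim, prove it by the closure-minimality argument, and then derive part (2) as an immediate corollary.
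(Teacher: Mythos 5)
Your proposal is correct and follows essentially the same route as the paper: part (1) by induction on the generation of $\mathequivrel{\Gamma}$ (your closure-minimality phrasing is the same argument, with the base case $t_1=t_2\in\Gamma$ handled identically and the congruence clauses being routine), and part (2) reduced to part (1) via the definition of $\mathdeprel{\Gamma}$ and the commutation $(s^n t)[\theta]\equiv s^n(t[\theta])$. The paper's proof is just terser, showing only the base case of the induction and citing Definition \ref{def:deprel} for part (2).
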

 
 \begin{proof}
  \ref{item:lemma-subst_rel-equiv}
  We prove the statement by induction on the definition of $\mathequivrel{\Gamma}$. 
  We only show the base case.
  Assume ${t_{1}=t_{2}}\in\Gamma$.
  Then, ${t_{1}\mleft[\theta\mright]=t_{2}\mleft[\theta\mright]}\in\Gamma\mleft[\theta\mright]$.
  Thus, $t_{1}\mleft[\theta\mright] \mathequivrel{\Gamma\mleft[\theta\mright]} t_{2}\mleft[\theta\mright]$.

  \ref{item:lemma-subst_rel-deprel}
  By Definition \ref{def:deprel} and \ref{item:lemma-subst_rel-equiv}, we have the statement.
 \end{proof}

 \begin{lem}%
  \label{lemma:eq_rel}
  Let $\Gamma$ be a set of formulas, $u_{1}$, $u_{2}$ be terms, $v_{1}$, $v_{2}$ be variables, 
  $\Gamma_{1} \equiv \mleft(\Gamma\mleft[v_{1}:=u_{1}, v_{2}:=u_{2}\mright], u_{1}=u_{2}\mright)$, and 
  $\Gamma_{2} \equiv \mleft(\Gamma\mleft[v_{1}:=u_{2}, v_{2}:=u_{1}\mright], u_{1}=u_{2}\mright)$.
  \begin{enumerate}
   \item For any terms $t_{1}$ and $t_{2}$,
	 $t_{1}\mleft[v_{1}:=u_{1}, v_{2}:=u_{2}\mright] \mathequivrel{\Gamma_{1}} t_{2}\mleft[v_{1}:=u_{1}, v_{2}:=u_{2}\mright]$ 
	 if  
	 $t_{1}\mleft[v_{1}:=u_{2}, v_{2}:=u_{1}\mright] \mathequivrel{\Gamma_{2}} t_{2}\mleft[v_{1}:=u_{2}, v_{2}:=u_{1}\mright]$. 
	 \label{item:lemma-eq_rel-equiv}
   \item For any terms $t_{1}$ and $t_{2}$,
	 $t_{1}\mleft[v_{1}:=u_{2}, v_{2}:=u_{1}\mright] \not\mathdeprel{\Gamma_{2}} t_{2}\mleft[v_{1}:=u_{2}, v_{2}:=u_{1}\mright]$ 
	 if 
	 $t_{1}\mleft[v_{1}:=u_{1}, v_{2}:=u_{2}\mright] \not\mathdeprel{\Gamma_{1}} t_{2}\mleft[v_{1}:=u_{1}, v_{2}:=u_{2}\mright]$. \label{item:lemma-eq_rel-deprel}
  \end{enumerate}
 \end{lem}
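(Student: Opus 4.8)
The plan is to exploit that both $\Gamma_{1}$ and $\Gamma_{2}$ contain the equality $u_{1}=u_{2}$, so that interchanging $u_{1}$ and $u_{2}$ at the positions of $v_{1}$ and $v_{2}$ is invisible modulo the respective congruence. Abbreviate $\theta_{1}$ for $\mleft[v_{1}:=u_{1}, v_{2}:=u_{2}\mright]$ and $\theta_{2}$ for $\mleft[v_{1}:=u_{2}, v_{2}:=u_{1}\mright]$. The first step is a sublemma: for every term $t$ we have $t\mleft[\theta_{1}\mright] \mathequivrel{\Gamma_{1}} t\mleft[\theta_{2}\mright]$, and likewise $t\mleft[\theta_{1}\mright] \mathequivrel{\Gamma_{2}} t\mleft[\theta_{2}\mright]$. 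I would prove this by structural induction on $t$: since $u_{1}=u_{2}\in\Gamma_{1}$ we have $u_{1}\mathequivrel{\Gamma_{1}}u_{2}$, which handles the variables $v_{1}$ and $v_{2}$ (whose images are swapped between $\theta_{1}$ and $\theta_{2}$), every other variable and every constant is fixed by both substitutions, and the inductive step for a compound term is immediate from the congruence property. The argument for $\Gamma_{2}$ is identical.

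The key step is then to show that $\mathequivrel{\Gamma_{1}}$ and $\mathequivrel{\Gamma_{2}}$ are the \emph{same} relation, for which it suffices to check that each generating pair of one congruence is identified by the other. The generators of $\mathequivrel{\Gamma_{2}}$ arise from $u_{1}=u_{2}$ and, for each equality $a=b\in\Gamma$, from $a\mleft[\theta_{2}\mright]=b\mleft[\theta_{2}\mright]\in\Gamma\mleft[\theta_{2}\mright]$. In $\mathequivrel{\Gamma_{1}}$ we have $u_{1}\mathequivrel{\Gamma_{1}}u_{2}$ directly, and, using the sublemma together with $a\mleft[\theta_{1}\mright]=b\mleft[\theta_{1}\mright]\in\Gamma\mleft[\theta_{1}\mright]\subseteq\Gamma_{1}$, we obtain $a\mleft[\theta_{2}\mright] \mathequivrel{\Gamma_{1}} a\mleft[\theta_{1}\mright] \mathequivrel{\Gamma_{1}} b\mleft[\theta_{1}\mright] \mathequivrel{\Gamma_{1}} b\mleft[\theta_{2}\mright]$. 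Hence $\mathequivrel{\Gamma_{2}}$ is contained in $\mathequivrel{\Gamma_{1}}$, and the symmetric computation (using the $\Gamma_{2}$-version of the sublemma) gives the reverse inclusion, so the two congruences coincide.

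With these in place, part \ref{item:lemma-eq_rel-equiv} is immediate: from $t_{1}\mleft[\theta_{2}\mright]\mathequivrel{\Gamma_{2}}t_{2}\mleft[\theta_{2}\mright]$ and the equality of the congruences we get $t_{1}\mleft[\theta_{2}\mright]\mathequivrel{\Gamma_{1}}t_{2}\mleft[\theta_{2}\mright]$, while the sublemma supplies $t_{1}\mleft[\theta_{1}\mright]\mathequivrel{\Gamma_{1}}t_{1}\mleft[\theta_{2}\mright]$ and $t_{2}\mleft[\theta_{1}\mright]\mathequivrel{\Gamma_{1}}t_{2}\mleft[\theta_{2}\mright]$, so transitivity closes the gap. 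For part \ref{item:lemma-eq_rel-deprel} I would argue by contraposition: by Definition \ref{def:deprel} a witness of $t_{1}\mleft[\theta_{2}\mright]\mathdeprel{\Gamma_{2}}t_{2}\mleft[\theta_{2}\mright]$ is an equivalence $s^{n}t_{1}\mleft[\theta_{2}\mright]\mathequivrel{\Gamma_{2}}s^{m}t_{2}\mleft[\theta_{2}\mright]$, which is the same as $\mleft(s^{n}t_{1}\mright)\mleft[\theta_{2}\mright]\mathequivrel{\Gamma_{2}}\mleft(s^{m}t_{2}\mright)\mleft[\theta_{2}\mright]$ since the unary $s$ commutes with substitution; applying part \ref{item:lemma-eq_rel-equiv} to the terms $s^{n}t_{1}$ and $s^{m}t_{2}$ then yields $\mleft(s^{n}t_{1}\mright)\mleft[\theta_{1}\mright]\mathequivrel{\Gamma_{1}}\mleft(s^{m}t_{2}\mright)\mleft[\theta_{1}\mright]$, that is $t_{1}\mleft[\theta_{1}\mright]\mathdeprel{\Gamma_{1}}t_{2}\mleft[\theta_{1}\mright]$.

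The main obstacle is the second step, namely recognizing and justifying that $\mathequivrel{\Gamma_{1}}$ and $\mathequivrel{\Gamma_{2}}$ coincide; everything else reduces to routine congruence bookkeeping once the ``swapping $u_{1}$ and $u_{2}$ is invisible modulo $u_{1}=u_{2}$'' sublemma has been isolated. I expect no difficulty from the generality of the terms, since the congruence argument never inspects the precise shape of $t_{1}$ and $t_{2}$.
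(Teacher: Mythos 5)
Your proof is correct, and it reaches the same computational core as the paper's proof by a slightly different (and arguably cleaner) decomposition. The paper proves part \ref{item:lemma-eq_rel-equiv} by induction on the derivation of $t_{1}\mleft[v_{1}:=u_{2},v_{2}:=u_{1}\mright]\mathequivrel{\Gamma_{2}}t_{2}\mleft[v_{1}:=u_{2},v_{2}:=u_{1}\mright]$ and spells out only the base case, which is exactly your check that each generator of $\mathequivrel{\Gamma_{2}}$ (either $u_{1}=u_{2}$ itself or the $\theta_{2}$-image of an equation of $\Gamma$) is identified by $\mathequivrel{\Gamma_{1}}$; you instead package the inductive part as the universal property of the smallest congruence, which disposes of all the non-base cases at once, and you make explicit the sublemma $t\mleft[\theta_{1}\mright]\mathequivrel{\Gamma_{i}}t\mleft[\theta_{2}\mright]$ that the paper leaves implicit. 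One small remark: for part \ref{item:lemma-eq_rel-equiv} only the inclusion $\mathequivrel{\Gamma_{2}}\subseteq\mathequivrel{\Gamma_{1}}$ is needed (together with the sublemma for $\Gamma_{1}$), so the reverse inclusion, while true and harmless, is not used. Your treatment of part \ref{item:lemma-eq_rel-deprel} by contraposition, applying part \ref{item:lemma-eq_rel-equiv} to $s^{n}t_{1}$ and $s^{m}t_{2}$, is exactly the intended reading of the paper's one-line proof via Definition \ref{def:deprel}.
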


 \begin{proof}%
  \ref{item:lemma-eq_rel-equiv}
  We prove the statement by induction on the definition of $\mathequivrel{\Gamma_{2}}$.
  We only show the base case.
  Assume
  ${t_{1}\mleft[v_{1}:=u_{2}, v_{2}:=u_{1}\mright]=t_{2}\mleft[v_{1}:=u_{2}, v_{2}:=u_{1}\mright]} \in \Gamma_{2}$
  to show $t_{1}\mleft[v_{1}:=u_{1}, v_{2}:=u_{2}\mright] \mathequivrel{\Gamma_{1}} t_{2}\mleft[v_{1}:=u_{1}, v_{2}:=u_{2}\mright]$.
  If $t_{1}\mleft[v_{1}:=u_{2}, v_{2}:=u_{1}\mright]=t_{2}\mleft[v_{1}:=u_{2}, v_{2}:=u_{1}\mright]$
  is $u_{1}=u_{2}$, then
  $t_{1}=t_{2}$ is $v_{2}=v_{1}$, $v_{2}=u_{2}$, $u_{1}=v_{1}$, or $u_{1}=u_{2}$.
  Therefore,
  $t_{1}\mleft[v_{1}:=u_{1}, v_{2}:=u_{2}\mright] \mathequivrel{\Gamma_{1}} t_{2}\mleft[v_{1}:=u_{1}, v_{2}:=u_{2}\mright]$.

  Assume $t_{1}\mleft[v_{1}:=u_{2}, v_{2}:=u_{1}\mright]=t_{2}\mleft[v_{1}:=u_{2}, v_{2}:=u_{1}\mright]$
  is not $u_{1}=u_{2}$.
  By case analysis, we have $t_{1}=t_{2}\in\Gamma$. Hence,
  ${t_{1}\mleft[v_{1}:=u_{1}, v_{2}:=u_{2}\mright]=t_{2}\mleft[v_{1}:=u_{1}, v_{2}:=u_{2}\mright]}\in \Gamma_{1}$. 
  Therefore, we have
  $t_{1}\mleft[v_{1}:=u_{1}, v_{2}:=u_{2}\mright]\mathequivrel{\Gamma_{1}}t_{2}\mleft[v_{1}:=u_{1}, v_{2}:=u_{2}\mright]$.

  \ref{item:lemma-eq_rel-deprel}
  By Definition \ref{def:deprel} and \ref{item:lemma-eq_rel-equiv}, we have the statement.
 \end{proof}

 \begin{lem}%
  \label{lemma:strcl}
  For a set of formulas $\Gamma$,
  the following statements are equivalent:
  \begin{enumerate}
   \item $u_{1}\mathequivrel{\Gamma}u_{2}$. \label{item:lemma-strcl-equivrel}
   \item There exists a finite sequence of terms 
	 $\mleft( t_{i} \mright)_{0\leq i \leq n}$ with $n\geq 0$ such that 
	 $t_{0}\equiv u_{1}$, $t_{n}\equiv u_{2}$ and ${t_{i}=t_{i+1}}\in\mleft[\Gamma\mright]$ 
	 for $0\leq i < n$, where
	 \[
	 \mleft[\Gamma\mright]=\mathsetintension{s^{n}t_{1}=s^{n}t_{2}}{
	 n\in\mathnat \text{ and either }
	 t_{1}=t_{2}\in \Gamma \text{ or } t_{2}=t_{1}\in\Gamma}.
	 \]
	 \label{item:lemma-strcl}
  \end{enumerate}

 \end{lem}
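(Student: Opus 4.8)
The plan is to prove the equivalence of \ref{item:lemma-strcl-equivrel} and \ref{item:lemma-strcl} in Lemma \ref{lemma:strcl} by showing that the relation defined in \ref{item:lemma-strcl} is itself a congruence relation containing the pairs $t_1 = t_2 \in \Gamma$, and that it is the smallest such. Let me write $R$ for the relation on terms defined by condition \ref{item:lemma-strcl}, i.e.\ $u_1 \mathrel{R} u_2$ iff there is a finite chain $t_0 \equiv u_1, t_1, \dots, t_n \equiv u_2$ with each consecutive pair lying in $\mleft[\Gamma\mright]$.

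First I would verify that $R$ is an equivalence relation: reflexivity comes from taking the empty chain ($n = 0$), symmetry from reversing a chain (note that $\mleft[\Gamma\mright]$ is symmetric by construction, since it includes both $t_1 = t_2$ and $t_2 = t_1$ cases), and transitivity from concatenating chains. Next I would check that $R$ is a congruence, i.e.\ closed under application of function symbols: if $u_1 \mathrel{R} u_1', \dots, u_k \mathrel{R} u_k'$, then $f u_1 \cdots u_k \mathrel{R} f u_1' \cdots u_k'$. The only function symbol here is $s$, so the key closure to establish is that $u_1 \mathrel{R} u_2$ implies $s u_1 \mathrel{R} s u_2$; this follows because if $t_i = t_{i+1} \in \mleft[\Gamma\mright]$ then $s t_i = s t_{i+1} \in \mleft[\Gamma\mright]$ as well, by the defining clause closing $\mleft[\Gamma\mright]$ under prefixing by powers of $s$. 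Applying $s$ to each term in the chain therefore produces a valid chain witnessing $s u_1 \mathrel{R} s u_2$. Finally $R$ contains every pair $t_1 = t_2 \in \Gamma$ via the length-one chain, using that $\mleft[\Gamma\mright]$ contains $t_1 = t_2$ (the $n = 0$ case of its definition).

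Having shown $R$ is a congruence relation containing the generators, I conclude $\mathequivrel{\Gamma} \subseteq R$ by minimality of $\mathequivrel{\Gamma}$, which gives the direction \ref{item:lemma-strcl-equivrel} $\Rightarrow$ \ref{item:lemma-strcl}. For the converse \ref{item:lemma-strcl} $\Rightarrow$ \ref{item:lemma-strcl-equivrel}, I would argue that each generating step of a witnessing chain already lies in $\mathequivrel{\Gamma}$: if $t_i = t_{i+1} \in \mleft[\Gamma\mright]$, then by definition $t_i \equiv s^k r_1$ and $t_{i+1} \equiv s^k r_2$ for some $k$ with $r_1 = r_2 \in \Gamma$ (or the symmetric pair), so $r_1 \mathequivrel{\Gamma} r_2$, and since $\mathequivrel{\Gamma}$ is a congruence closed under $s$, we get $t_i \mathequivrel{\Gamma} t_{i+1}$. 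Chaining these via transitivity of $\mathequivrel{\Gamma}$ yields $u_1 \mathequivrel{\Gamma} u_2$.

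I do not expect a serious obstacle here, since the statement is essentially the standard characterization of the congruence closure as the reflexive–symmetric–transitive closure of the generating pairs together with their function-symbol liftings. The one point requiring care is the bookkeeping in the definition of $\mleft[\Gamma\mright]$: one must confirm that it already bakes in closure under prefixing by $s^n$ and under symmetry, so that the only work left for $R$ is reflexive–transitive closure plus congruence under $s$. The mild subtlety — and the place I would be most careful — is establishing the congruence-under-$s$ closure of $R$ cleanly, making sure the chain obtained by applying $s$ termwise genuinely consists of steps in $\mleft[\Gamma\mright]$; this is exactly where the $s^n$-closure clause in the definition of $\mleft[\Gamma\mright]$ is used.
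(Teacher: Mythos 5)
Your proposal is correct and is essentially the paper's own proof: your verification that the chain relation is reflexive, symmetric (by reversing chains), transitive (by concatenation), congruent under $s$ (applying $s$ termwise, using the $s^{n}$-closure built into $\mleft[\Gamma\mright]$), and contains the generators corresponds exactly to the paper's case analysis in its induction on the definition of $\mathequivrel{\Gamma}$, since invoking minimality of the smallest congruence is the same argument as rule induction over its defining clauses. The converse direction is argued identically in both.
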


 \begin{proof}%
  \ref{item:lemma-strcl-equivrel} $\Rightarrow$ \ref{item:lemma-strcl}:
  Assume $u_{1}\mathequivrel{\Gamma}u_{2}$
  to prove \ref{item:lemma-strcl} by induction on the definition of  $\mathequivrel{\Gamma}$.
  We consider cases according to the clauses of the definition. 

  Case 1.
  If $u_{1}=u_{2}\in \Gamma$, 
  then we have $u_{1}=u_{2}\in \mleft[\Gamma\mright]$.
  Thus, we have \ref{item:lemma-strcl}.

  Case 2.
  If $u_{1}\equiv u_{2}$,
  then we have \ref{item:lemma-strcl}.

  Case 3.
  We consider the case where
  $u_{2}\mathequivrel{\Gamma}u_{1}$.
  By the induction hypothesis,
  there exists a finite sequence of terms $\mleft( t_{i} \mright)_{0\leq i \leq n}$ such that 
  $t_{0}\equiv u_{2}$, $t_{n}\equiv u_{1}$ and ${t_{i}=t_{i+1}}\in\mleft[\Gamma\mright]$ with $0\leq i < n$.
  Let $t'_{i}\equiv t_{n-i}$.
  The finite sequence of terms $\mleft( t'_{i} \mright)_{0\leq i \leq n}$ 
  satisfies $t'_{0}\equiv u_{1}$, $t'_{n}\equiv u_{2}$ and ${t'_{i}=t'_{i+1}}\in\mleft[\Gamma\mright]$.
  Thus, we have \ref{item:lemma-strcl}.

  Case 4.
  We consider the case where $u_{1}\mathequivrel{\Gamma}u_{3}$, $u_{3}\mathequivrel{\Gamma}u_{2}$.
  By the induction hypothesis,
  there exist two finite sequences of terms 
  $\mleft( t_{i} \mright)_{0\leq i \leq n}$, $\mleft( t'_{j} \mright)_{0\leq j \leq m}$ 
  such that $t_{0}\equiv u_{1}$, $t_{n}\equiv t'_{0}\equiv u_{3}$, $t'_{m}\equiv u_{2}$,
  ${t_{i}=t_{i+1}}\in\mleft[\Gamma\mright]$ and ${t'_{j}=t'_{j+1}}\in\mleft[\Gamma\mright]$
  with $0\leq i < n$, $0\leq j < m$.
  Define $\hat{t}_{k}$ as $t_{k}$ if $0\leq k < n$ and $t'_{k-n}$ if $n\leq k \leq n+m$.
  The finite sequence of terms 
  $\mleft( \hat{t}_{k} \mright)_{0\leq k \leq n}$ 
  satisfies
  $\hat{t}_{0}\equiv u_{1}$, $\hat{t}_{n}\equiv u_{2}$ and
  ${\hat{t}_{k}=\hat{t}_{k+1}}\in\mleft[\Gamma\mright]$.
  Thus, we have \ref{item:lemma-strcl}.

  Case 5. 
  We consider the case where 
  $\hat{u}_{1} \mathequivrel{\Gamma} \hat{u}_{2}$,
  $u_{1}\equiv u\mleft[v:=\hat{u}_{1}\mright]$ and $u_{2}\equiv u\mleft[v:=\hat{u}_{2}\mright]$.
  By the induction hypothesis,
  there exists a finite sequence of terms $\mleft( t_{i} \mright)_{0\leq i \leq n}$
  with $n\in\mathnat$ such that $t_{0}\equiv \hat{u}_{1}$, $t_{n}\equiv \hat{u}_{2}$,
  ${t_{i}=t_{i+1}}\in\mleft[\Gamma\mright]$ with $0\leq i < n$.

  Assume $v$ does not occur in $u$.
  In this case,  we have
  $u_{1}\equiv u\mleft[v:=\hat{u}_{1}\mright] \equiv u \equiv u\mleft[v:=\hat{u}_{2}\mright] \equiv u_{2}$.
  Hence, \ref{item:lemma-strcl} holds.

  Assume $v$ occurs in $u$.
  In this case, we have $u\equiv s^{m}v$ for some natural numbers $m$.
  Let $t'_{i}=s^{m}t_{i}$ for $0\leq i\leq n$.
  The finite sequence of terms  $\mleft( t'_{i} \mright)_{0\leq i \leq n}$ 
  satisfies $t'_{0}\equiv u_{1}$, $t'_{n}\equiv u_{2}$ and ${t'_{i}=t'_{i+1}}\in\mleft[\Gamma\mright]$.

  \ref{item:lemma-strcl} $\Rightarrow$ \ref{item:lemma-strcl-equivrel}:
  Assume \ref{item:lemma-strcl} to show \ref{item:lemma-strcl-equivrel}.
  By the assumption, there exists a finite sequence of terms $\mleft( t_{i} \mright)_{0\leq i \leq n}$ 
  with $n\in\mathnat$ such that
  $t_{0}\equiv u_{1}$, $t_{n}\equiv u_{2}$ and ${t_{i}=t_{i+1}}\in\mleft[\Gamma\mright]$ with  $0\leq i < n$.
  If ${t_{i}=t_{i+1}}\in\mleft[\Gamma\mright]$, then $t_{i}=t_{i+1}$ is $s^{n}\hat{t}_{1}=s^{n}\hat{t}_{2}$, 
  where $\hat{t}_{1}=\hat{t}_{2}\in \Gamma$ or $\hat{t}_{2}=\hat{t}_{1}\in \Gamma$.
  Therefore, $t_{i}\mathequivrel{\Gamma}t_{i+1}$.
  Because of the transitivity of $\mathequivrel{\Gamma}$, we have $u_{1}\mathequivrel{\Gamma}u_{2}$.
 \end{proof}

 \begin{lem}%
  \label{lemma:right_asp}
  For a set of formulas $\Gamma_{1}$ and
  $\Gamma_{2} \equiv \mleft(\Gamma_{1}, u_{1}=u'_{1}, u_{2}=u'_{2}, u_{3}=u'_{3}\mright)$,
  if $\mathvars{u'_{i}}$ ($i=1, 2, 3$) do not occur in $\Gamma_{1}, u_{1}, u_{2}, u_{3}, t, t'$
  and are all distinct variables, 
  then $t \mathequivrel{\Gamma_{2}} t'$  implies $t \mathequivrel{\Gamma_{1}} t'$.
 \end{lem}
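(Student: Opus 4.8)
The plan is to route the whole argument through the path characterization of $\mathequivrel{\Gamma}$ in Lemma \ref{lemma:strcl} and then to excise from such a path every term mentioning one of the fresh variables. Writing $v_{i} \equiv \mathvars{u'_{i}}$ for the three distinct fresh variables, by hypothesis $v_{1}, v_{2}, v_{3}$ occur in none of $\Gamma_{1}, u_{1}, u_{2}, u_{3}, t, t'$, and by the standing assumption (Section \ref{subsec:assuming}) every term has the form $s^{l}w$ with $w$ a variable or $0$; hence $u'_{i} \equiv s^{k_{i}}v_{i}$ for some $k_{i}$, and a term $s^{l}w$ mentions $v_{i}$ exactly when $w \equiv v_{i}$. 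First I would record that $\mleft[\Gamma_{2}\mright] = \mleft[\Gamma_{1}\mright] \cup N$, where $N$ collects the new generators $s^{m}u_{i} = s^{m}u'_{i}$ and $s^{m}u'_{i} = s^{m}u_{i}$; the point to extract is that in each generator of $N$ exactly one side, the $u'_{i}$ side, mentions a fresh variable (namely $v_{i}$), whereas no generator of $\mleft[\Gamma_{1}\mright]$ mentions any fresh variable.

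Assuming $t \mathequivrel{\Gamma_{2}} t'$, I would invoke Lemma \ref{lemma:strcl} to get a sequence $t_{0} \equiv t, \dots, t_{n} \equiv t'$ with ${t_{j} = t_{j+1}} \in \mleft[\Gamma_{2}\mright]$ throughout, and then establish the key structural claim: whenever $t_{j}$ mentions a fresh variable $v_{i}$, one has $0 < j < n$ and $t_{j-1} \equiv t_{j+1}$, a term mentioning no fresh variable. The endpoints $t_{0}, t_{n}$ are fresh-variable-free, so $0 < j < n$; and since $t_{j}$ mentions $v_{i}$, neither adjacent step can lie in $\mleft[\Gamma_{1}\mright]$, so both lie in $N$ and must use the $i$-th new equation with $t_{j}$ on its $u'_{i}$ side. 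Matching $t_{j} \equiv s^{m}u'_{i} \equiv s^{m+k_{i}}v_{i}$ across the two steps forces one and the same $m$, whence both neighbours equal $s^{m}u_{i}$, which is fresh-variable-free. This is exactly where the distinctness and freshness of the $v_{i}$ are used.

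The claim immediately gives that no two consecutive $t_{j}$ both mention a fresh variable, so deleting all fresh-variable terms leaves a subsequence $t_{i_{0}} \equiv t, \dots, t_{i_{r}} \equiv t'$ with consecutive indices differing by $1$ or $2$. For a gap of $1$, both endpoints being fresh-variable-free forces the connecting step out of $N$, hence into $\mleft[\Gamma_{1}\mright]$, so $t_{i_{a}} \mathequivrel{\Gamma_{1}} t_{i_{a+1}}$; for a gap of $2$, the claim gives $t_{i_{a}} \equiv t_{i_{a+1}}$, so $t_{i_{a}} \mathequivrel{\Gamma_{1}} t_{i_{a+1}}$ by reflexivity. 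Chaining by transitivity of $\mathequivrel{\Gamma_{1}}$ then yields $t \mathequivrel{\Gamma_{1}} t'$. I expect the main obstacle to be conceptual rather than computational: a direct induction on the definition of $\mathequivrel{\Gamma_{2}}$ stalls because the intermediate terms witnessing a congruence may themselves contain fresh variables, and it is precisely the linear path picture of Lemma \ref{lemma:strcl} that exposes fresh variables as isolated, removable bumps and lets transitivity finish the job; the remaining care is the boundary bookkeeping (fresh-variable terms are never endpoints and never adjacent) and the clean identification $t_{j-1} \equiv t_{j+1}$.
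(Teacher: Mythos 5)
Your proof is correct and follows essentially the same route as the paper's: both pass through the path characterization of Lemma \ref{lemma:strcl} and observe that any term mentioning a fresh variable must be an interior node of the path whose two neighbours both coincide with the same fresh-variable-free term $s^{m}u_{i}$, so that it can be excised. The only difference is bookkeeping --- the paper removes one such term at a time by induction on the path length, whereas you delete them all at once and chain the surviving $[\Gamma_{1}]$-steps by transitivity.
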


 \begin{proof}
  Let $\mathvars{u'_{i}}=v_{i}$ for each $i=1$, $2$, $3$.
  Assume $t \mathequivrel{\Gamma_{2}} t'$,
  $t \not\mathdeprel{\Gamma_{1}} v_{i}$ for all $i=1, 2, 3$.
  By Lemma \ref{lemma:strcl}, 
  there exists a sequence $\mleft( t_{j} \mright)_{0\leq j \leq n}$ with $n\in\mathnat$
  such that $t_{0}\equiv t$, $t_{n}\equiv t'$ and
  ${t_{j}=t_{j+1}}\in\mleft[\Gamma_{2}\mright]$  with $0\leq j < n$.
  We show $t \mathequivrel{\Gamma_{1}} t'$  by induction on $n$.

  For $n=0$, we have $t \mathequivrel{\Gamma_{1}} t'$ immediately.

  We consider the case where $n>0$.

  If $t_{j}\not\equiv s^{m}u'_{i}$  for all $i=1, 2, 3$, $0\leq j \leq n$ and $m\in\mathnat$, 
  then  ${t_{j}=t_{j+1}}\in\mleft[\Gamma_{1}\mright]$ with $0\leq i < n$.
  By Lemma \ref{lemma:strcl},  we have $t \mathequivrel{\Gamma_{1}} t'$.
  
  Assume that there exists $j_{0}$ with $0\leq j_{0} \leq n$,
  such that $t_{j_{0}}\equiv s^{m}u'_{i}$ for some $i=1, 2, 3$ and $m\in\mathnat$.
  Since any formula of $\mleft[\Gamma_{2}\mright]$ in which $u'_{i}$ occurs
  is either $s^{l}u_{i}=s^{l}u'_{i}$ or $s^{l}u'_{i}=s^{l}u_{i}$ with $l\in\mathnat$
  and $\mathvars{u'_{i}}$ ($i=1, 2, 3$) do not occur in $t, t'$,
  we have $t_{j_{0}-1}\equiv t_{j_{0}+1} \equiv s^{m}u_{i}$.
  Define $\bar{t}_{k}$ as $t_{k}$ if $0\leq k < j_{0}$ and $t_{k+1}$ if $j_{0}\leq k \leq n-1$.
  Then, $\bar{t}_{0}\equiv t$, $\bar{t}_{n-1}\equiv t'$ and
  ${\bar{t}_{k}=\bar{t}_{k+1}}\in\mleft[\Gamma_{2}\mright]$ with $0\leq k < n-1$.
  By the induction hypothesis, we have $t \mathequivrel{\Gamma_{1}} t'$. 
 \end{proof}

 \begin{lem}%
  \label{lemma:left_asp_1}
  For a set of formulas $\Gamma_{1}$ and 
  $\Gamma_{2} \equiv \mleft(\Gamma_{1}, u_{1}=u'_{1}, \dots, u_{n}=u'_{n}\mright)$ with a natural number $n$,
  if $t \not\mathdeprel{\Gamma_{1}} u_{i}$ and $t \not\mathdeprel{\Gamma_{1}} u'_{i}$ with $i=1, \dots, n $,
  then $t \mathequivrel{\Gamma_{2}} t'$ implies $t \mathequivrel{\Gamma_{1}} t'$.
 \end{lem}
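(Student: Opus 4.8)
The plan is to reduce everything to the combinatorial characterization of $\mathequivrel{\Gamma}$ given by Lemma \ref{lemma:strcl}. Assume $t \mathequivrel{\Gamma_{2}} t'$. By Lemma \ref{lemma:strcl} there is a finite sequence of terms $\mleft( t_{j} \mright)_{0\leq j \leq m}$ with $t_{0}\equiv t$, $t_{m}\equiv t'$ and ${t_{j}=t_{j+1}}\in\mleft[\Gamma_{2}\mright]$ for $0\leq j < m$. Since $\Gamma_{2}$ is $\Gamma_{1}$ together with the adjoined equations $u_{1}=u'_{1}, \dots, u_{n}=u'_{n}$, the set $\mleft[\Gamma_{2}\mright]$ splits as
\[
\mleft[\Gamma_{2}\mright] = \mleft[\Gamma_{1}\mright] \cup \mathsetintension{s^{k}u_{i}=s^{k}u'_{i},\ s^{k}u'_{i}=s^{k}u_{i}}{k\in\mathnat,\ 1\leq i \leq n}.
\]
Hence each step ${t_{j}=t_{j+1}}$ is either an $\mleft[\Gamma_{1}\mright]$-step or a \emph{new} step arising from one of the adjoined equations.

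Next I would show that no new step occurs. Suppose toward a contradiction that some step is new, and let $j_{0}$ be the least index with ${t_{j_{0}}=t_{j_{0}+1}}$ new. By minimality every earlier step lies in $\mleft[\Gamma_{1}\mright]$, so the prefix $t_{0}, \dots, t_{j_{0}}$ witnesses $t \mathequivrel{\Gamma_{1}} t_{j_{0}}$ via Lemma \ref{lemma:strcl}. A new step is literally an instance $s^{k}u_{i}=s^{k}u'_{i}$ or its reverse, so its left endpoint is syntactically $s^{k}u_{i}$ or $s^{k}u'_{i}$ for some $k\in\mathnat$ and some $1\leq i \leq n$; that is, $t_{j_{0}}\equiv s^{k}u_{i}$ or $t_{j_{0}}\equiv s^{k}u'_{i}$. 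In the first case $t \mathequivrel{\Gamma_{1}} s^{k}u_{i}$, and taking the exponents $0$ and $k$ in Definition \ref{def:deprel} gives $t \mathdeprel{\Gamma_{1}} u_{i}$; in the second case $t \mathdeprel{\Gamma_{1}} u'_{i}$ follows likewise. Either conclusion contradicts the hypothesis. Therefore every step lies in $\mleft[\Gamma_{1}\mright]$, and Lemma \ref{lemma:strcl} applied to the same sequence yields $t \mathequivrel{\Gamma_{1}} t'$, as required.

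Notice that no assumption on $t'$ is needed: once $t \not\mathdeprel{\Gamma_{1}} u_{i}$ and $t \not\mathdeprel{\Gamma_{1}} u'_{i}$ hold for every $i$, the entire chain is forced to stay within $\mleft[\Gamma_{1}\mright]$, so $t'$ automatically lands in the $\mathequivrel{\Gamma_{1}}$-class of $t$. I expect the only delicate point---and it is a minor one---to be the syntactic identification of the left endpoint $t_{j_{0}}$ of a new step with $s^{k}u_{i}$ or $s^{k}u'_{i}$, since this is exactly what converts the $\mathequivrel{\Gamma_{1}}$-connection into the $\mathdeprel{\Gamma_{1}}$-connection that the hypothesis forbids; everything else is bookkeeping through Lemma \ref{lemma:strcl}. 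An alternative would be induction on $n$, peeling off one adjoined equation at a time, but propagating the separation hypothesis from $\Gamma_{1}$ to the intermediate antecedents is awkward, so the direct chain argument is cleaner.
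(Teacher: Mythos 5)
Your proposal is correct and follows essentially the same route as the paper's proof: both apply Lemma \ref{lemma:strcl} to obtain a chain witnessing $t \mathequivrel{\Gamma_{2}} t'$, take the least position where an adjoined equation could have been used, and derive $t \mathdeprel{\Gamma_{1}} u_{i}$ or $t \mathdeprel{\Gamma_{1}} u'_{i}$ from the prefix, contradicting the hypothesis. The only cosmetic difference is that the paper's case split is on whether some \emph{term} $t_{j}$ has the form $s^{l}u_{i}$ or $s^{l}u'_{i}$, whereas yours is on whether some \emph{link} is an instance of an adjoined equation; the two triggers are interchangeable here.
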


 \begin{proof}
  Assume $t \not\mathdeprel{\Gamma_{1}} u_{i}$, $t \not\mathdeprel{\Gamma_{1}} u'_{i}$ for $i=1, \dots, n$,
  and $t \mathequivrel{\Gamma_{2}} t'$.
  By Lemma \ref{lemma:strcl}, 
  there exists a sequence $\mleft( t_{j} \mright)_{0\leq j \leq m}$ with $m\in\mathnat$ such that
  $t_{0}\equiv t$, $t_{m}\equiv t'$ and ${t_{j}=t_{j+1}}\in\mleft[\Gamma_{2}\mright]$ 
  with $0\leq j < m$.
  
  If $t_{j}\not\equiv s^{l}u_{i}$ and $t_{j}\not\equiv s^{l}u'_{i}$ 
  for all $0\leq j \leq n$, $i=1, \dots, n$, and any $l\in\mathnat$,
  then ${t_{j}=t_{j+1}}\in\mleft[\Gamma_{1}\mright]$ with all $0\leq j < m$.
  By Lemma \ref{lemma:strcl}, we have $t \mathequivrel{\Gamma_{1}} t'$.
  
  Assume that there exists $j$ with $0\leq j \leq n$,
  such that $t_{j}\equiv s^{l}u_{i}$ or $t_{j}\equiv s^{l}u'_{i}$ for $i=1, \dots, n$,
  and some $l\in\mathnat$.
  Let $j_{0}$ be the least number among such $j$'s.
  Since $j_{0}$ is the least,
  we have ${t_{j}=t_{j+1}}\in\mleft[\Gamma_{1}\mright]$ for all $0\leq j < j_{0}$.
  By Lemma \ref{lemma:strcl},
  we have $t\mathequivrel{\Gamma_{1}} s^{l}u_{i}$ or $t\mathequivrel{\Gamma_{1}} s^{l}u'_{i}$.
  This contradicts $t \not\mathdeprel{\Gamma_{1}} u_{i}$ and $t \not\mathdeprel{\Gamma_{1}} u'_{i}$.
 \end{proof}

  We call the assumption of (\rulename{Case $\mathindAddtwosy$}) whose form 
 \[\Gamma, a=sx, b=y, c=z, \mathindAddtwo{x}{sy}{z} \fCenter \Delta \]
 \emph{the right assumption of} the rule. 
 The other assumption is called \emph{the left assumption of} the rule. 
 
 \begin{lem}%
  \label{lemma:abc_relations}
  Let $\Gamma \fCenter \Delta$ be in $\mathdertreecf$ and
  \begin{align*}
   \mathsetA{\Gamma \fCenter \Delta}&=
   \mathsetintension{a}{
   \mathindAddtwo{a}{b}{c} \in \Gamma,
   \mathindAddone{a}{b}{c} \in \Delta,
   \text{ or } a\equiv 0} \text{ and } \\
   \mathsetBC{\Gamma \fCenter \Delta}&= 
   \mathsetintension{b}{ 
   \mathindAddtwo{a}{b}{c} \in \Gamma \text{ or }
   \mathindAddone{a}{b}{c} \in \Delta
   } \cup
   \mathsetintension{c}{ 
   \mathindAddtwo{a}{b}{c} \in \Gamma \text{ or }
   \mathindAddone{a}{b}{c} \in \Delta
   }.
  \end{align*}
  
  If $t\in\mathsetA{\Gamma \fCenter \Delta}$ and
  $u\in\mathsetBC{\Gamma \fCenter \Delta}$,
  then $t\not\mathdeprel{\Gamma}u$.
 \end{lem}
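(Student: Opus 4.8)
The plan is to prove the statement by induction on the height of $\Gamma \fCenter \Delta$ in $\mathdertreecf$, showing at each step that if the conclusion of a rule enjoys the separation property then so does every premise. By Remark~\ref{rem:cut-freeness} only six rules can occur. At the root $\mathindAddtwo{x}{y}{z}\fCenter\mathindAddone{x}{y}{z}$ the antecedent carries no equalities, so $\mathequivrel{\Gamma}$ is syntactic equality; since $x,y,z$ are distinct and every term has the shape $s^{k}v$ (Remark~\ref{rem:cut-freeness}\ref{s-term}), the $A$-set $\{x,0\}$ and the $\mathrm{BC}$-set $\{y,z\}$ are $\mathdeprel{\Gamma}$-separated. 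For $(\rulename{Weak})$ the premise's $A$- and $\mathrm{BC}$-sets are included in those of the conclusion while its antecedent has fewer equalities, hence $\mathdeprel{\Gamma}$ is smaller and the separation is inherited. For $(\rulename{Subst})$ with $\theta$, each $A$-element $t$ and $\mathrm{BC}$-element $u$ of the premise satisfy $t[\theta]\in A$ and $u[\theta]\in \mathrm{BC}$ of the conclusion (using $0[\theta]\equiv 0$), so $t[\theta]\not\mathdeprel{\Gamma[\theta]}u[\theta]$ by hypothesis and then $t\not\mathdeprel{\Gamma}u$ by Lemma~\ref{lemma:subst_rel}\ref{item:lemma-subst_rel-deprel}. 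For $(\rulenameLa)$ the same element tracking together with Lemma~\ref{lemma:eq_rel}\ref{item:lemma-eq_rel-deprel}, with the conclusion and premise antecedents playing the roles of $\Gamma_{1}$ and $\Gamma_{2}$, transports non-relatedness from the conclusion to the premise.

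For $(\rulename{$\mathindAddonesy$ R${}_2$})$ the antecedent is unchanged, and passing to the premise only replaces $sa$ by $a$ in the $A$-set and $sc$ by $c$ in the $\mathrm{BC}$-set; since $\mathdeprel{\Gamma}$ ignores leading occurrences of $s$, these pairs share a $\mathdeprel{\Gamma}$-class and the conclusion's separation transfers verbatim. The axiom $(\rulename{$\mathindAddonesy$ R${}_1$})$ has no premise and needs no argument. Thus the substantial case is $(\rulename{Case $\mathindAddtwosy$})$, whose two premises I would treat separately. For the right premise, whose antecedent $\Gamma_{R}$ is obtained by adding $a=sx$, $b=y$, $c=z$ and the atom $\mathindAddtwo{x}{sy}{z}$ with $x,y,z$ fresh and distinct, the only new set-elements are $x$ in the $A$-set and $sy,z$ in the $\mathrm{BC}$-set. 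I would read off $x\mathdeprel{\Gamma_{R}}a$, $sy\mathdeprel{\Gamma_{R}}b$ and $z\mathdeprel{\Gamma_{R}}c$ from the three added equalities, and then, using that $\mathdeprel{\Gamma_{R}}$ is a transitive congruence, reduce any relation involving a new element to a relation between old terms drawn from $a,b,c$ together with the elements inherited from $\Gamma,\Delta$. Each such old–old instance is forbidden by Lemma~\ref{lemma:right_asp}, whose three distinct fresh right-hand variables are precisely $x,y,z$, applied to the induction hypothesis at the conclusion, where $a$ lies in the $A$-set and the principal $b,c$ lie in the $\mathrm{BC}$-set.

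The left premise is the main obstacle. Its added equalities $a=0$, $b=y$, $c=y$ have right-hand sides $0,y,y$ that are neither all variables nor all distinct, so Lemma~\ref{lemma:right_asp} is unavailable and I must instead strip the equalities off with Lemma~\ref{lemma:left_asp_1}, choosing the fixed term to suit each equality. The clean way to organize this is to remove each added equality while holding fixed a term from the \emph{opposite} side: the $A$-side equality $a=0$ is eliminated holding a $\mathrm{BC}$-element fixed, while the equalities $b=y$, $c=y$, which tie $\mathrm{BC}$-side terms to the fresh $y$, are eliminated holding an $A$-element fixed. With this choice every side condition of Lemma~\ref{lemma:left_asp_1} reduces to an $A$-versus-$\mathrm{BC}$ non-relatedness supplied by the induction hypothesis, with freshness of $y$ disposing of the $y$-conditions; the uncontrolled $A$-versus-$A$ and $\mathrm{BC}$-versus-$\mathrm{BC}$ comparisons are never needed. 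Composing the resulting reductions sends $t\mathdeprel{\Gamma_{L}}u$ (with $\Gamma_{L}$ the left-premise antecedent) back to $t\mathdeprel{\Gamma}u$, contradicting the conclusion hypothesis.

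The delicate point, and the reason the left premise is harder than the right, is exactly that no single fixed term eliminates all three equalities at once: one must stage the eliminations and split the side conditions between the $A$-side and the $\mathrm{BC}$-side so that Lemma~\ref{lemma:left_asp_1} is applicable at each stage. Verifying these intermediate hypotheses, in particular that the equality $a=0$ relates only $A$-side terms and so does not interfere when later removing $b=y$ and $c=y$, is where the bookkeeping has to be done carefully; everything else is routine once the rule-by-rule framework above is in place.
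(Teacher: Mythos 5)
Your proposal is correct and follows essentially the same route as the paper: induction on the height of the node, the same rule-by-rule case analysis using Lemmas \ref{lemma:subst_rel}, \ref{lemma:eq_rel} and \ref{lemma:right_asp}, and in particular the same key idea for the left premise of (\rulename{Case $\mathindAddtwosy$}) — staging the removal of $a=0$ versus $b=y$, $c=y$ via Lemma \ref{lemma:left_asp_1} with the fixed term chosen from the opposite side so that every side condition is an $\mathrm{A}$-versus-$\mathrm{BC}$ instance of the induction hypothesis (the paper establishes $t\not\mathdeprel{\Pi'}u$, $u\not\mathdeprel{\Pi'}a$ and $u\not\mathdeprel{\Pi'}0$ at the intermediate stage $\Pi'=\mleft(\Pi, b=y, c=y\mright)$ and then strips $a=0$ holding $u$ fixed, exactly as you describe).
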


 \begin{proof}
  We prove the statement 
  by induction on the height of the node $\Gamma\fCenter\Delta$ in $\mathdertreecf$. 

  The root of $\mathdertreecf$ satisfies the statement.

  Assume $\Gamma \fCenter \Delta$ is not the root.
  Let $\Gamma' \fCenter \Delta'$ be the parent of $\Gamma \fCenter \Delta$.
  We consider cases according to the rule with the conclusion $\Gamma' \fCenter \Delta'$.

  Case 1. In the case (\rulename{Weak}),
  we have the statement by $\Gamma\subseteq\Gamma'$.


  Case 2. In the case (\rulename{Subst}),
  we have the statement by  Lemma \ref{lemma:subst_rel} \ref{item:lemma-subst_rel-deprel}.


  Case 3. In the case (\rulenameLa),
  we have the statement by Lemma \ref{lemma:eq_rel} \ref{item:lemma-eq_rel-deprel}.
  
  
  Case 4. 
  We consider the case where the rule is (\rulename{Case $\mathindAddtwosy$})
  and $\Gamma\fCenter\Delta$ is the right assumption of the rule.
  Let $\mathindAddtwo{a}{b}{c}$ be the principal formula of the rule. 
  There exists $\Pi$ such that 
  $\Gamma' \equiv \mleft(\Pi, \mathindAddtwo{a}{b}{c}\mright)$ and 
  $\Gamma \equiv \mleft(\Pi, a = sx, b = y, c = z, \mathindAddtwo{x}{sy}{z}\mright)$ 
  for fresh variables $x$, $y$, $z$.

  Assume $t\in\mathsetA{\Gamma \fCenter \Delta}$ and $u\in\mathsetBC{\Gamma \fCenter \Delta}$
  and $t\mathdeprel{\Gamma}u$ for contradiction.

  Define $\hat{t}$ as $a$ if $t\equiv x$ and $t$ otherwise.
  We also define $\hat{u}$ as $b$ if $u\equiv sy$, $c$ if $u\equiv z$ and $u$ otherwise.
  Since $t\mathdeprel{\Gamma}u$ holds,
  we have $\hat{t}\mathdeprel{\Gamma}\hat{u}$.
  By Lemma \ref{lemma:right_asp},
  we have $\hat{t}\mathdeprel{\Gamma'}\hat{u}$.
  Since $\hat{t}\in\mathsetA{\Gamma' \fCenter \Delta'}$ and
  $\hat{u}\in\mathsetBC{\Gamma' \fCenter \Delta'}$ hold,
  this contradicts the induction hypothesis.


  Case 5. 
  We consider the case where the rule is (\rulename{Case $\mathindAddtwosy$})
  and $\Gamma\fCenter\Delta$ is the left assumption of the rule.

  Let $\mathindAddtwo{a}{b}{c}$
  be the principal formula of the rule.
  There exists $\Pi$ such that 
  $\Gamma' \equiv \mleft(\Pi, \mathindAddtwo{a}{b}{c}\mright)$ and
  $\Gamma \equiv \mleft(\Pi, a = 0, b = y, c = y\mright)$ 
  for a fresh variable $y$.
  Let $\Pi'\equiv \mleft(\Pi, b = y, c = y\mright)$.

  Let $t\in\mathsetA{\Gamma \fCenter \Delta}$ and $u\in\mathsetBC{\Gamma \fCenter \Delta}$.
  Since $\mathsetA{\Gamma \fCenter \Delta}\subseteq\mathsetA{\Gamma' \fCenter \Delta'}$ holds,
  we have $t\in\mathsetA{\Gamma' \fCenter \Delta'}$.
  By $\mathsetBC{\Gamma \fCenter \Delta}\subseteq\mathsetBC{\Gamma' \fCenter \Delta'}$,
  we have $u\in\mathsetBC{\Gamma' \fCenter \Delta'}$.
  By the induction hypothesis, 
  $t\not\mathdeprel{\Gamma'}u$, $t\not\mathdeprel{\Gamma'}b$ and $t\not\mathdeprel{\Gamma'}c$.
  Since the set of formulas with ${=}$ in $\Pi$ is the same as the set of formulas with ${=}$ in $\Gamma'$, 
  we have $t\not\mathdeprel{\Pi}u$, $t\not\mathdeprel{\Pi}b$ and $t\not\mathdeprel{\Pi}c$.
  By Lemma \ref{lemma:left_asp_1}, $t\not\mathdeprel{\Pi'}u$.

  By the induction hypothesis,  $u\not\mathdeprel{\Gamma'}a$, $a\not\mathdeprel{\Gamma'}b$
  and $a\not\mathdeprel{\Gamma'}c$.
  Since the set of formulas with ${=}$ in $\Pi$ is the same as the set of formulas with ${=}$ in $\Gamma'$, 
  $u\not\mathdeprel{\Pi}a$, $a\not\mathdeprel{\Pi}b$
  and $a\not\mathdeprel{\Pi}c$.
  By Lemma \ref{lemma:left_asp_1}, $u\not\mathdeprel{\Pi'}a$.

  By the induction hypothesis,  
  $u\not\mathdeprel{\Gamma'}0$, $0\not\mathdeprel{\Gamma'}b$ and $0\not\mathdeprel{\Gamma'}c$.
  Since the set of formulas with ${=}$ in $\Pi$ is the same as the set of formulas with ${=}$ in $\Gamma'$, 
  $u\not\mathdeprel{\Pi}0$, $0\not\mathdeprel{\Pi}b$ and $0\not\mathdeprel{\Pi}c$.
  By Lemma \ref{lemma:left_asp_1},  $u\not\mathdeprel{\Pi'}0$.

  By Lemma \ref{lemma:left_asp_1} and these three facts, $t\not\mathdeprel{\Gamma}u$.


  Case 6. 
  In the case (\rulename{$\mathindAddonesy$ R${}_2$}), $\Gamma\equiv \Gamma'$ implies the statement
  by the induction hypothesis.
 \end{proof}

 \subsection{Index} \label{subsec:Index}
 In this section, we define a key concept, called an index, to prove Theorem \ref{thm:main}.

  \begin{definition}[Index]%
  \label{def:index}
  For a sequent $\Gamma\fCenter\Delta$ and $\mathindAddtwo{a}{b}{c}\in\Gamma$,
  we define \emph{the index of  $\mathindAddtwo{a}{b}{c}$ in $\Gamma\fCenter\Delta$} as follows:
  \begin{enumerate}
   \item If $b \not\mathdeprel{\Gamma} b'$  for any $\mathindAddone{a'}{b'}{c'} \in \Delta$,
	 then the index of $\mathindAddtwo{a}{b}{c}$ in $\Gamma\fCenter\Delta$ is $\bot$, and
   \item if there exists uniquely $m-n$
	 such that $n$, $m \in \mathnat$,  $s^{n} b \mathequivrel{\Gamma} s^{m} b'$ 
	 and $\mathindAddone{a'}{b'}{c'} \in \Delta$,
	 then the index of $\mathindAddtwo{a}{b}{c}$ in $\Gamma\fCenter\Delta$ is $m-n$
	 (namely the uniqueness means that $s^{n'} b \mathequivrel{\Gamma} s^{m'} b''$
	 for $m'$,$n'\in\mathnat$ and $\mathindAddone{a''}{b''}{c''} \in \Delta$ imply $m-n=m'-n'$).
  \end{enumerate}
 \end{definition}

 Note that
 if there exists $n, m \in \mathnat$ such that $s^{n} b \mathequivrel{\Gamma} s^{m} b'$  
 for some $\mathindAddone{a'}{b'}{c'}$ and $m-n$ is not unique,
 then the index of $\mathindAddtwo{a}{b}{c}$ in $\Gamma\fCenter\Delta$ is undefined.

 \begin{definition}[Index sequent]%
  \label{def:annoying}%
  The sequent $\Gamma \fCenter \Delta$ is said to be
  an \emph{index sequent} if the following conditions hold:
  \begin{enumerate}
   \item If $t\in\mathsetBone{\Gamma \fCenter \Delta}$ and $u\in\mathsetC{\Gamma \fCenter \Delta}$,
	 then $t\not\mathdeprel{\Gamma}u$, and
	 \label{item:def-annoying_bc}
   \item if $s^{n}b \mathequivrel{\Gamma} s^{m}b'$ with $b$, $b'\in \mathsetBone{\Gamma \fCenter \Delta}$,
	 then $n=m$, where
	 \label{item:def-annoying_tsnt}
  \end{enumerate}
  \begin{align*}
   \mathsetBone{\Gamma \fCenter \Delta}&=
   \mathsetintension{b}{
   \mathindAddone{a}{b}{c} \in \Delta
   }, \text{ and } \\
   \mathsetC{\Gamma \fCenter \Delta}&= 
   \mathsetintension{c}{ 
   \mathindAddtwo{a}{b}{c} \in \Gamma \text{ or }
   \mathindAddone{a}{b}{c} \in \Delta
   }.
  \end{align*}

 \end{definition}

 This condition \ref{item:def-annoying_tsnt}
 guarantees the existence of an index, as shown in the following lemma. 
 We will use \ref{item:def-annoying_bc} to calculate an index 
 in Lemma \ref{lemma:index} \ref{item:lemma-index_neg} and an infinite sequence 
 in Lemma \ref{lemma:rightmost}.

 \begin{lem}
  If $\Gamma\fCenter\Delta$ is an index sequent, 
  the index of any $\mathindAddtwo{a}{b}{c}\in\Gamma$ in $\Gamma\fCenter\Delta$ is defined.
 \end{lem}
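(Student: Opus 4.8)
The plan is to show that the only way the index could fail to be defined---the situation where $b \mathdeprel{\Gamma} b'$ holds for some $\mathindAddone{a'}{b'}{c'} \in \Delta$ but the difference $m-n$ witnessing it is not unique---cannot occur when $\Gamma \fCenter \Delta$ is an index sequent. First I would fix $\mathindAddtwo{a}{b}{c} \in \Gamma$ and split into two cases according to whether $b \mathdeprel{\Gamma} b'$ for some $\mathindAddone{a'}{b'}{c'} \in \Delta$. If no such $b'$ exists, then clause (1) of Definition \ref{def:index} applies directly and the index is $\bot$, hence defined.

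In the remaining case a witness exists, so by Definition \ref{def:deprel} there are $n, m \in \mathnat$ with $s^{n} b \mathequivrel{\Gamma} s^{m} b'$, and the task reduces to proving that the value $m - n$ is independent of the chosen witness. Suppose $s^{n_1} b \mathequivrel{\Gamma} s^{m_1} b_1'$ and $s^{n_2} b \mathequivrel{\Gamma} s^{m_2} b_2'$ with $\mathindAddone{a_1'}{b_1'}{c_1'}, \mathindAddone{a_2'}{b_2'}{c_2'} \in \Delta$, so that $b_1', b_2' \in \mathsetBone{\Gamma \fCenter \Delta}$. The key step is to bring the two congruences into a common shape: since $\mathequivrel{\Gamma}$ is a congruence on terms, prefixing $s^{n_2}$ to the first and $s^{n_1}$ to the second yields $s^{n_1 + n_2} b \mathequivrel{\Gamma} s^{m_1 + n_2} b_1'$ and $s^{n_1 + n_2} b \mathequivrel{\Gamma} s^{m_2 + n_1} b_2'$. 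By symmetry and transitivity of $\mathequivrel{\Gamma}$, I obtain $s^{m_1 + n_2} b_1' \mathequivrel{\Gamma} s^{m_2 + n_1} b_2'$.

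At this point condition \ref{item:def-annoying_tsnt} of Definition \ref{def:annoying} applies immediately: with $b_1', b_2' \in \mathsetBone{\Gamma \fCenter \Delta}$, it forces $m_1 + n_2 = m_2 + n_1$, hence $m_1 - n_1 = m_2 - n_2$. This establishes the uniqueness demanded by clause (2) of Definition \ref{def:index}, so the index of $\mathindAddtwo{a}{b}{c}$ equals this common value $m - n$ and is defined.

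I expect the only delicate point to be the bookkeeping of the powers of $s$ when aligning the two congruences over the single term $s^{n_1 + n_2} b$; once that alignment is done, condition \ref{item:def-annoying_tsnt} of the index sequent does all the work. It is worth noting that condition \ref{item:def-annoying_bc} of Definition \ref{def:annoying}, concerning $\mathsetBone{\Gamma \fCenter \Delta}$ and $\mathsetC{\Gamma \fCenter \Delta}$, plays no role in this argument---only condition \ref{item:def-annoying_tsnt} is needed, which matches the remark preceding the lemma that this condition is what guarantees the existence of an index.
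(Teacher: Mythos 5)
Your proposal is correct and matches the paper's own proof essentially step for step: the same case split on whether a witness $b'$ exists, the same alignment of the two congruences over $s^{n_1+n_2}b$, and the same appeal to condition \ref{item:def-annoying_tsnt} of Definition \ref{def:annoying} to conclude $m_1+n_2=m_2+n_1$. Your closing observation that only condition \ref{item:def-annoying_tsnt} is used here is also consistent with the paper's remark preceding the lemma.
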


 \begin{proof}
  If $b\not\mathdeprel{\Gamma}b'$ for any $\mathindAddone{a'}{b'}{c'} \in \Delta$,
  then the index is $\bot$.

  Assume $b\mathdeprel{\Gamma}b'_{0}$ for some $\mathindAddone{a'_{0}}{b'_{0}}{c'_{0}}\in\Delta$.
  By Definition \ref{def:deprel},
  there exist $n_{0}$ and $m_{0}$ such that $s^{n_{0}} b \mathequivrel{\Gamma} s^{m_{0}} b'_{0}$.
  To show the uniqueness,
  we fix $\mathindAddone{a'_{1}}{b'_{1}}{c'_{1}} \in \Delta$
  and assume $s^{n_{1}} b \mathequivrel{\Gamma} s^{m_{1}} b'_{1}$.
  Since $s^{n_{0}+n_{1}} b \mathequivrel{\Gamma} s^{m_{0}+n_{1}} b'_{0}$
  and $s^{n_{1}+n_{0}} b \mathequivrel{\Gamma} s^{m_{1}+n_{0}} b'_{1}$,
  we have $s^{m_{0}+n_{1}} b'_{0} \mathequivrel{\Gamma} s^{m_{1}+n_{0}} b'_{1}$.
  From \ref{item:def-annoying_tsnt} of Definition \nolinebreak \ref{def:annoying}, $m_{0}+n_{1}=m_{1}+n_{0}$.
  Thus, $m_{0}-n_{0}=m_{1}-n_{1}$.
 \end{proof}

 \begin{definition}[Switching point]%
  \label{def:bad_app}
  A node $\sigma$ in a derivation tree is called a \emph{switching point}
  if the rule with the conclusion $\sigma$ is (\rulename{Case $\mathindAddtwosy$}) and
  the index of the principal formula for the rule in the conclusion is $\bot$.
 \end{definition}

 \begin{definition}[Index path] 
  \label{def:bad-path}       
  A path $\mleft( \Gamma_{i} \fCenter \Delta_{i} \mright)_{0\leq i <\alpha}$ 
  in $\mathtreeunfolding{\mathproofcf}$
  with some $\alpha \in \mathnat \cup \mathsetextension{\omega}$
  is said to be an \emph{index path}
  if the following conditions hold:
  \begin{enumerate}
   \item $\Gamma_{0}\fCenter\Delta_{0}$ is an index sequent, and
   \item if the rule for $\Gamma_{i} \fCenter \Delta_{i}$ is $\text{(\rulename{Case $\mathindAddtwosy$})}$
	 and $\Gamma_{i+1}\fCenter\Delta_{i+1}$ is the left assumption of the rule,
	 then $\Gamma_{i}\fCenter\Delta_{i}$ is a switching point.
  \end{enumerate}
 \end{definition}

 \begin{lem}%
  \label{lemma:annoying}
  Every sequent in an index path is an index sequent.
 \end{lem}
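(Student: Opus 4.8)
The plan is to prove the statement by induction on the position $i$ in the index path $\mleft( \Gamma_{i} \fCenter \Delta_{i} \mright)_{0\leq i < \alpha}$. The base case is immediate, since $\Gamma_{0}\fCenter\Delta_{0}$ is an index sequent by the first clause of Definition \ref{def:bad-path}. For the inductive step I would assume $\Gamma_{i}\fCenter\Delta_{i}$ is an index sequent and show that its premise $\Gamma_{i+1}\fCenter\Delta_{i+1}$ in $\mathtreeunfolding{\mathproofcf}$ is again one, by case analysis on the rule whose conclusion is $\Gamma_{i}\fCenter\Delta_{i}$. By Remark \ref{rem:cut-freeness} the only rules to examine are (\rulename{Weak}), (\rulename{Subst}), (\rulenameLa), (\rulename{Case $\mathindAddtwosy$}), (\rulename{$\mathindAddonesy$ R${}_1$}) and (\rulename{$\mathindAddonesy$ R${}_2$}), and the axiom (\rulename{$\mathindAddonesy$ R${}_1$}) has no premise and is vacuous. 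Because the conditions \ref{item:def-annoying_bc} and \ref{item:def-annoying_tsnt} of Definition \ref{def:annoying}, together with $\mathequivrel{\Gamma}$ and $\mathdeprel{\Gamma}$, depend only on the underlying sequent, and every node of the tree-unfolding carries a sequent occurring in $\mathdertreecf$, I may invoke Lemma \ref{lemma:abc_relations} at any node along the path.

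For the routine rules only monotonicity and the transfer lemmas are needed. Under (\rulename{Weak}) the premise satisfies $\Gamma'\subseteq\Gamma$ and $\Delta'\subseteq\Delta$, so the sets $\mathsetBone{\cdot}$ and $\mathsetC{\cdot}$ shrink while $\mathequivrel{\Gamma'}$ is contained in $\mathequivrel{\Gamma}$, and both conditions are inherited from the conclusion. Under (\rulename{Subst}) I would pull the conditions back along $\theta$: condition \ref{item:def-annoying_bc} from Lemma \ref{lemma:subst_rel} \ref{item:lemma-subst_rel-deprel} and condition \ref{item:def-annoying_tsnt} from part \ref{item:lemma-subst_rel-equiv}, using that an $s$-prefix commutes with $\theta$; the case (\rulenameLa) is the same transfer through Lemma \ref{lemma:eq_rel}. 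Under (\rulename{$\mathindAddonesy$ R${}_2$}) the antecedent and hence $\mathsetBone{\cdot}$ are unchanged, so condition \ref{item:def-annoying_tsnt} is immediate, while the only change to $\mathsetC{\cdot}$ replaces $sc$ by $c$; since $\mathvars{c}=\mathvars{sc}$ gives $c\mathdeprel{\Gamma}sc$, the forbidden relation $t\mathdeprel{\Gamma}c$ coincides with $t\mathdeprel{\Gamma}sc$, which condition \ref{item:def-annoying_bc} of the conclusion already excludes.

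The rule (\rulename{Case $\mathindAddtwosy$}) carries the content, and its left branch is the main obstacle. For the right assumption $\Gamma, a=sx, b=y, c=z, \mathindAddtwo{x}{sy}{z}\fCenter\Delta$, the fresh variables $x,y,z$ occur in no term contributed to $\mathsetBone{\cdot}$ or $\mathsetC{\cdot}$ by $\Gamma\cup\Delta$, so I would strip the three new equations with Lemma \ref{lemma:right_asp}, reducing any hypothetical congruence to one over the conclusion's antecedent; the single consequent-side term still naming a fresh variable is $z$, which equals $c$ through $c=z$ and is dispatched by replacing $z$ with $c$. For the left assumption $\Gamma, a=0, b=y, c=y\fCenter\Delta$ the equations $b=y$ and $c=y$ merge $b$ and $c$, so a priori an element of $\mathsetBone{\cdot}$ could acquire a relation to an element of $\mathsetC{\cdot}$ through $b$; excluding this is exactly where the index-path hypothesis is used. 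Because $\Gamma_{i+1}$ is the left assumption, $\Gamma_{i}\fCenter\Delta_{i}$ is a switching point, so the index of $\mathindAddtwo{a}{b}{c}$ is $\bot$, giving $b\not\mathdeprel{\Gamma}b'$ for every $\mathindAddone{a'}{b'}{c'}\in\Delta$. Taking an arbitrary $t\in\mathsetBone{\Gamma_{i+1}\fCenter\Delta_{i+1}}$, I would then delete the equations $a=0$, $b=y$, $c=y$ in one step with Lemma \ref{lemma:left_asp_1}, whose side conditions split as $t\not\mathdeprel{\Gamma}a$ and $t\not\mathdeprel{\Gamma}0$ from Lemma \ref{lemma:abc_relations}, $t\not\mathdeprel{\Gamma}b$ from the switching-point condition, $t\not\mathdeprel{\Gamma}c$ from condition \ref{item:def-annoying_bc} of the conclusion, and $t\not\mathdeprel{\Gamma}y$ from freshness of $y$; this reduces both conditions of Definition \ref{def:annoying} for the left assumption to the corresponding conditions of the conclusion, the argument for condition \ref{item:def-annoying_tsnt} being the same with two $\mathsetBone{\cdot}$-terms in place of $t$. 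The crux, and the reason switching points were introduced, is precisely that merging $b$ and $c$ is harmless exactly when no element of $\mathsetBone{\cdot}$ depends on $b$, which index $\bot$ guarantees.
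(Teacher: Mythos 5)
Your proposal is correct and follows essentially the same route as the paper: induction on the position in the index path, case analysis on the rule via Remark \ref{rem:cut-freeness}, transfer of both conditions of Definition \ref{def:annoying} through Lemmas \ref{lemma:subst_rel}, \ref{lemma:eq_rel} and \ref{lemma:right_asp} for the routine rules, and, for the left assumption of (\rulename{Case $\mathindAddtwosy$}), exactly the paper's combination of the switching-point condition, Lemma \ref{lemma:abc_relations}, freshness of $y$, condition \ref{item:def-annoying_bc} of the conclusion, and Lemma \ref{lemma:left_asp_1}. The only nitpick is terminological: in the right-assumption case the fresh $z$ enters $\mathsetC{\cdot}$ as the third argument of the new antecedent formula $\mathindAddtwo{x}{sy}{z}$, not as a consequent-side term, but your treatment of it (replacing $z$ by $c$ via $c=z$) is exactly the paper's.
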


 \begin{proof}
  Let $\mleft( \Gamma_{i} \fCenter \Delta_{i} \mright)_{0\leq i <\alpha}$ be an index path. 
  We use $\mathsetBone{\Gamma \fCenter \Delta}$ and $\mathsetC{\Gamma \fCenter \Delta}$ in
  Definition \ref{def:annoying}.
  We prove the statement by the induction on $i$.
  
  For $i=0$, $\Gamma_{0} \fCenter \Delta_{0}$ is an index sequent by Definition \ref{def:bad-path}.

  For $i>0$, we consider cases according to the rule with the conclusion $\Gamma_{i-1}\fCenter\Delta_{i-1}$.


  Case 1. The case (\rulename{Weak}).

  \ref{item:def-annoying_bc} 
  Assume that
  $t\in\mathsetBone{\Gamma_{i} \fCenter \Delta_{i}}$ and $u\in\mathsetC{\Gamma_{i} \fCenter \Delta_{i}}$.
  Since 
  $\mathsetBone{\Gamma_{i} \fCenter \Delta_{i}}\subseteq\mathsetBone{\Gamma_{i-1} \fCenter \Delta_{i-1}}$
  holds, we have $t\in\mathsetBone{\Gamma_{i-1} \fCenter \Delta_{i-1}}$.
  By $\mathsetC{\Gamma_{i} \fCenter \Delta_{i}}\subseteq\mathsetC{\Gamma_{i-1} \fCenter \Delta_{i-1}}$,
  we have $u\in\mathsetC{\Gamma_{i-1} \fCenter \Delta_{i-1}}$.
  By the induction hypothesis  \ref{item:def-annoying_bc}, we have $t\not\mathdeprel{\Gamma_{i-1}}u$.
  By $\Gamma_{i}\subseteq\Gamma_{i-1}$, we have $t\not\mathdeprel{\Gamma_{i}}u$.

  \ref{item:def-annoying_tsnt} 
  Assume that $s^{n}b \mathequivrel{\Gamma_{i}} s^{m}b'$ 
  with $b$, $b'\in \mathsetBone{\Gamma_{i} \fCenter \Delta_{i}}$ for $n$, $m\in\mathnat$.
  By $\Gamma_{i}\subseteq\Gamma_{i-1}$, we have $s^{n}b \mathequivrel{\Gamma_{i-1}} s^{m}b'$.
  Since $\mathsetBone{\Gamma_{i}\fCenter\Delta_{i}}\subseteq\mathsetBone{\Gamma_{i-1}\fCenter\Delta_{i-1}}$
  holds, we have $b$, $b'\in \mathsetBone{\Gamma_{i-1} \fCenter \Delta_{i-1}}$.
  By the induction hypothesis  \ref{item:def-annoying_tsnt} , we have $n=m$.


  Case 2. The case (\rulename{Subst}) with a substitution $\theta$.
  
  \ref{item:def-annoying_bc} 
  Assume that 
  $t\in\mathsetBone{\Gamma_{i} \fCenter \Delta_{i}}$ and $u\in\mathsetC{\Gamma_{i} \fCenter \Delta_{i}}$.
  Since $\Gamma_{i-1}\equiv \Gamma_{i}\mleft[\theta\mright]$ 
  and $\Delta_{i-1}\equiv \Delta_{i}\mleft[\theta\mright]$ hold,
  we have $t\mleft[\theta\mright]\in\mathsetBone{\Gamma_{i-1} \fCenter \Delta_{i-1}}$ and
  $u\mleft[\theta\mright]\in\mathsetC{\Gamma_{i-1} \fCenter \Delta_{i-1}}$.
  By the induction hypothesis  \ref{item:def-annoying_bc} , 
  we have $t\mleft[\theta\mright]\not\mathdeprel{\Gamma_{i-1}}u\mleft[\theta\mright]$.
  By Lemma \ref{lemma:subst_rel} \ref{item:lemma-subst_rel-deprel}, 
  we have $t\not\mathdeprel{\Gamma_{i}}u$.

  \ref{item:def-annoying_tsnt} 
  Assume that
  $s^{n}b \mathequivrel{\Gamma_{i}} s^{m}b'$ with $b$, $b' \in \mathsetBone{\Gamma_{i} \fCenter \Delta_{i}}$
  for $n$, $m\in\mathnat$.
  By Lemma \ref{lemma:subst_rel} \ref{item:lemma-subst_rel-equiv},
  $s^{n}b\mleft[\theta\mright] \mathequivrel{\Gamma_{i-1}} s^{m}b'\mleft[\theta\mright]$.
  Since $\Delta_{i-1}\equiv \Delta_{i}\mleft[\theta\mright]$ holds,
  we have 
  $b\mleft[\theta\mright]$, $b'\mleft[\theta\mright]\in \mathsetBone{\Gamma_{i-1} \fCenter \Delta_{i-1}}$.
  By the induction hypothesis   \ref{item:def-annoying_tsnt}, we have $n=m$.
  

  Case 3. The case (\rulenameLa).

  Let $u_{1}=u_{2}$ be the principal formula of the rule.
  There exist $\Gamma$ and $\Delta$ such that
 \begin{align*}
  \Gamma_{i-1}&\equiv {\mleft(\Gamma\mleft[v_{1}:=u_{1}, v_{2}:=u_{2}\mright], {u_{1} = u_{2}} \mright)}, \\
  \Delta_{i-1}&\equiv {\mleft(\Delta\mleft[v_{1}:=u_{1}, v_{2}:=u_{2}\mright], {u_{1} = u_{2}} \mright)}, \\
  \Gamma_{i}&\equiv {\mleft(\Gamma\mleft[v_{1}:=u_{2}, v_{2}:=u_{1}\mright], {u_{1} = u_{2}}\mright)}, 
  \text{ and } \\ 
  \Delta_{i}&\equiv {\mleft(\Delta\mleft[v_{1}:=u_{2}, v_{2}:=u_{1}\mright], {u_{1} = u_{2}} \mright)}.
 \end{align*}
  
  \ref{item:def-annoying_bc} 
  Assume that
  $t\in\mathsetBone{\Gamma_{i} \fCenter \Delta_{i}}$ and 
  $u\in\mathsetC{\Gamma_{i} \fCenter \Delta_{i}}$.
  From the definition of $\Gamma_{i}$ and $\Delta_{i}$,
  there exist terms $\hat{t}$, $\hat{u}$ such that
  $t\equiv \hat{t}\mleft[v_{1}:=u_{2}, v_{2}:=u_{1}\mright]$ and
  $u\equiv \hat{u}\mleft[v_{1}:=u_{2}, v_{2}:=u_{1}\mright]$.
  Then,
  $\hat{t}\mleft[v_{1}:=u_{1}, v_{2}:=u_{2}\mright]\in\mathsetBone{\Gamma_{i-1} \fCenter \Delta_{i-1}}$ and
  $\hat{u}\mleft[v_{1}:=u_{1}, v_{2}:=u_{2}\mright]\in\mathsetC{\Gamma_{i-1} \fCenter \Delta_{i-1}}$.
  By the induction hypothesis  \ref{item:def-annoying_bc}, 
  we have $\hat{t}\mleft[v_{1}:=u_{1}, v_{2}:=u_{2}\mright]\not\mathdeprel{\Gamma_{i-1}}\hat{u}\mleft[v_{1}:=u_{1}, v_{2}:=u_{2}\mright]$.
  By Lemma \ref{lemma:eq_rel} \ref{item:lemma-eq_rel-deprel},
  we have 
  $\hat{t}\mleft[v_{1}:=u_{2}, v_{2}:=u_{1}\mright]\not\mathdeprel{\Gamma_{i}}\hat{u}\mleft[v_{1}:=u_{2}, v_{2}:=u_{1}\mright]$. 
  Thus, $t\not\mathdeprel{\Gamma_{i}}u$.

  \ref{item:def-annoying_tsnt} 
  Assume that
  $s^{n}b \mathequivrel{\Gamma_{i}} s^{m}b'$
  with $b$, $b'\in \mathsetBone{\Gamma_{i} \fCenter \Delta_{i}}$ for $n$, $m\in\mathnat$. 
  From the definition of $\Gamma_{i}$ and $\Delta_{i}$,
  there exist terms $\hat{b}$, $\hat{b}'\in\Delta$ such that
  $b\equiv s^{n}\hat{b}\mleft[v_{1}:=u_{2}, v_{2}:=u_{1}\mright]$ and
  $b'\equiv s^{m}\hat{b}'\mleft[v_{1}:=u_{2}, v_{2}:=u_{1}\mright]$.
  By Lemma \ref{lemma:eq_rel} \ref{item:lemma-eq_rel-equiv},
  $s^{n}\hat{b}\mleft[v_{1}:=u_{1}, v_{2}:=u_{2}\mright] \mathequivrel{\Gamma_{i-1}} s^{m}\hat{b}'\mleft[v_{1}:=u_{1}, v_{2}:=u_{2}\mright]$.
  From the definition of $\Gamma_{i-1}$ and $\Delta_{i-1}$, 
  $\hat{b}\mleft[v_{1}:=u_{1}, v_{2}:=u_{2}\mright]$, 
  $\hat{b}'\mleft[v_{1}:=u_{1}, v_{2}:=u_{2}\mright]\in \mathsetBone{\Gamma_{i-1} \fCenter \Delta_{i-1}}$.
  By the induction hypothesis \ref{item:def-annoying_tsnt}, we have $n=m$.

  
  Case 4. 
  The case (\rulename{Case $\mathindAddtwosy$})
  with the right assumption $\Gamma_{i}\fCenter\Delta_{i}$.
  
  Let $\mathindAddtwo{a}{\hat{b}}{c}$ be the principal formula of the rule.
  There exists $\Pi$ such that 
  $\Gamma_{i-1}\equiv \mleft(\Pi, \mathindAddtwo{a}{\hat{b}}{c}\mright)$ and
  $\Gamma_{i}\equiv \mleft(\Pi, a = sx, \hat{b} = y, c = z, \mathindAddtwo{x}{sy}{z}\mright)$
  for fresh variables $x$, $y$, $z$.
  
  \ref{item:def-annoying_bc}
  Assume that
  $t\in\mathsetBone{\Gamma_{i} \fCenter \Delta_{i}}$ and $u\in\mathsetC{\Gamma_{i} \fCenter \Delta_{i}}$.
  Assume that $t\mathdeprel{\Gamma_{i}}u$ for contradiction.
  Define $\hat{u}$ as $c$ if $u\equiv z$ and $u$ otherwise.
  Since $t\mathdeprel{\Gamma_{i}}u$ holds, we have $t\mathdeprel{\Gamma_{i}}\hat{u}$.
  By Lemma \ref{lemma:right_asp}, 
  we have $t\mathdeprel{\Gamma_{i-1}}\hat{u}$.
  Since $t\in\mathsetBone{\Gamma_{i-1} \fCenter \Delta_{i-1}}$ and
  $\hat{u}\in\mathsetC{\Gamma_{i-1} \fCenter \Delta_{i-1}}$ hold,
  this contradicts the induction hypothesis \ref{item:def-annoying_bc}.

  \ref{item:def-annoying_tsnt} 
  Assume that $s^{n}b \mathequivrel{\Gamma_{i}} s^{m}b'$ with $b$, 
  $b'\in \mathsetBone{\Gamma_{i} \fCenter \Delta_{i}}$ for $n$, $m\in\mathnat$.
  By Lemma \ref{lemma:right_asp},  $s^{n}b \mathequivrel{\Gamma_{i-1}} s^{m}b'$.
  Since $\Delta_{i-1} \equiv \Delta_{i}$ holds,
  we have  $b$, $b'\in \mathsetBone{\Gamma_{i-1} \fCenter \Delta_{i-1}}$.
  By the induction hypothesis \ref{item:def-annoying_tsnt}, we have $n=m$.


  Case 5. 
  The case (\rulename{Case $\mathindAddtwosy$})
  with the left assumption $\Gamma_{i} \fCenter \Delta_{i}$.
  In this case, $\Gamma_{i-1} \fCenter \Delta_{i-1}$ is a switching point.

  Let $\mathindAddtwo{a}{\hat{b}}{c}$ be the principal formula of the rule. 
  There exists $\Pi$ such that 
  $\Gamma_{i-1} \equiv {\mleft(\Pi, \mathindAddtwo{a}{\hat{b}}{c}\mright)}$ and 
  $\Gamma_{i} \equiv {\mleft(\Pi, a = 0, \hat{b} = y, c = y\mright)}$ with a fresh variable $y$. 
  
  \ref{item:def-annoying_bc} 
  Assume that $t\in\mathsetBone{\Gamma_{i} \fCenter \Delta_{i}}$ and
  $u\in\mathsetC{\Gamma_{i} \fCenter \Delta_{i}}$.
  Since $\mathsetBone{\Gamma_{i} \fCenter \Delta_{i}}=\mathsetBone{\Gamma_{i-1} \fCenter \Delta_{i-1}}$ holds,
  we have $t\in\mathsetBone{\Gamma_{i-1} \fCenter \Delta_{i-1}}$.
  By $\mathsetC{\Gamma_{i} \fCenter \Delta_{i}}\subseteq\mathsetC{\Gamma_{i-1} \fCenter \Delta_{i-1}}$,
  we have $u\in\mathsetC{\Gamma_{i-1} \fCenter \Delta_{i-1}}$.
  By the induction hypothesis \ref{item:def-annoying_bc},
  $t\not\mathdeprel{\Gamma_{i-1}}u$ and $t\not\mathdeprel{\Gamma_{i-1}}c$.
  By Lemma \ref{lemma:abc_relations}, 
  $t\not\mathdeprel{\Gamma_{i-1}}a$ and $t\not\mathdeprel{\Gamma_{i-1}}0$.
  Since $y$ is fresh, we have $t\not\mathdeprel{\Gamma_{i-1}}y$.
  Since $\Gamma_{i-1} \fCenter \Delta_{i-1}$ is a switching point,
  we have $t\not\mathdeprel{\Gamma_{i-1}}\hat{b}$.
  By Lemma \ref{lemma:left_asp_1},  $t\not\mathdeprel{\Gamma_{i}}u$.

  \ref{item:def-annoying_tsnt}
  Assume that $s^{n}b \mathequivrel{\Gamma_{i}} s^{m}b'$ 
  with $b$, $b'\in \mathsetBone{\Gamma_{i} \fCenter \Delta_{i}}$ for $n$, $m\in\mathnat$ to show $n=m$.
  By Lemma \ref{lemma:abc_relations}, 
  $s^{n}b\not\mathdeprel{\Gamma_{i}}a$ and $s^{n}b\not\mathdeprel{\Gamma_{i}}0$.
  Since $\Gamma_{i-1} \fCenter \Delta_{i-1}$ is a switching point,
  we have $s^{n}b \not\mathdeprel{\Gamma_{i-1}} \hat{b}$.
  By the induction hypothesis \ref{item:def-annoying_bc}, $s^{n}b \not\mathdeprel{\Gamma_{i-1}} c$. 
  Since $y$ is fresh, we have $s^{n}b\not\mathdeprel{\Gamma_{i-1}}y$.
  By Lemma \ref{lemma:left_asp_1}, we have $s^{n}b \mathequivrel{\Gamma_{i-1}} s^{m}b'$.
  Because of $\mathsetBone{\Gamma_{i} \fCenter \Delta_{i}}=\mathsetBone{\Gamma_{i-1} \fCenter \Delta_{i-1}}$,
  we have $b$, $b'\in \mathsetBone{\Gamma_{i-1} \fCenter \Delta_{i-1}}$.
  By the induction hypothesis \ref{item:def-annoying_tsnt}, we have $n=m$.


  Case 6. The case (\rulename{$\mathindAddonesy$ R${}_2$}).
  Let $\mathindAddone{sa}{\hat{b}}{sc}$ be the principal formula of the rule.

  \ref{item:def-annoying_bc}
  Assume that 
  $t\in\mathsetBone{\Gamma_{i} \fCenter \Delta_{i}}$ and $u\in\mathsetC{\Gamma_{i} \fCenter \Delta_{i}}$
  and $t\mathdeprel{\Gamma_{i}}u$ for contradiction.
  Define $\hat{u}$ as $sc$ if $u\equiv c$ and $u$ otherwise.
  Since $t\mathdeprel{\Gamma_{i}}u$ holds, we have $t\mathdeprel{\Gamma_{i}}\hat{u}$.
  Since $\Gamma_{i-1}=\Gamma_{i}$ holds, we have $t\mathdeprel{\Gamma_{i-1}}\hat{u}$.
  Since $t\in\mathsetBone{\Gamma_{i-1} \fCenter \Delta_{i-1}}$ and
  $\hat{u}\in\mathsetC{\Gamma_{i-1} \fCenter \Delta_{i-1}}$ hold,
  this contradicts the induction hypothesis \ref{item:def-annoying_bc}.

  \ref{item:def-annoying_tsnt}
  Assume that
  $s^{n}b \mathequivrel{\Gamma_{i}} s^{m}b'$ with $b$, $b'\in \mathsetBone{\Gamma_{i} \fCenter \Delta_{i}}$
  for $n$, $m\in\mathnat$.
  Because $\Gamma_{i-1}=\Gamma_{i}$,
  we have $s^{n}b \mathequivrel{\Gamma_{i-1}} s^{m}b'$.
  Since the second argument of a formula with $\mathindAddonesy$ in $\Delta_{i}$ is that in $\Delta_{i-1}$,
  we have  $b$, $b'\in \mathsetBone{\Gamma_{i-1} \fCenter \Delta_{i-1}}$.
  By the induction hypothesis \ref{item:def-annoying_tsnt}, we have $n=m$.
 \end{proof}

 \begin{lem}%
  \label{lemma:index}
  For an index path $\mleft(\Gamma_{i}\fCenter\Delta_{i}\mright)_{0\leq i< \alpha}$ and
  a trace $\mleft( \tau_{k} \mright)_{k \geq 0}$ following 
  $\mleft(\Gamma_{i}\fCenter\Delta_{i}\mright)_{i\geq p}$,
  if $d_{k}$ is the index of $\tau_{k}$, the following statements holds:
  
  \begin{enumerate}
   \item If $d_{k} =\bot$, then $d_{k+1}=\bot$. 
	 \label{item:lemma-index_neg}
   \item If the rule with the conclusion $\Gamma_{p+k}\fCenter\Delta_{p+k}$ is 
	 (\rulename{Weak}) or (\rulename{Subst}),
	 then $d_{k+1}=d_{k}$ or $d_{k+1}=\bot$.
	 \label{item:lemma-index_regress}
   \item If the rule with the conclusion $\Gamma_{p+k}\fCenter\Delta_{p+k}$ is 
	 (\rulenameLa) or (\rulename{\rulename{$\mathindAddonesy$ R${}_2$}}),
	 then $d_{k+1}=d_{k}$.
	 \label{item:lemma-index_not_change}
   \item Assume the rule with the conclusion $\Gamma_{p+k}\fCenter\Delta_{p+k}$ is 
	 (\rulename{Case $\mathindAddtwosy$}). 
	 \label{item:lemma-index_Case}
	 \begin{enumerate}
	  \item If $\Gamma_{p+k+1}\fCenter\Delta_{p+k+1}$ is the left assumption of the rule,
		then $d_{k+1}=d_{k}$.
		\label{item:lemma-index_left_ass}
	  \item If $\Gamma_{p+k+1}\fCenter\Delta_{p+k+1}$ is the right assumption of the rule and
		$\tau_{k}$ is not a progress point of the trace,
		then $d_{k+1}=d_{k}$.
		\label{item:lemma-index_not_progress}
	  \item If $\Gamma_{p+k+1}\fCenter\Delta_{p+k+1}$ is the right assumption of the rule and
		$\tau_{k}$ is a progress point of the trace,
		then $d_{k+1}=d_{k}+1$.
		\label{item:lemma-index_progress}
 	 \end{enumerate}
  \end{enumerate}
 \end{lem}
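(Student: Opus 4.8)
The plan is to argue by a case analysis on the rule whose conclusion is $\Gamma_{p+k}\fCenter\Delta_{p+k}$, writing $\tau_k\equiv\mathindAddtwo{a_k}{b_k}{c_k}$ (it is an $\mathindAddtwo$-formula by Remark \ref{rem:cut-freeness}). By Remark \ref{rem:cut-freeness} the only rules that can occur as a transition are (\rulename{Weak}), (\rulename{Subst}), (\rulenameLa), (\rulename{$\mathindAddonesy$ R${}_2$}) and (\rulename{Case $\mathindAddtwosy$}); the axiom (\rulename{$\mathindAddonesy$ R${}_1$}) has no premise and no inductive antecedent formula, so it never carries a trace forward. By Lemma \ref{lemma:annoying} every sequent on the index path is an index sequent, so each $d_k$ is defined, and the index of $\tau_k$ depends only on $b_k$, on the equational part of $\Gamma_{p+k}$, and on the second arguments $b'$ of the $\mathindAddone$-formulas of $\Delta_{p+k}$. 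Every case is then driven by two opposing principles: \emph{monotonicity} (passing to a premise with more equations and an unshrunk consequent preserves $\mathequivrel{}$, so a witness $s^{n}b_k\mathequivrel{\Gamma_{p+k}}s^{m}b'$ survives, and uniqueness of the index in an index sequent pins down its value) and \emph{removal} (deleting the newly introduced equations, by Lemma \ref{lemma:right_asp} when they carry distinct fresh variables, otherwise by Lemma \ref{lemma:left_asp_1} under $\not\mathdeprel{}$ side-conditions).

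For (\rulename{Weak}) and (\rulename{Subst}) every witness at the premise yields a witness of the same value at $\Gamma_{p+k}$ (for (\rulename{Subst}) via Lemma \ref{lemma:subst_rel}), so by uniqueness $d_{k+1}$ is either $d_k$ or $\bot$; this gives \ref{item:lemma-index_regress} and the corresponding instances of \ref{item:lemma-index_neg}. For (\rulenameLa) the principal equation $u_1=u_2$ lies in both the conclusion and the premise, so $u_1\mathequivrel{}u_2$ there; Lemma \ref{lemma:eq_rel} then matches witnesses across the $v_1\leftrightarrow v_2$ swap in a value-preserving way and preserves $\bot$, giving $d_{k+1}=d_k$. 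For (\rulename{$\mathindAddonesy$ R${}_2$}) the antecedent and all second arguments of the consequent are unchanged, so the index is literally unchanged. These establish \ref{item:lemma-index_not_change}.

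The substantial case is (\rulename{Case $\mathindAddtwosy$}); write $\mathindAddtwo{a}{\hat b}{c}$ for its principal formula. Since a case-descendant occurs only in the right assumption, taking the left branch forces $\tau_k$ to be non-principal with $\tau_{k+1}\equiv\tau_k$, whereas the right branch is either non-progressing ($\tau_{k+1}\equiv\tau_k$, non-principal) or progressing ($\tau_k\equiv\mathindAddtwo{a}{\hat b}{c}$ and $\tau_{k+1}\equiv\mathindAddtwo{x}{sy}{z}$). In the left assumption \ref{item:lemma-index_left_ass}, if $d_k$ is finite then monotonicity alone gives $d_{k+1}=d_k$; the delicate point is $d_k=\bot\Rightarrow d_{k+1}=\bot$, because the added equations $a=0,\ \hat b=y,\ c=y$ are \emph{not} removable by Lemma \ref{lemma:right_asp} (they involve the constant $0$ and the repeated variable $y$). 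Here I instead apply Lemma \ref{lemma:left_asp_1} to any candidate $b'\in\mathsetBone{\Gamma_{p+k}\fCenter\Delta_{p+k}}$, verifying its hypotheses: $b'\not\mathdeprel{\Gamma_{p+k}}a$ and $b'\not\mathdeprel{\Gamma_{p+k}}0$ by Lemma \ref{lemma:abc_relations}, $b'\not\mathdeprel{\Gamma_{p+k}}c$ by clause \ref{item:def-annoying_bc} of the index-sequent condition, $b'\not\mathdeprel{\Gamma_{p+k}}y$ since $y$ is fresh, and --- crucially --- $b'\not\mathdeprel{\Gamma_{p+k}}\hat b$ because $\Gamma_{p+k}$ is a \emph{switching point} (the principal formula has index $\bot$), which is exactly what the index-path definition guarantees whenever the left branch is taken. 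Removal then contradicts $d_k=\bot$, forcing $d_{k+1}=\bot$.

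In the right assumption the added equations $a=sx,\ \hat b=y,\ c=z$ all carry distinct fresh variables, so Lemma \ref{lemma:right_asp} applies freely in both directions. When $\tau_k$ does not progress \ref{item:lemma-index_not_progress}, $b_{k+1}\equiv b_k$ and the removal/monotonicity pair gives $d_{k+1}=d_k$ (and $\bot\mapsto\bot$). When $\tau_k$ progresses \ref{item:lemma-index_progress}, $b_{k+1}\equiv sy$ while the equation $\hat b=y$ yields $sy\mathequivrel{\Gamma_{p+k+1}}s\hat b$; thus a witness $s^{n}\hat b\mathequivrel{\Gamma_{p+k}}s^{m}b'$ of $d_k=m-n$ becomes $s^{n}(sy)\mathequivrel{\Gamma_{p+k+1}}s^{m+1}b'$, and uniqueness in the premise forces $d_{k+1}=(m+1)-n=d_k+1$; if $d_k=\bot$, Lemma \ref{lemma:right_asp} excludes any new witness, so $d_{k+1}=\bot$. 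Collecting the cases proves \ref{item:lemma-index_neg}--\ref{item:lemma-index_Case}. I expect the left-assumption $\bot$-preservation to be the main obstacle: it is the only place where the ordinary removal lemma fails and the switching-point hypothesis is indispensable, and the $+1$ bookkeeping of the progressing sub-case is the other spot requiring care.
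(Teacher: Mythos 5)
Your proposal is correct and follows essentially the same route as the paper's proof: the same case analysis on the rule below $\Gamma_{p+k}\fCenter\Delta_{p+k}$, the same reduction of the index computation to the second arguments $b_k$ and $b'$, and the same division of labour among Lemmas \ref{lemma:subst_rel}, \ref{lemma:eq_rel}, \ref{lemma:right_asp}, \ref{lemma:left_asp_1} and \ref{lemma:abc_relations}, including the two points you flag as delicate (the switching-point hypothesis being exactly what makes Lemma \ref{lemma:left_asp_1} applicable in the left-assumption $\bot$-case, and the $+1$ arising from $\hat b = y$ with $b_{k+1}\equiv sy$). Your ``monotonicity versus removal'' framing is just a compact restatement of the paper's case-by-case arguments.
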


 \begin{proof}
  Let $\tau_{k} \equiv \mathindAddtwo{a_{k}}{b_{k}}{c_{k}}$.

  \noindent \ref{item:lemma-index_neg}
  It suffices to show that
  $b_{k+1} \not\mathdeprel{\Gamma_{p+k+1}} b'$ holds for any $\mathindAddone{a'}{b'}{c'} \in \Delta_{p+k+1}$
  if $b_{k} \not\mathdeprel{\Gamma_{p+k}} b$ holds for any $\mathindAddone{a}{b}{c} \in \Delta_{p+k}$.
  We consider cases according to the rule with the conclusion $\Gamma_{p+k}\fCenter\Delta_{p+k}$.
  
  Case 1. 
  If the rule is (\rulename{Weak}), we have the statement by $\Gamma_{p+k+1}\subseteq\Gamma_{p+k}$
  and $\Delta_{p+k+1}\subseteq\Delta_{p+k}$.

  Case 2. 
  If the rule is (\rulename{Subst}),
  we have the statement by Lemma \ref{lemma:subst_rel} \ref{item:lemma-subst_rel-deprel}.
  
  Case 3. 
  If the rule is (\rulenameLa), 
  we have the statement by Lemma \nolinebreak \ref{lemma:eq_rel} \ref{item:lemma-eq_rel-deprel}.
  
  Case 4. The case (\rulename{Case $\mathindAddtwosy$}) with 
  the right assumption $\Gamma_{p+k+1}\fCenter\Delta_{p+k+1}$. 
  Let $\mathindAddtwo{a}{b}{c}$ be the principal formula of the rule.
  There exists $\Pi$ such that 
  $\Gamma_{p+k}\equiv {\mleft(\Pi, \mathindAddtwo{a}{b}{c} \mright)}$ and 
  $\Gamma_{p+k+1}\equiv {\mleft(\Pi, a = sx, b = y, c = z, \mathindAddtwo{x}{sy}{z}\mright)}$
  for fresh variables $x$, $y$, $z$.
  
  We prove this case by contrapositive.
  To show $b_{k} \mathdeprel{\Gamma_{p+k}} b'$,
  assume $b_{k+1} \mathdeprel{\Gamma_{p+k+1}} b'$
  for some $\mathindAddone{a'}{b'}{c'} \in \Delta_{p+k+1}$.
  Define $t$ as $b$ if $b_{k+1}\equiv sy$ and $b_{k+1}$ otherwise.  
  Since $b_{k+1} \mathdeprel{\Gamma_{p+k+1}} b'$ holds, we have $t \mathdeprel{\Gamma_{p+k+1}} b'$.
  By Lemma \ref{lemma:right_asp}, $t \mathdeprel{\Gamma_{p+k}} b'$.
  By $b_{k}\equiv t$, we have $b_{k} \mathdeprel{\Gamma_{p+k}} b'$.

  Case 5. The case (\rulename{Case $\mathindAddtwosy$}) 
  with the left assumption $\Gamma_{p+k+1}\fCenter\Delta_{p+k+1}$. 
  In this case,  $\Gamma_{p+k}\fCenter\Delta_{p+k}$ is a switching point.
  Let $\mathindAddtwo{a}{b}{c}$ be the principal formula of the rule.
  There exists $\Pi$ such that
  $\Gamma_{p+k} \equiv {\mleft( \Pi, \mathindAddtwo{a}{b}{c}\mright)}$ and
  $\Gamma_{p+k+1} \equiv {\mleft( \Pi, a = 0, b = y, c = y \mright)}$ with a fresh variable $y$. 

  Assume $b_{k} \not\mathdeprel{\Gamma_{p+k}} b''$ for any $\mathindAddone{a''}{b''}{c''} \in \Delta_{p+k}$.
  Fix $\mathindAddone{a'}{b'}{c'} \in \Delta_{p+k+1}$ to show $b_{k+1} \not\mathdeprel{\Gamma_{p+k+1}} b'$.
  By $b_{k+1} \equiv b_{k}$ and $\Delta_{p+k}\equiv\Delta_{p+k+1}$,
  we have $b_{k+1} \not\mathdeprel{\Gamma_{p+k}} b'$. 
  From Lemma \ref{lemma:abc_relations},
  $b' \not\mathdeprel{\Gamma_{p+k}} a$ and $b' \not\mathdeprel{\Gamma_{p+k}} 0$.
  Since $y$ is fresh, we have $b' \not\mathdeprel{\Gamma_{p+k}} y$.
  Since $\Gamma_{p+k}\fCenter\Delta_{p+k}$ is a switching point, $b' \not\mathdeprel{\Gamma_{p+k}} b$.
  By Lemma \ref{lemma:annoying}, $\Gamma_{p+k}\fCenter\Delta_{p+k}$ is an index sequent.
  By Definition \ref{def:annoying} and $\Delta_{p+k}\equiv\Delta_{p+k+1}$,
  $b' \not\mathdeprel{\Gamma_{p+k}} c$.
  By Lemma \ref{lemma:left_asp_1}, $b_{k+1} \not\mathdeprel{\Gamma_{p+k+1}} b'$.
  

  Case 6. 
  The case (\rulename{$\mathindAddonesy$ R${}_2$}).

  In this case,  
  $\Gamma_{p+k}$ is the same as $\Gamma_{p+k+1}$
  and the second argument of a formula with $\mathindAddtwosy$ or $\mathindAddonesy$ 
  in $\Gamma_{p+k} \fCenter \Delta_{p+k}$ is the same as that in $\Gamma_{p+k+1} \fCenter \Delta_{p+k+1}$.
  We thus have the statement. 


  \noindent \ref{item:lemma-index_regress}
  Let $d_{k}=n$.
  
  Case 1. The case (\rulename{Weak}).
  
  If $b_{k+1} \not\mathdeprel{\Gamma_{p+k+1}} b$ for any  $\mathindAddone{a}{b}{c}\in\Delta_{p+k+1}$,
  then $d_{k+1}=\bot$.

  Assume $b_{k+1} \mathdeprel{\Gamma_{p+k+1}} b$ for some $\mathindAddone{a}{b}{c}\in\Delta_{p+k+1}$.
  By Definition \ref{def:deprel}, there exist $m$, $l\in\mathnat$ such that
  $s^{m}b_{k+1}\mathequivrel{\Gamma_{p+k+1}} s^{l} b$.
  By $\Gamma_{p+k+1}\subseteq\Gamma_{p+k}$, we have $s^{m}b_{k+1}\mathequivrel{\Gamma_{p+k}} s^{l} b$.
  Since $b_{k}\equiv b_{k+1}$, we have $s^{m}b_{k}\mathequivrel{\Gamma_{p+k}} s^{l} b$.
  Since $\Delta_{p+k+1}\subseteq \Delta_{p+k}$ holds, we have $\mathindAddone{a}{b}{c}\in\Delta_{p+k}$.
  By $d_{k}=n$, we have $l-m=n$.
  Thus, $d_{k+1}=n$.

  Case 2. The case (\rulename{Subst}) with a substitution $\theta$.
  Note that $b_{k}\equiv b_{k+1}\mleft[\theta\mright]$.
  
  If $b_{k+1} \not\mathdeprel{\Gamma_{p+k+1}} b$ for any $\mathindAddone{a}{b}{c}\in \Delta_{p+k+1}$, 
  then $d_{k+1}=\bot$.
  
  Assume that $b_{k+1} \mathdeprel{\Gamma_{p+k+1}} b$ for some $\mathindAddone{a}{b}{c}\in \Delta_{p+k+1}$.
  By Definition \ref{def:deprel},
  there exist $m$, $l\in\mathnat$ such that $s^{m}b_{k+1}\mathequivrel{\Gamma_{p+k+1}} s^{l} b$.
  By Lemma \ref{lemma:subst_rel} \ref{item:lemma-subst_rel-equiv},
  $s^{m}b_{k+1}\mleft[\theta\mright] \mathequivrel{\Gamma_{p+k}} s^{l} b\mleft[\theta\mright]$.
  Since $b_{k}\equiv b_{k+1}\mleft[\theta\mright]$ holds,
  we have $s^{m}b_{k}\mathequivrel{\Gamma_{p+k}} s^{l}b\mleft[\theta\mright]$.
  Since $\Delta_{p+k}\equiv\Delta_{p+k+1}\mleft[\theta\mright]$ holds,
  we have
  $\mathindAddone{a\mleft[\theta\mright]}{b\mleft[\theta\mright]}{c\mleft[\theta\mright]}\in \Delta_{p+k}$.
  By $d_{k}=n$, we have $l-m=n$.
  Thus, $d_{k+1}=n$.


  \noindent \ref{item:lemma-index_not_change}
  Let $d_{k}=n$.

  Case 1. The case (\rulenameLa) with the principal formula $u_1 = u_2$.
  
  Let $b_{k}\equiv b\mleft[v_1:=u_1, v_2:=u_2\mright]$ and $b_{k+1}\equiv b\mleft[v_1:=u_2, v_2:=u_1\mright]$
  for variables $v_1$, $v_2$.
  
  By $d_{k}=n$, there exist $m$, $l\in\mathnat$ such that
  $s^{m}b\mleft[v_1:=u_1, v_2:=u_2\mright] \mathequivrel{\Gamma_{p+k}} s^{l}b\mleft[v_1:=u_1, v_2:=u_2\mright]$
  for some 
  $\mathindAddone{a\mleft[v_1:=u_1, v_2:=u_2\mright]}{b\mleft[v_1:=u_1, v_2:=u_2\mright]}{c\mleft[v_1:=u_1, v_2:=u_2\mright]} \in \Delta_{p+k}$
  and $l-m=n$.
  From Lemma \nolinebreak \ref{lemma:eq_rel} \ref{item:lemma-eq_rel-equiv},
  $s^{m}b\mleft[v_1:=u_2, v_2:=u_1\mright] \mathequivrel{\Gamma_{p+k+1}} s^{l}b\mleft[v_1:=u_2, v_2:=u_1\mright]$.
  Moreover,
  $\mathindAddone{a\mleft[v_1:=u_2, v_2:=u_1\mright]}{b\mleft[v_1:=u_2, v_2:=u_1\mright]}{c\mleft[v_1:=u_2, v_2:=u_1\mright]} \in \Delta_{p+k+1}$.
  Thus, $d_{k+1}=l-m=n$.

  Case 2. The case (\rulename{$\mathindAddonesy$ R${}_2$}).

  Since $\tau_{p+k+1}\equiv\tau_{p+k}$ holds,
  $\Gamma_{p+k}$ is the same as $\Gamma_{p+k+1}$ and 
  the second argument of a formula with $\mathindAddonesy$ in
  $\Delta_{p+k}$ is the same as that in $\Delta_{p+k+1}$, 
  we have $d_{k+1}=d_k$.


  \noindent \ref{item:lemma-index_Case}
  Let $d_{k}=n$.
  Let $\mathindAddtwo{a}{b}{c}$ be the principal formula of
  the rule (\rulename{Case $\mathindAddtwosy$}) with the conclusion $\Gamma_{p+k}\fCenter\Delta_{p+k}$.

  \noindent \ref{item:lemma-index_left_ass}
  The case where $\Gamma_{p+k+1}\fCenter\Delta_{p+k+1}$ is the left assumption of the rule.
  In this case, $\Gamma_{p+k}\fCenter\Delta_{p+k}$ is a switching point.

  There exists $\Pi$ such that $\Gamma_{p+k} \equiv {\mleft(\Pi, \mathindAddtwo{a}{b}{c}\mright)}$ and 
  $\Gamma_{p+k+1} \equiv {\mleft(\Pi, a=0, b=y, c=y\mright)}$ with a fresh variable $y$.
  By $d_{k}=n$,
  there exist $m$, $l\in\mathnat$ such that $s^{m}b_{k} \mathequivrel{\Gamma_{p+k}} s^{l}b'$
  for some $\mathindAddone{a'}{b'}{c'} \in \Delta_{p+k}$ and $l-m=n$.
  Since the set of formulas with $=$ in $\Gamma_{p+k+1}$ 
  includes the set of formulas with $=$ in $\Gamma_{p+k}$,
  we have $s^{m}b_{k} \mathequivrel{\Gamma_{p+k+1}} s^{l}b'$.
  By $\tau_{k+1}\equiv\tau_{k}$, we have $s^{m}b_{k+1} \mathequivrel{\Gamma_{p+k+1}} s^{l}b'$.
  Since $\Delta_{p+k}\equiv\Delta_{p+k+1}$, we have $\mathindAddone{a'}{b'}{c'} \in \Delta_{p+k+1}$.
  Thus, $d_{k+1}=l-m=n$.

  \noindent \ref{item:lemma-index_not_progress}
  The case where $\Gamma_{p+k+1}\fCenter\Delta_{p+k+1}$ is the right assumption of the rule and
  $\tau_{k}$ is not a progress point of the trace.

  Since $\tau_{k}$ is not a progress point of the trace, we have $\tau_{k+1} \equiv \tau_{k}$.
  By $d_{k}=n$,
  there exist $m$, $l\in\mathnat$ such that $s^{m}b_{k} \mathequivrel{\Gamma_{p+k}} s^{l}b'$
  for some $\mathindAddone{a'}{b'}{c'} \in \Delta_{p+k}$ and $l-m=n$.
  Since the set of formulas with $=$ in $\Gamma_{p+k}$ 
  includes the set of formulas with $=$ in $\Gamma_{p+k+1}$,
  we have $s^{m}b_{k} \mathequivrel{\Gamma_{p+k+1}} s^{l}b'$.
  By $\tau_{k+1}\equiv\tau_{k}$,
  we have $s^{m}b_{k+1} \mathequivrel{\Gamma_{p+k+1}} s^{l}b'$.
  Since $\Delta_{p+k}\equiv\Delta_{p+k+1}$ holds, we have $\mathindAddone{a'}{b'}{c'} \in \Delta_{p+k+1}$.
  Thus, $d_{k+1}=l-m=n$.

  \noindent \ref{item:lemma-index_progress}
  The case where $\Gamma_{p+k+1}\fCenter\Delta_{p+k+1}$ is the right assumption of the rule and
  $\tau_{k}$ is a progress point of the trace.

  There exists $\Pi$ such that 
  $\Gamma_{p+k}\equiv {\mleft(\Pi, \mathindAddtwo{a}{b}{c} \mright)}$ and 
  $\Gamma_{p+k+1}\equiv {\mleft(\Pi, a = sx, b = y, c = z, \mathindAddtwo{x}{sy}{z}\mright)}$
  for fresh variables $x$, $y$, $z$.
  Since $\tau_{k}$ is a progress point of the trace,
  we have $\tau_{k} \equiv \mathindAddtwo{a}{b}{c}$ and
  $\tau_{k+1} \equiv \mathindAddtwo{x}{sy}{z}$.
  Therefore, $b_{k}\equiv b$ and $b_{k+1}\equiv sy$.
  By $d_{k}=n$,  there exist $m$, $l\in\mathnat$ such that
  $s^{m}b \mathequivrel{\Gamma_{p+k}} s^{l}b'$ for some $\mathindAddone{a'}{b'}{c'} \in \Delta_{p+k}$
  and $l-m=n$.
  Since the set of formulas with $=$ in $\Gamma_{p+k+1}$
  includes the set of formulas with $=$ in $\Gamma_{p+k}$,
  we have $s^{m}b \mathequivrel{\Gamma_{p+k+1}} s^{l}b'$.
  By $b\mathequivrel{\Gamma_{p+k+1}} y$,
  we have $s^{m}y \mathequivrel{\Gamma_{p+k+1}} s^{l}b'$.
  Hence, $s^{m}b_{k+1} \mathequivrel{\Gamma_{p+k+1}} s^{l+1}b'$.
  Thus, $d_{k+1}=l+1-m=n+1$.
 \end{proof}
 
 \begin{lem}%
  \label{lemma:key_lemma}
  For an infinite index path $\mleft(\Gamma_{i}\fCenter\Delta_{i}\mright)_{i\geq 0}$
  in $\mathtreeunfolding{\mathproofcf}$,
  there exists $l\in\mathnat$ such that the following conditions hold:
  \begin{enumerate} 
   \item $\Gamma_{l}\fCenter\Delta_{l}$ is a switching point in $\mathtreeunfolding{\mathproofcf}$, and
   \item $\Gamma_{l+1}\fCenter\Delta_{l+1}$ is the right assumption of the rule with the conclusion
	 $\Gamma_{l}\fCenter\Delta_{l}$.
  \end{enumerate}
 \end{lem}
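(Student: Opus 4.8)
The plan is to feed the infinite path into the global trace condition and then track how the index of the trace formula evolves, using Lemma~\ref{lemma:index}. Since the given path is an infinite path in $\mathtreeunfolding{\mathproofcf}$, the global trace condition supplies an infinitely progressing trace $\mleft(\tau_k\mright)_{k\geq 0}$ following some tail $\mleft(\Gamma_i\fCenter\Delta_i\mright)_{i\geq p}$. By Lemma~\ref{lemma:annoying} every sequent on the path is an index sequent, so each $\tau_k \equiv \mathindAddtwo{a_k}{b_k}{c_k}$ has a defined index $d_k$. The first observation I would record is that a progress point is exactly a position $k'$ at which $\Gamma_{p+k'}\fCenter\Delta_{p+k'}$ is the conclusion of (\rulename{Case $\mathindAddtwosy$}), the path passes to its right assumption, and $\tau_{k'}$ is the principal formula; this holds because the only case-descendant produced by (\rulename{Case $\mathindAddtwosy$}) lies in the right assumption. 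Consequently, if I can locate a progress point $k'$ with $d_{k'}=\bot$, then $\Gamma_{p+k'}\fCenter\Delta_{p+k'}$ is a switching point whose right assumption is $\Gamma_{p+k'+1}\fCenter\Delta_{p+k'+1}$, and $l=p+k'$ witnesses the lemma. So the whole task reduces to producing a progress point at which the trace formula has index $\bot$.

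Next I would split on whether the index ever becomes $\bot$. If $d_k=\bot$ for some $k$, then by Lemma~\ref{lemma:index}\ref{item:lemma-index_neg} we have $d_j=\bot$ for all $j\geq k$; since the trace is infinitely progressing there is a progress point $k'\geq k$, and $d_{k'}=\bot$ yields the desired $l$ exactly as above. The remaining case, where $d_k\neq\bot$ for every $k$, is where I expect the real work to be. Here Lemma~\ref{lemma:index} shows $d_{k+1}\in\mathsetextension{d_k,\ d_k+1}$ in every case, with $d_{k+1}=d_k+1$ precisely at progress points (item~\ref{item:lemma-index_progress}); thus $\mleft(d_k\mright)_k$ is non-decreasing and strictly increases at each of the infinitely many progress points of the trace, so it is unbounded.

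Finally I would contradict unboundedness with a finiteness argument. Because $\mathtreeunfolding{\mathproofcf}$ is the tree-unfolding of the \emph{finite} derivation tree $\mathdertreecf$, every sequent occurring on the path is a verbatim copy of $\mathof{\mathdertreecf}{\sigma'}$ for some node $\sigma'$ of $\mathdertreecf$; hence only finitely many distinct sequents, and therefore only finitely many pairs $\mleft(\Gamma_{p+k}\fCenter\Delta_{p+k},\ \tau_k\mright)$, can appear. Since by Definition~\ref{def:index} the index depends only on such a pair, $\mleft(d_k\mright)_k$ takes only finitely many values and is therefore bounded, contradicting the unboundedness just derived. This rules out the second case, so the index is $\bot$ at some stage, and the lemma follows.

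The main obstacle is this last step. One must argue carefully that the index is genuinely a function of the (sequent, formula) pair — so that repeated pairs force repeated index values — and that the tree-unfolding of a finite cyclic proof reuses only finitely many concrete sequents (not merely finitely many sequents up to renaming). Getting the monotonicity bookkeeping of Lemma~\ref{lemma:index} to line up with ``progress point $\Leftrightarrow$ strict increase'' is routine once the reduction in the first paragraph is in place, but it is the hinge on which the contradiction turns.
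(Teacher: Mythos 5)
Your proposal is correct and follows essentially the same route as the paper: extract an infinitely progressing trace from the global trace condition, track the indices $d_k$ via Lemma~\ref{lemma:index}, use the finiteness of the set of sequents in the unfolding (hence of index values) to force $d_{k'}=\bot$ at some stage, propagate $\bot$ forward by Lemma~\ref{lemma:index}~\ref{item:lemma-index_neg}, and take a later progress point as the switching point. The only cosmetic difference is that you invoke Lemma~\ref{lemma:annoying} explicitly to guarantee the indices are defined, which the paper leaves implicit.
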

 \begin{proof}
  Since $\mleft(\Gamma_{i}\fCenter\Delta_{i}\mright)_{i\geq 0}$ is an infinite path
  and $\mathtreeunfolding{\mathproofcf}$ satisfies the global trace condition,
  there exists an infinitely progressing trace following a tail of the path.
  Let $\mleft( \tau_{k} \mright)_{k \geq 0}$ be an infinitely progressing trace following 
  $\mleft(\Gamma_{i}\fCenter\Delta_{i}\mright)_{i\geq p}$.
  Let $d_{k}$ be the index of $\tau_{k}$ in $\Gamma_{p+k}\fCenter\Delta_{p+k}$. 

  We show that there exists $l \in \mathnat$ such that $d_{l}=\bot$.
  The set $\mathsetintension{d_{k}}{k\geq 0}$ is finite 
  since the set of sequents in $\mleft(\Gamma_{i}\fCenter\Delta_{i}\mright)_{i\geq 0}$ is finite and
  we have a unique index of an atomic formula with $\mathindAddtwosy$ in $\Gamma_{i}\fCenter\Delta_{i}$. 
  Since $\mleft( \tau_{k} \mright)_{k \geq 0}$ is an infinitely progressing trace 
  following $\mleft(\Gamma_{i}\fCenter\Delta_{i}\mright)_{i\geq p}$, 
  if there does not exist $k' \in \mathnat$ such that $d_{k'}=\bot$, 
  Lemma \ref{lemma:index} implies that $\mathsetintension{d_{k}}{k\geq 0}$ is infinite.
  Thus, there exists $k' \in \mathnat$ such that $d_{k'}=\bot$. 

  Since $\mleft( \tau_{k} \mright)_{k \geq 0}$ is an infinitely progressing trace following 
  $\mleft(\Gamma_{i}\fCenter\Delta_{i}\mright)_{i\geq p}$, 
  there exists a progress point $\tau_{l}$ with $l>k'$.
  By Lemma \ref{lemma:index}, $d_{l}=\bot$.
  Since $\tau_{k}$ is a progress point, 
  $\Gamma_{p+k}\fCenter\Delta_{p+k}$ is a switching point and  
  $\Gamma_{p+k+1}\fCenter\Delta_{p+k+1}$ is the right assumption of the rule.
 \end{proof}

 \begin{definition}[Rightmost path]
  For a derivation tree $\mathprooffig{D}$ and a node $\sigma$ in $\mathprooffig{D}$,
  we define the \emph{rightmost path} from the node $\sigma$ 
  as the path $\mleft( \sigma_{i} \mright)_{0\leq i< \alpha}$ satisfying the following conditions:
  \begin{enumerate}
   \item The node $\sigma_{0}$ is $\sigma$.
   \item If $\sigma_{i}$ is the conclusion of (\rulename{Case $\mathindAddtwosy$}),
	 the node $\sigma_{i+1}$ is the right assumption of the rule.
   \item If $\sigma_{i}$ is the conclusion of the rules 
	 (\rulename{Weak}), (\rulename{Subst}), (\rulenameLa), or (\rulename{$\mathindAddonesy$ R${}_2$}),
    	 the node $\sigma_{i+1}$ is the assumption of the rule.
  \end{enumerate}
 \end{definition}

 \begin{lem}%
  \label{lemma:rightmost} 
  The rightmost path from an index sequent in $\mathtreeunfolding{\mathproofcf}$ is infinite.
\end{lem}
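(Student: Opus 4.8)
The plan is to show that a rightmost path starting from an index sequent is itself an index path, so that Lemma \ref{lemma:annoying} forces every node on it to be an index sequent, and then to observe that an index sequent can never be a premise-free node. Since a rightmost path in the bud-free tree $\mathtreeunfolding{\mathproofcf}$ can only terminate by reaching a premise-free node, this yields infiniteness.

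First I would verify that the rightmost path from an index sequent satisfies both clauses of Definition \ref{def:bad-path}. The initial sequent is an index sequent by hypothesis, so clause (1) holds. Clause (2) is discharged vacuously: along a rightmost path one always passes to the \emph{right} assumption of an application of (\rulename{Case $\mathindAddtwosy$}), never to the left assumption, so its hypothesis never occurs. Hence the rightmost path from an index sequent is an index path (of finite or $\omega$ length), and Lemma \ref{lemma:annoying} shows that every sequent appearing on it is an index sequent.

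Next I would show that no index sequent is a conclusion of (\rulename{$\mathindAddonesy$ R${}_1$}). A conclusion of that rule has the shape $\Gamma \fCenter \mathindAddone{0}{b}{b}, \Delta$, so the term $b$ lies simultaneously in $\mathsetBone{\Gamma \fCenter \Delta}$ (as the second argument) and in $\mathsetC{\Gamma \fCenter \Delta}$ (as the third argument). Taking $s^{0}b \mathequivrel{\Gamma} s^{0}b$ by reflexivity gives $b \mathdeprel{\Gamma} b$, which directly contradicts clause \ref{item:def-annoying_bc} of Definition \ref{def:annoying} applied with $t = b$ and $u = b$. Therefore no index sequent can be the conclusion of (\rulename{$\mathindAddonesy$ R${}_1$}).

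Finally I would combine these observations using Remark \ref{rem:cut-freeness}: the only rules in $\mathproofcf$ are (\rulename{Weak}), (\rulename{Subst}), (\rulenameLa), (\rulename{Case $\mathindAddtwosy$}), (\rulename{$\mathindAddonesy$ R${}_1$}) and (\rulename{$\mathindAddonesy$ R${}_2$}), and among these only (\rulename{$\mathindAddonesy$ R${}_1$}) has no premise. At every other rule a rightmost successor exists (the right assumption for (\rulename{Case $\mathindAddtwosy$}), the unique assumption otherwise), and $\mathtreeunfolding{\mathproofcf}$ has no buds, so a rightmost path can terminate only at a conclusion of (\rulename{$\mathindAddonesy$ R${}_1$}). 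As no such node occurs on a path all of whose sequents are index sequents, the path never terminates and is infinite. The only real subtlety, which I expect to be the crux, is noticing that the index-path condition holds for free on a rightmost path so that Lemma \ref{lemma:annoying} becomes applicable, together with pinning down that the reflexive fact $b \mathdeprel{\Gamma} b$ is exactly what clause \ref{item:def-annoying_bc} rules out once $b$ occupies both the $\mathsetBone$-role and the $\mathsetC$-role; the remaining accounting of which rules admit a rightmost successor is routine given Remark \ref{rem:cut-freeness}.
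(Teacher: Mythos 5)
Your proposal is correct and follows essentially the same route as the paper's proof: rightmost paths are index paths because the left-assumption clause never fires, Lemma \ref{lemma:annoying} then makes every node an index sequent, and clause \ref{item:def-annoying_bc} of Definition \ref{def:annoying} excludes (\rulename{$\mathindAddonesy$ R${}_1$}) since its conclusion puts the same term $b$ in both $\mathsetBone{\cdot}$ and $\mathsetC{\cdot}$ while $b \mathdeprel{\Gamma} b$ holds by reflexivity. The paper states these steps tersely; you have merely filled in the same reasoning, including the routine appeal to Remark \ref{rem:cut-freeness} to see that (\rulename{$\mathindAddonesy$ R${}_1$}) is the only way a rightmost path could terminate.
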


 \begin{proof}
  By Definition \ref{def:bad-path},
  the rightmost path from an index sequent in $\mathtreeunfolding{\mathproofcf}$ is an index path.
  By Lemma \ref{lemma:annoying}, every sequent on the path is an index sequent.
  By Definition \ref{def:annoying}, (\rulename{$\mathindAddonesy$ R${}_1$}) does not occur in the path.  
  Thus, the path is infinite.
 \end{proof}

 \begin{rem}
  For an infinite path in $\mathtreeunfolding{\mathproofcf}$, 
  the corresponding path in $\mathdertreecf$ has a bud.
 \end{rem}

 \subsection{Proof of main theorem} \label{subsec:proof}
 We prove Theorem \ref{thm:main}.

 \begin{proof}[Proof of Theorem \ref{thm:main}]
  \noindent \ref{item:thm-main-provable}
  The derivation tree given in 
  Figure \ref{fig:counter-ex} is the \CLKID\ proof of 
  $\mathindAddtwo{x}{y}{z} \fCenter \mathindAddone{x}{y}{z}$,
  where ($\dagger$) indicates the pairing of the companion with the bud,
  $\mathdertree{D}_{1}$ is the derivation tree in Figure \ref{fig:counter-ex-D1}
  (some applying rules and some labels of rules are omitted for limited space).
  We use the underlined formulas to denote the infinitely progressing trace 
  for the tails of the infinite path. 
  
  \begin{figure}[thbp]
 \centering
 \begin{prooftree}
  \small
  \AxiomC{}
  \rulenamelabel{$\mathindAddonesy$ R${}_{1}$}
  \UnaryInfC{$\fCenter \mathindAddone{0}{y_1}{y_1}$} \doubleLine
  \UnaryInfC{$\begin{aligned} x&=0,\\ y&=y_1,\\ z&=y_1\end{aligned} 
\fCenter  \mathindAddone{x}{y}{z}$} 
  
  \AxiomC{($\dagger$) $\underline{\mathindAddtwo{x}{y}{z}} \fCenter \mathindAddone{x}{y}{z}$}
  \UnaryInfC{$\underline{\mathindAddtwo{x_1}{sy_1}{z_1}} \fCenter \mathindAddone{x_1}{sy_1}{z_1}$} 
  
  \AxiomC{$\mathdertree{D}_{1}$}\noLine
  \UnaryInfC{$\mathindAddone{x_1}{sy_1}{z_1} \fCenter \mathindAddone{sx_1}{y_1}{z_1}$} 

  \rulenamelabel{Cut} 
  \BinaryInfC{$\underline{\mathindAddtwo{x_1}{sy_1}{z_1}} \fCenter \mathindAddone{sx_1}{y_1}{z_1}$} \doubleLine
  \UnaryInfC{$\begin{aligned}x&=sx_1,\\ y&=y_1,\\ z&=z_1,\end{aligned}\underline{\mathindAddtwo{x_1}{sy_1}{z_1}} \fCenter \mathindAddone{x}{y}{z}$} 

  \rulenamelabel{Case $\mathindAddtwosy$} 
  \BinaryInfC{($\dagger$) $\underline{\mathindAddtwo{x}{y}{z}} \fCenter \mathindAddone{x}{y}{z}$}
 \end{prooftree}
 \caption{The \CLKID\ proof of $\mathindAddtwo{x}{y}{z} \fCenter \mathindAddone{x}{y}{z}$} 
 \label{fig:counter-ex}
\end{figure}

  \noindent \ref{item:thm-main-not-cut-free-provable}
  We show that there exists a sequence $\mleft(c_{i}\mright)_{i\in\mathnat}$ 
  of switching points in $\mathdertreecf$ which satisfies the following conditions:
  \begin{enumerate}
   \item The height of $c_{i}$ is greater than the height of $c_{i-1}$ in $\mathdertreecf$ for $i>0$.
	 \label{item:ci_height}
   \item For any node $\sigma$ on the path from the root to $c_{i}$ in $\mathdertreecf$ excluding $c_{i}$,
	 $\sigma$ is a switching point if and only if
	 the child of $\sigma$ on the path is
	 the left assumption of the rule (\rulename{Case $\mathindAddtwosy$}).%
	 \label{item:ci_is_only_switching}
  \end{enumerate}

  Now, we construct $\mleft(c_{i}\mright)_{i\in\mathnat}$
  and show \ref{item:ci_height} and \ref{item:ci_is_only_switching} by induction on $i$.

  We consider the case $i=0$.

  The rightmost path in $\mathtreeunfolding{\mathproofcf}$ from the root is an infinite index path 
  since $\mathindAddtwo{x}{y}{z} \fCenter \mathindAddone{x}{y}{z}$ is an index sequent and 
  there exists no node which is the left assumption of (\rulename{Case $\mathindAddtwosy$}) on the path.
  By Lemma \ref{lemma:key_lemma}, there exists a switching point on the path.
  Hence, there exists a switching point on the rightmost path from the root in $\mathdertreecf$.
  Let $c_{0}$ be the switching point of the smallest height among such switching points.
  \ref{item:ci_height} and \ref{item:ci_is_only_switching} follow immediately for $c_{0}$.

  We consider the case $i>0$.

  Let $a$ be the left assumption of the rule with the conclusion $c_{i-1}$.
  Because of \ref{item:ci_is_only_switching},
  the path from the root to $c_{i-1}$ is also an index path.
  Since $c_{i-1}$ is a switching point, the path from the root to $a$ is also an index path.
  By Lemma \ref{lemma:annoying}, $a$ is an index sequent.
  By Lemma \ref{lemma:rightmost},
  the rightmost path from $a$ in $\mathtreeunfolding{\mathproofcf}$ is infinite.
  Therefore, there is a bud on the rightmost path in $\mathdertreecf$ from $a$. Let $b$ be the bud.
  
  Let $\pi_{1}$ be the path from the root to $b$ in $\mathdertreecf$ and
  $\pi_{2}$ be the path from $\mathof{\mathcompanioncf}{b}$ to $b$ in $\mathdertreecf$.
  We define the path $\pi$ in $\mathtreeunfolding{\mathproofcf}$ as $\pi_{1}{\pi_{2}}^{\omega}$.
  Let $\mleft(\sigma_{i}\mright)_{0\leq i}$ be $\pi$.
  Because of \ref{item:ci_is_only_switching}, $\pi$ is an index path.
  By Lemma \ref{lemma:key_lemma}, 
  there is a switching point $\sigma_{l}$ and $\sigma_{l+1}$ is the right assumption of the rule.
  Hence, there is a switching point on $\pi_{1}\pi_{2}$ in $\mathdertreecf$ such that
  its child on the path is the right assumption of the rule.
  Define $c_{i}$ as the switching point of the smallest height among such switching points.

  We show $c_{i}$ satisfies the conditions \ref{item:ci_height} and \ref{item:ci_is_only_switching}.

  \ref{item:ci_height}
  By the definition of $c_{i}$, $c_{i}$ is on the path from the root to $b$.
  By the condition \ref{item:ci_is_only_switching}, $c_{i}$ is not on the path from the root to $c_{i-1}$.
  Hence, the height of $c_{i}$ is greater than that of $c_{i-1}$.

  \ref{item:ci_is_only_switching}
  Let $\sigma$ be a node on the path from the root to $c_{i}$ excluding $c_{i}$.
  We can assume $\sigma$ is on the path from $c_{i-1}$ to $c_{i}$ excluding $c_{i}$
  by the induction hypothesis.

  The ``only if'' part:
  Assume that $\sigma$ is a switching point.
  By the definition of $c_{i}$, we see that $\sigma$ is $c_{i-1}$.
  The child of $c_{i-1}$ on the path from the root to $c_{i}$ is $a$,
  which is the left assumption of the rule.

  The ``if'' part:
  Assume that the child of $\sigma$ on the path is the left assumption of the rule.
  Since there is not the left assumption of a rule on the path from $a$ to $c_{i}$,
  we see that $\sigma$ is $c_{i-1}$.
  Thus, $\sigma$ is a switching point.
  
  We complete the construction and the proof of the properties.

  Because of \ref{item:ci_height},  $c_{0}, c_{1}, \ldots$ are all distinct in $\mathdertreecf$.
  Thus, $\mathsetintension{c_{i}}{i\in\mathnat}$ is infinite.
  This is a contradiction since the set of nodes in $\mathdertreecf$ is finite.
 \end{proof}


  Now, we discuss why we cannot apply the proof technique of a counterexample
  to cut-elimination in cyclic proofs for separation logic given in \cite{Kimura2020}.
  In order to show their counterexample is not provable without a cut rule,
  assuming that there exists a cut-free proof of the counterexample,
  they prove that the rightmost path from the root has no infinitely progressing trace 
  following a tail of the path if the path has a companion.
  On the other hand, the rightmost path from the root in a cut-free \CLKID\ pre-proof of 
  $\mathindAddtwo{x}{y}{z} \fCenter \mathindAddone{x}{y}{z}$ 
  might have a companion and an infinitely progressing trace following a tail of the path
  that might use contraction and weakening on antecedents. 
  For example, the rightmost path from the root in Figure \ref{fig:compare}
  has both a companion and an infinitely progressing trace following a tail of the path.
  Thus, we cannot use their proof technique.

  \begin{figure}[tbhp]
 \begin{prooftree}
  \small
  \AxiomC{}
  \DeduceC{} 
  
  \AxiomC{}
  \DeduceC{} 
    
  \AxiomC{($\heartsuit$) $\underline{\mathindAddtwo{x_1}{sy_1}{z_1}}, \mathindAddtwo{x}{y}{z} \fCenter \mathindAddone{x}{y}{z}$}
  \UnaryInfC{$\underline{\mathindAddtwo{x_{2}}{sy_{2}}{z_{2}}}, \mathindAddtwo{x}{y}{z}  \fCenter \mathindAddone{x}{y}{z}$}
  \UnaryInfC{$x_{1}=sx_{2}, sy_{1}=y_{2}, z_{1}=z_{2}, \underline{\mathindAddtwo{x_{2}}{sy_{2}}{z_{2}}}, \mathindAddtwo{x}{y}{z}  \fCenter \mathindAddone{x}{y}{z}$} 
  
  \rulenamelabel{Case $\mathindAddtwosy$} 
  \BinaryInfC{($\heartsuit$) $\underline{\mathindAddtwo{x_1}{sy_1}{z_1}}, \mathindAddtwo{x}{y}{z} \fCenter \mathindAddone{x}{y}{z}$}
  \UnaryInfC{$x=sx_1, y=y_1, z=z_1, \mathindAddtwo{x_1}{sy_1}{z_1}, \mathindAddtwo{x}{y}{z}  \fCenter \mathindAddone{x}{y}{z}$} 
  
  \rulenamelabel{Case $\mathindAddtwosy$} 
  \BinaryInfC{$\mathindAddtwo{x}{y}{z} \fCenter \mathindAddone{x}{y}{z}$}
 \end{prooftree}
 \caption{ An example to which the technique in \cite{Kimura2020} cannot be applied} 
 \label{fig:compare}
\end{figure}
 \section{Conclusion}
 \label{sec:conclusion}
 We have shown that
 $\mathindAddtwo{x}{y}{z} \fCenter \mathindAddone{x}{y}{z}$ is not cut-free provable in \CLKID\
 but provable in \CLKID.
 Consequently, we have shown that (\rulename{Cut}) cannot be eliminated in \CLKID.

 Future work would be
 (1) to restrict principal formulas of (\rulename{Cut}) with the same provability,
 (2) to study whether
 (\rulename{Cut}) can be eliminated in \CLKID\ by restricting the language such as unary predicates,
 (3) to study a subsystem of \LKIDOm\ including \CLKID\  which satisfies the cut-elimination property.
 \section*{Acknowledgments}
 We are grateful to Dr.\ Daisuke Kimura for giving us helpful suggestions from the early stage of this work.
 We would also like to thank Dr.\ Taro Sekiyama for giving us valuable comments.

\bibliography{myref}

\end{document}